 \def\set#1{{\{ #1\}}}
\markboth{\today}{\today}
 \def\c#1{{\mathcal #1}}
\def\ws{{winning strategy}}
\def\O{\operatorname{Forb}}
\newcommand{\mc}{\mathcal}
\newcommand{\comp}{\mathbin{;}}
\newcommand\dom{\operatorname{dom}}
\newcommand\ran{\operatorname{ran}}
\newcommand{\Rel}{{\sf Rel}}  
\newcommand{\Ltrel}{{\sf Lt} } 
\newcommand{\Trel}{{\sf T} }
\def\join{\cup}
\newcommand{\bea}{\begin{eqnarray*}}
\newcommand{\eea}{\end{eqnarray*}}
\newcommand{\ben}{\begin{enumerate}}
\newcommand{\een}{\end{enumerate}}
\newcommand{\bi}{\begin{itemize}}
\newcommand{\ei}{\end{itemize}}
\newtheorem{thm}{THEOREM}[section]
\newtheorem{lemma}[thm]{LEMMA}
\newtheorem{cor}[thm]{COROLLARY}
\newtheorem{pro}[thm]{PROPOSITION}
\newenvironment{proof}{
PROOF:
\begin{quotation}}{
$\Box$ \end{quotation}}
\def\Los{\L{o}\'{s}}
\def\Zar  {{Zarecki\u{\i} }}
\def\restr #1{{\restriction_{#1}}}
\newcommand{\ccup}{\mathbin{\sqcup\kern -6pt\sqcup}}
\markboth{\today}{\today}
\title{The algebra of non-deterministic programs: demonic operations, orders and axioms}
\author{Robin Hirsch, Szabolcs Mikul\'as and Tim Stokes}
\begin{document}
\maketitle
\begin{abstract}
Demonic composition, demonic refinement and demonic union are alternatives to the usual ``angelic" composition, angelic refinement (inclusion) and angelic (usual) union defined on binary relations.  We first motivate both the angelic and demonic via an analysis of the behaviour of non-deterministic programs, with the angelic associated with partial correctness and demonic with total correctness, both cases emerging from a richer algebraic model of non-deterministic programs incorporating both aspects.  
\Zar\ has shown that the isomorphism class of algebras of binary relations 
under angelic composition and inclusion is finitely axiomatised as the class of ordered semigroups.
The proof can be used to establish that the same axiomatisation applies to binary relations under demonic composition and refinement, and a further modification of the proof can be used to incorporate a zero element representing the empty relation in the angelic case and the full relation in the demonic case. For the signature of angelic composition and union, it is known that no finite axiomatisation exists, and we show the analogous result for demonic composition and demonic union by showing that the same axiomatisation holds for both.  We show that the isomorphism class of algebras of binary relations 
with the ``mixed" signature of demonic composition and angelic inclusion has no finite axiomatisation.  As a contrast, we show that the isomorphism class of partial algebras of binary relations with the partial operation of constellation product and inclusion (also a ``mixed" signature) is finitely axiomatisable.
\end{abstract}
{\bf Keywords:} Demonic composition, demonic refinement, binary relation, non-deterministic program, total correctness, axiomatisation.\\

Throughout, let $X$ be a non-empty set.  Let $\Rel(X)$ denote the set of all binary relations on $X$, with $\Ltrel(X)$ and $\Trel(X)$ the sets of {\em left total} and {\em total} binary relations on $X$ respectively.  

\section{Introduction}

The \emph{Cayley representation} of a group $(G,\cdot, {}^{-1})$ represents an element $g\in G$ 
as the permutation $h\mapsto hg$ and shows that every group is isomorphic to 
a group of permutations under composition and inverse of permutations.  
The same representation works for monoids and represents each element as a total function.  
For a semigroup $(A, \circ)$, in order to ensure injectivity of the representation, 
we can use the base set $A\cup\set{e}$ (where $e$ is an adjoined identity element) 
and represent each element $a\in A$ as the total function $\set{(x, x\circ a): x\in A}\cup\set{(e, a)}$.

\Zar \cite{zarecki} showed how to extend this type of representation 
to ordered semigroups, showing that each is isomorphic to a semigroup of binary relations under composition and inclusion.  In other words, he showed that the set of axioms of ordered semigroups 
(associativity, partial order,  left and right monotonicity) 
exactly defines the class of structures isomorphic to 
concrete structures of binary relations on some base set $X$ under inclusion~$\subseteq$ and 
\emph{relational composition}~$\comp$ (sometimes called \emph{angelic composition}):
\begin{equation}
s\comp t =\{(x,y)\in X\times X: \exists z((x,z)\in s\land (z,y)\in t)\}.
\end{equation}
 Since the set of all binary relations on a non-empty set is an ordered semigroup under composition and inclusion, it immediately follows that the isomorphism class of semigroups of binary relations under composition and equipped with inclusion as a partial order is the class of ordered semigroups.  

One can also consider signatures containing the operation of union $\join$, which is join with respect to inclusion $\subseteq$.  
Any semigroup of binary relations under composition also closed under union is a semiring in which addition $\join$ is a semilattice operation, and in any such abstract semiring $(S,\cdot,+)$, $(S,\cdot,\leq)$ is an ordered semigroup, where $\leq$ is the induced partial order given by $s\leq t$ if and only if $t=s+t$.  However, in \cite{andreka}, it is shown that semirings of binary relations under composition and union have no finite axiomatisation.

There is interest in obtaining similar results when standard ``angelic" operations and orders are replaced by their ``demonic" variants.  These can be motivated from the theory of non-deterministic programs, and we do this in some detail shortly.  
\emph{Demonic composition}~$*$ is defined by
\begin{equation}\label{def:demonic}
s*t=\{(x,y):\exists z((x, z)\in s\wedge (z, y)\in t) \wedge
\forall w ((x, w)\in s\rightarrow\exists v (w, v)\in t)\}.
\end{equation}
where $s, t\in \Rel(X)$ with $X$ a non-empty set.  Equivalently, 
\begin{equation}\label{eq:fwd}
s*t=(s;t)\cap \set{(x, y): s(x)\subseteq dom(t)}  \end{equation}
where $dom(t)$ denotes the domain of the relation $t$ and $s(x)$ denotes the forward image $\set{y\in X:(x, y)\in s}$ of $x\in X$ under $s$.  


Often associated with demonic composition is the so-called \emph{demonic refinement order}: for binary relations $s,t$,
\begin{equation}\label{eq:refine}
s\sqsubseteq t \iff ( \dom(t)\subseteq \dom(s)\;\wedge\;   s\restr{dom(t)} \subseteq t)
\end{equation} 
For links with computer science see \cite{DAD}, as well as \cite{vonW1} and~\cite{vonW2}, among many others, but see also the next section. 
Demonic refinement is a partial order on binary relations, and curiously both left and right monotonicity are valid for demonic composition over demonic refinement (for angelic composition over demonic refinement, neither monotonicity law holds).  Hence, the set of all binary relations on a  given set is 
an ordered semigroup under demonic composition and refinement; the key properties are noted in \cite[Section 3.1]{DAD}.  

Finally, there is a demonic analog of union, namely \emph{demonic join} $\ccup$, defined by
\begin{equation}
s\ccup t=\{(x,y) : (x,y)\in s\join t \wedge x\in dom(s)\cap dom(t)\}.
\end{equation}
Again, it is noted in \cite{DAD} that $(\Rel(X),*,\ccup)$ is a semiring with $\ccup$ a semilattice operation, and indeed $\ccup$ is join with respect to $\sqsubseteq$ as also noted there. 

A second alternative notion of composition we consider here
can be viewed as an extreme version of demonic composition, the partial operation we call \emph{constellation product}~$\cdot$.
For relations $s$ and $t$, $s\cdot t$ is defined if and only if the range of $s$ is included in the domain of $t$, and then it agrees with the usual relational (and indeed demonic) composition:
\begin{equation}\label{eq:constel}
s\cdot t=\begin{cases}
s\comp t &\text{if $\ran(s)\subseteq \dom(t)$}\\
\text{undefined} &\text{otherwise}.
\end{cases}
\end{equation}
It can be viewed as an extreme form of demonic product, since in order that $(x,y)\in s\comp t$ also be in $s\cdot t$, it is not just enough that $s(x)\subseteq dom(t)$ (as for $s*t$) but the whole range of $s$ must be contained in $dom(t)$, with the product $s\cdot t$ undefined otherwise.  We call this partial operation constellation product because, for partial functions at least, it is the partial operation which motivates the definition of constellations as in~\cite{inductive}.  Constellations in general have a domain operation as well as the partial binary operation, and may be viewed as ``one-sided categories".  So-called ``inductive" constellations were used to obtain partial counterparts for left restriction semigroups in \cite{inductive}, and general constellations are considered in more detail in~\cite{constell}.

Finally, some natural nullary operations on $\Rel(X)$ are given by the empty relation $\emptyset$, the diagonal relation $1'$ and the full relation $\nabla$.  (We use $\nabla$ rather than $1$ since the latter is used in what follows to refer to a general identity element in a semigroup.)

Let $X_0=X\cup \{0\}$ where $0\not\in X$, and define
$$\Ltrel_0(X)=\{\rho\in \Ltrel(X_0)\mid \mbox{ for all } s\in X_0, (0,s)\in \rho\mbox{ if and only if }s=0\},$$
a subsemigroup under composition of $\Rel(X_0)$.

In Section \ref{motive}, we introduce the relational models of computer programs used throughout, namely elements of $\Ltrel_0(X)$.  Here $X$ denotes the set of all program states for a computation, and $0$ will denote a `fail' state, or strictly a non-halting loop state.  We show how all the usual angelic and demonic operations and orders arise in terms of notions of partial and total correctness of programs.  Some of this is familiar in the computer science literature (see \cite{back} for example), but we believe consolidation of this material has value in its own right as well as to the current work in terms of motivation.  The material highlights the fact that the angelic operations and orders arise in connection with partial correctness of programs, and the demonic versions with total correctness.

For the remainder of the paper following Section \ref{motive}, we are concerned with axiomatisations.  In Section \ref{orderpos}, we are initially interested in axiomatisations of ``natural" signatures combining only angelic or only demonic operations and orders, as introduced in Section \ref{motive}.  We also consider the relational structure used to model programs in full generality, $\Ltrel_0(X)$ equipped with its relevant operations and ordering, since the same methods can be applied to it.

We mentioned that the isomorphism class of algebras of binary relations 
under angelic composition and inclusion is finitely axiomatised as the class of ordered semigroups.
In Subsection \ref{orderonly}, we show that the same axiomatisation applies to binary relations under demonic composition and refinement, as well as to $\Ltrel_0(X)$ under its natural ordered semigroup operations.  In Subsection \ref{orderzero}, a modification of the proof is used to incorporate a zero element representing the empty relation in the angelic case, the full relation in the demonic and total cases, and the constant function mapping everything to $0$ in the case of $\Ltrel_0(X)$.   
In Subsection \ref{union}, we consider various forms of union: for the purely angelic signature of angelic composition and union,  no finite axiomatisation exists \cite{andreka}, and we show that the signature involving demonic composition and demonic union satisfies precisely the same laws.  

Finally, in Section \ref{mixed}, we consider some arguably less natural ``mixed" angelic/demonic signatures.  We first show in Subsection \ref{neg} that the isomorphism class of algebras of binary relations  
with the signature of demonic composition and angelic inclusion has no finite axiomatisation, a result we must prove from first principles rather than relying on a known similar result.  As a contrast, we show in Subsection \ref{pos} that the isomorphism class of partial algebras of binary relations with the partial operation of constellation product and inclusion (also a ``mixed" signature) is finitely axiomatisable; Zarecki\u{\i}'s original result may be viewed as the special case of this positive result in which all the relations are assumed to be left-total, and so the partial product is everywhere-defined.  A further corollary is a finite axiomatisation for just $\{\cdot\}$ alone, applicable to both relations and partial functions.

\section{Motivation and basic facts}  \label{motive}

We next describe how the angelic and demonic operations and relations, including the constellation product, arise in terms of non-deterministic computer programs.

\subsection{How to model non-deterministic, possibly non-halting programs}

We interpret programs as giving rise to actions on a state space $X$ which, given any $x\in X$, can move and halt in another state, in a non-deterministic fashion, though it may also have non-terminating runs starting from $x$.  The idea here is that for $x\in X$, the program will always do something, but it could fail to halt in some cases but halt in others and with possibly distinct results. This scenario cannot be directly modelled using binary relations on $X$ since there is no way to tell whether a program with associated relation $\rho$ might not halt when starting at state $x\in dom(\rho)$, that is, a state $x$ from which termination is at least possible.

We therefore view every program $s$ acting on $X$ as giving rise to some relation $\rho_s\in \Ltrel_0(X)$, as follows.
\bi
\item for all $x\in X$, $(x,0)\in \rho_s$ if and only if it is possible that $s$ does not halt when starting at state $x$;
\item for all $x,y\in X$, $(x,y)\in \rho_s$ if and only if, when starting at state $x$, \/$s$ can halt in state $y$.
\ei
This is in essence the approach taken by Back in \cite{back}. Note that the mapping $s\mapsto \rho_s$ is not assumed to be either injective or surjective.  

For programs $s,t$ acting on the state space $X$, let $st$ denote ``$s$ followed by $t$".  We also introduce the notion of non-deterministic choice of programs: $s|t$.  Two other important programs are ``skip" and ``abort" (or more accurately, ``never halt").

\begin{pro}
For all programs $s,t$, $\rho_{st}=\rho_s \comp \rho_t$, $\rho_{s|t}=\rho_s\cup \rho_t$, $\rho_{abort}=\{(x,0)\mid x\in X_0\}$ and $\rho_{skip}=1'$, the diagonal relation on $X_0$.
\end{pro}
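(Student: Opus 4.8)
The plan is to verify each of the four claimed equalities directly from the definitions of the program model and the relational operations. Each is a set-theoretic identity between binary relations on $X_0$, so the natural strategy is to check the two inclusions (or rather, the biconditional membership condition) for each by unwinding the operational meaning of the program constructs $st$, $s|t$, $abort$ and $skip$ against the definitions of $\rho_s$ given by the two bullet points. The key conceptual point throughout is the dual role of the fail state $0$: a pair $(x,0)$ records possible non-halting, while ordinary pairs $(x,y)$ with $y\in X$ record possible halting.

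First I would treat $\rho_{skip}=1'$ and $\rho_{abort}=\{(x,0)\mid x\in X_0\}$, since these are the base cases. For $skip$, the program halts immediately at its starting state and never fails to halt, so $(x,y)\in\rho_{skip}$ iff $y=x$, which is exactly the diagonal $1'$ on $X_0$ (noting that the convention on $\Ltrel_0(X)$ forces $(0,0)\in\rho_{skip}$ and no other pair from $0$). For $abort$, the program never halts from any state, so the only pairs present are the fail pairs $(x,0)$ for every $x$, giving the stated relation; again one checks consistency with the $\Ltrel_0(X)$ convention at $0$.

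Next I would handle the non-deterministic choice $\rho_{s|t}=\rho_s\cup\rho_t$. Here $s|t$ can do whatever $s$ can do or whatever $t$ can do, so a run of $s|t$ from $x$ halts at $y$ iff some run of $s$ or some run of $t$ from $x$ halts at $y$, and it can fail to halt iff $s$ can or $t$ can. Since both the halting and the failing cases are captured by ordinary pairs and by fail-pairs in $\rho_s$ and $\rho_t$ respectively, membership in $\rho_{s|t}$ for any pair $(x,u)\in X_0\times X_0$ coincides with membership in $\rho_s\cup\rho_t$, establishing the identity.

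The main obstacle, and the step I would treat most carefully, is sequential composition $\rho_{st}=\rho_s\comp\rho_t$. The subtlety is correctly propagating the fail state through composition. For an ordinary pair $(x,y)$ with $x,y\in X$, the run $st$ halts at $y$ iff $s$ halts in some intermediate state $z\in X$ and then $t$ halts at $y$ from $z$, which is precisely $\exists z\,((x,z)\in\rho_s\wedge(z,y)\in\rho_t)$. For a fail pair $(x,0)$, the program $st$ can fail to halt in two ways: either $s$ itself fails to halt at $x$ (so $(x,0)\in\rho_s$, and then $(0,0)\in\rho_t$ by the $\Ltrel_0(X)$ convention, giving $(x,0)\in\rho_s\comp\rho_t$), or $s$ halts at some $z\in X$ and then $t$ fails to halt from $z$ (so $(x,z)\in\rho_s$ and $(z,0)\in\rho_t$). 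Both routes are exactly the witnesses that place $(x,0)$ into the angelic composite $\rho_s\comp\rho_t$; conversely any witness $z$ for $(x,0)\in\rho_s\comp\rho_t$ falls into one of these two operational cases. I would present this case analysis explicitly, since it is where the $\Ltrel_0(X)$ condition that $(0,s)\in\rho$ iff $s=0$ does the essential work of ensuring the fail state is absorbing and that composition behaves as intended.
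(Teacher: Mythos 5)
Your proof is correct and matches the paper's intent: the paper states this proposition without any proof, treating it as immediate from the operational modelling, and your direct case-by-case verification is exactly that implicit reasoning made explicit --- in particular the decomposition of a non-terminating run of $st$ into either a non-terminating run of $s$ or a terminating run of $s$ followed by a non-terminating run of $t$, with the $\Ltrel_0(X)$ convention $(0,u)\in\rho \iff u=0$ making the fail state absorbing so that $\comp$ captures both routes. Nothing is missing.
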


Two different programs can give rise to the same member of $\Ltrel_0(X)$, but from now on we blur the distinction between programs and actions, and view programs as elements of $\Ltrel_0(X)$, with the operations and orders of primary interest on it being $\comp$, $\cup$, $\subseteq$.  We also use the notation ${\bf 0}$ for $\rho_{abort}$ above.

\subsection{Some important mappings}

Denote by $\rho^a,\rho^d\in \Rel(X)$ the relations obtained from $\rho\in \Ltrel_0(X)$ as follows: 
\bi
\item $\rho^a=\rho\cap(X\times X),$ the {\em angelic restriction} of $\rho$, and
\item $\rho^d=\{(x,y)\in \rho\mid x\in X, (x,z)\in \rho\Rightarrow z\neq 0\}),$ the {\em demonic restriction} of $\rho$.
\ei

Clearly $\rho^d\subseteq \rho^a\subsetneq \rho$ for all $\rho\in  \Ltrel_0(X)$.  Also, every binary relation $R$ on $X$ is an angelic restriction of an element of $\Ltrel_0(X)$, and indeed  is also a demonic restriction of a (possibly different)  element of $\Ltrel_0(X)$, (e.g.$R$ is the angelic restriction of  $R\cup{\bf 0}$ and $R$ is both the angelic restriction and the demonic restriction of  $R\cup\set{(x, 0): x\in X_0\setminus dom(R)}$).  The angelic restriction $\rho^a_s$ is perhaps the usual way to view a program $s$ as a relation, but it omits important information present in $\rho_s$, namely those $x\in dom(\rho_s)$ for which $s$ might not halt.  Still, it accurately describes the states that can possibly be reached from a given one using $s$.  The demonic restriction of $\rho^d_s$ excludes from the domain of $\rho_s$ any $x\in X$ from which non-termination is possible. Note that $\rho_s$ can be recovered from knowledge of both $\rho^a_s$ and $\rho^d_s$.

\begin{pro}  \label{homom}
For $\rho,\tau\in \Ltrel_0(X)$,
\bi
\item $\rho=\tau$ if and only if $\rho^a=\tau^a$ and $\rho^d=\tau^d$,
\item $(\rho\comp \tau)^a=\rho^a\comp\tau^a$ and $(\rho\comp \tau)^d=\rho^d*\tau^d$,
\item $(\rho\cup \tau)^a=\rho^a\cup\tau^a$ and $(\rho\cup \tau)^d=\rho^d\ccup\tau^d$,
\item $1'^a=1'^d=1'_X$ (the diagonal relation on $X$) and ${\bf 0}^a={\bf 0}^d=\emptyset$.  
\ei
\end{pro}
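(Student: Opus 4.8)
The plan is to establish the four bullets in turn, relying throughout on two structural features of $\Ltrel_0(X)$: every $\rho$ is left-total, and $0$ is a ``sink'', in the sense that $(0,s)\in\rho$ forces $s=0$. First I would record some bookkeeping facts about the two restrictions. Since $(0,y)\in\rho$ only when $y=0$, both $\rho^a$ and $\rho^d$ are contained in $X\times X$; moreover, for $x\in X$ the clause defining $\rho^d$ says that $x\in\dom(\rho^d)$ if and only if $(x,0)\notin\rho$, and in that case the forward images agree: $\rho^d(x)=\rho^a(x)=\rho(x)$. Equivalently, for $x\in X$ we have $(x,0)\in\rho \iff x\notin\dom(\rho^d)$. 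These observations are the engine behind all four parts.

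For the first bullet the forward implication is immediate, so I would prove the converse by exhibiting an explicit recovery of $\rho$ from $\rho^a$ and $\rho^d$. Using left-totality and the sink property, every edge of $\rho$ is either $(0,0)$, an edge inside $X\times X$ (captured by $\rho^a$), or an edge $(x,0)$ with $x\in X$; and by the engine facts the last type occurs precisely when $x\notin\dom(\rho^d)$. Hence $\rho=\rho^a\cup\{(0,0)\}\cup\{(x,0):x\in X\setminus\dom(\rho^d)\}$, a set determined entirely by $\rho^a$ and $\rho^d$, which gives the claim.

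The third and fourth bullets are comparatively routine. For the angelic half of each, intersecting with $X\times X$ commutes with union, and direct evaluation yields $1'^a=1'^d=1'_X$ and ${\bf 0}^a={\bf 0}^d=\emptyset$. For $(\rho\cup\tau)^d=\rho^d\ccup\tau^d$, the point is that for $x\in X$ the non-failure condition $(x,0)\notin\rho\cup\tau$ is exactly $x\in\dom(\rho^d)\cap\dom(\tau^d)$; since on such $x$ both the $\rho$- and $\tau$-images lie in $X$, forcing $y\in X$, this matches the defining clause of demonic join precisely.

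The main work, and the step I expect to be the obstacle, is the demonic half of the second bullet, $(\rho\comp\tau)^d=\rho^d*\tau^d$. Here I would carefully analyse when $(x,0)\in\rho\comp\tau$: a failing run from $x$ arises either because $x$ itself fails under $\rho$ (the intermediate state is $0$), or because $x$ reaches some halting state $w\in\rho(x)\subseteq X$ that then fails under $\tau$. Thus $(x,0)\notin\rho\comp\tau$ is equivalent to $x\in\dom(\rho^d)$ together with $\rho(x)\subseteq\dom(\tau^d)$. This is precisely the universally quantified ``totality'' clause in the definition of demonic composition~\eqref{def:demonic}, while the existential clause is supplied by an ordinary composition witness; checking both inclusions then reduces to matching these domain conditions against the formula for $*$. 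The care needed in tracking the role of $0$ --- in particular ensuring that the witness $z$ is forced into $X$ --- is what makes this the delicate part.
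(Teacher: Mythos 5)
Your proof is correct. Note that the paper offers no proof of Proposition \ref{homom} at all --- it is asserted as a routine consequence of the definitions --- so there is no argument to compare yours against; what you give is the natural verification the authors leave to the reader. Your two ``engine'' facts are exactly the right tools: that $\rho^a,\rho^d\subseteq X\times X$, and that for $x\in X$ one has $x\in\dom(\rho^d)$ iff $(x,0)\notin\rho$, in which case $\rho(x)=\rho^a(x)=\rho^d(x)$ (the forward direction of this equivalence is where left-totality of $\rho$ is used, which is worth saying explicitly). Your analysis of the demonic half of the second bullet is also sound: $(x,0)\notin\rho\comp\tau$ iff $x\in\dom(\rho^d)$ and $\rho(x)\subseteq\dom(\tau^d)$, which is precisely the universal clause of \eqref{def:demonic} for $\rho^d*\tau^d$, while the existential clause is supplied by a composition witness that the sink property forces into $X$. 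One small omission to repair when writing this up: you never explicitly check the angelic half of the second bullet, $(\rho\comp\tau)^a=\rho^a\comp\tau^a$. It is not completely automatic --- the inclusion $(\rho\comp\tau)^a\subseteq\rho^a\comp\tau^a$ needs the observation that any intermediate witness $z$ with $(x,z)\in\rho$, $(z,y)\in\tau$ and $y\in X$ must itself lie in $X$, since $(0,y)\in\tau$ would force $y=0$ --- but this is exactly the ``witness forced into $X$'' point you articulate for the demonic half, so the needed idea is already in your proposal and only the sentence applying it to the angelic case is missing.
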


It follows that $(\Rel(X),\comp)$ and $(\Rel(X),\comp,\cup)$, as well as $(\Rel(X),*)$ and $(\Rel(X),*,\ccup)$, are homomorphic images of $(\Ltrel_0(X),\comp)$ and $(\Ltrel_0(X),\comp,\cup)$ respectively (under $\rho\mapsto \rho^a$ or $\rho\mapsto \rho^d$ respectively), and this persists when we add one or both of the related nullary operations as in the fourth part above.  Moreover the first part implies that $\Ltrel_0(X)$ under any signature $S$ contained in $\set{\comp,\cup,1',{\bf 0}}$ is a subdirect product of copies of $\Rel(X)$ equipped with the corresponding angelic and demonic signatures respectively.

But there are other important mappings going in the other direction.  Let $X$ be a non-empty set and denote by $\nabla$ the full relation on $X$, and for each $\rho\in \Rel(X)$, define
$$\nabla_{\rho}=\{(x,y)\mid x\in X\backslash \dom(\rho), y\in X\}\in \Rel(X),$$ 
the full relation on $X$ restricted in domain to the complement in $X$ of the domain of $\rho$.  Then define the mappings 
\begin{align*}
\psi_1:& \Rel(X)\rightarrow \Ltrel_0(X),\ &\rho&\mapsto \rho\cup {\bf 0},\\
\psi_2:& \Rel(X)\rightarrow \Trel(X_0),\ &\rho&\mapsto \rho\cup \nabla_{\rho}\cup\{(0,y)\mid y\in X_0 \}   \\
\psi_3: &\Rel(X)\rightarrow \Ltrel_0(X),\ &\rho&\mapsto \rho\cup \nabla_{\rho}\cup \{(0,0)\}.
\end{align*}
For each $\rho\in \Rel(X)$, 
$\psi_3(\rho)\subsetneq \psi_2(\rho)$.

\begin{pro}  \label{embeddings}
Each of the mappings $\psi_1,\psi_2,\psi_3$ is structure-preserving as follows.
\bi
\item The mapping $\psi_1$ is an embedding $(\Rel(X),\comp,\cup,\emptyset)\rightarrow (\Ltrel_0(X),\comp,\cup,{\bf 0})$.
\item The mapping $\psi_2$ is an embedding $(\Rel(X),*,\ccup,\emptyset)\rightarrow (\Trel(X_0),\comp,\cup,\nabla)$.
\item The mapping $\psi_3$ is an embedding $(\Rel(X),*,\ccup,1')\rightarrow (\Ltrel_0(X),\comp,\cup,1')$.
\ei
\end{pro}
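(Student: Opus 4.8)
The plan is to show, for each $i$, that $\psi_i$ is well-defined into the stated codomain, injective, constant-preserving, and operation-preserving, handling the two demonic maps together. Well-definedness and injectivity are routine. For $\psi_1$, since ${\bf 0}$ meets $X\times X$ trivially, intersecting $\psi_1(\rho)$ with $X\times X$ returns $\rho$ (injectivity), and membership in $\Ltrel_0(X)$ is clear as ${\bf 0}$ makes every point left-total while keeping $0$ mapped only to $0$. For $\psi_2,\psi_3$ the summand $\nabla_\rho$ furnishes a non-empty image on $X\setminus\dom(\rho)$, giving left-totality (and full totality on $X_0$ for $\psi_2$); here $\rho$ is recovered as the part of $\psi_i(\rho)$ supported on $\dom(\rho)$, which is identifiable as the set of $x\in X$ whose image is not the full chaotic set, yielding injectivity. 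The constants are immediate: $\psi_1(\emptyset)={\bf 0}$, $\psi_2(\emptyset)=\nabla$, and $\psi_3(1')=1'$ on $X_0$ because $\dom(1')=X$ forces $\nabla_{1'}=\emptyset$.

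I would finish $\psi_1$ directly: using left/right distributivity of $\comp$ over $\cup$, expand $(\rho\cup{\bf 0})\comp(\tau\cup{\bf 0})$ into four composites, of which three collapse via $\sigma\comp{\bf 0}\subseteq{\bf 0}$, ${\bf 0}\comp\sigma\subseteq{\bf 0}$ and ${\bf 0}\comp{\bf 0}={\bf 0}$, leaving $\rho\comp\tau\cup{\bf 0}=\psi_1(\rho\comp\tau)$; preservation of $\cup$ is trivial. The conceptual key for $\psi_2,\psi_3$ is that their targets consist of relations that are left-total on $X_0$, and on such relations demonic composition collapses to ordinary composition, since the side condition $s(x)\subseteq\dom(t)$ holds automatically once $\dom(t)=X_0$. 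Thus an angelic target can faithfully carry a demonic source, provided the undefined states are completed so as to behave as absorbing ``chaotic/failing'' states under composition.

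To verify $\psi_i(\rho* \tau)=\psi_i(\rho)\comp\psi_i(\tau)$ I would argue by cases on the source point $x$: if $x\in\dom(\rho)$ and every $\rho$-successor of $x$ lies in $\dom(\tau)$, both sides return the honest composite $(\rho\comp\tau)(x)=(\rho*\tau)(x)$; if $x\in\dom(\rho)$ but some successor escapes $\dom(\tau)$, that successor is routed into the chaotic region and chaos propagates to the whole image, matching $x\notin\dom(\rho*\tau)$; and if $x\notin\dom(\rho)$ the image is chaotic from the outset, again matching $x\notin\dom(\rho*\tau)$. For $\psi_3$, whose codomain lies in $\Ltrel_0(X)$, I would instead invoke Proposition~\ref{homom}: two elements of $\Ltrel_0(X)$ coincide iff their angelic and demonic restrictions agree; $\psi_3$ is a section of $(\cdot)^d$, and $(\cdot)^d$ carries $\comp$ to $*$, so the demonic-restriction identity is automatic and only the angelic-restriction identity remains. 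The join laws $\ccup\mapsto\cup$ follow from the same case analysis, using that demonic join discards exactly the states outside $\dom(\rho)\cap\dom(\tau)$, which are precisely the states the united completions send into the chaotic region.

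The step I expect to be hardest is the composition identity when the second argument is totally defined. If $\dom(\tau)=X_0$ there is no successor that escapes $\dom(\tau)$ to trigger chaos through $\tau$ itself, so the completion must be self-sustaining: the chaotic state must reproduce chaos under composition on its own. This forces the chaotic state to map onto all of $X_0$, routing through the always-chaotic point, and for $\psi_3$ --- where $0$ is a genuine failure sink rather than a chaotic point --- reconciling this with the $\Ltrel_0(X)$ constraint is the most delicate part. Concretely it is exactly the leftover angelic-restriction identity from the reduction above, and checking that $\nabla_\rho$ together with the prescribed behaviour at $0$ neither under- nor over-generates surviving pairs is where I would concentrate the verification.
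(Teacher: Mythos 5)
Your handling of $\psi_1$ is correct, and your handling of $\psi_2$ is essentially right, with one caveat that you have silently (and correctly) built in: for chaos to be self-sustaining, the rows of states outside $\dom(\rho)$ must be all of $X_0$, not all of $X$. With the paper's literal definition $\nabla_\rho\subseteq X\times X$ even the constant identity $\psi_2(\emptyset)=\nabla$ fails (the pairs $(x,0)$, $x\in X$, are missing), so one must read $\nabla_\rho$ as $(X\setminus\dom(\rho))\times X_0$; with that reading your ``routing through the always-chaotic point'' argument works, precisely because $\psi_2$ gives $0$ the row $X_0$.

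The genuine gap is the step you defer for $\psi_3$, and it cannot be closed: the third bullet of the proposition is false, not merely delicate. Your own reduction localises the problem correctly: granting that $\psi_3$ is a section of $(\cdot)^d$, what remains is the angelic identity $\psi_3(\rho)^a\comp\psi_3(\tau)^a=\psi_3(\rho*\tau)^a$, and this fails. Take $X=\{a,b\}$, $\rho=\emptyset$ and $\tau=\{(a,a),(b,a)\}$. Then $\rho*\tau=\emptyset$, so $\psi_3(\rho*\tau)=\psi_3(\emptyset)$ contains $(a,b)$, since all rows on $X$ are chaotic. But $\psi_3(\tau)=\tau\cup\{(0,0)\}$ (here $\nabla_\tau=\emptyset$ because $\tau$ is left total on $X$), so every pair of $\psi_3(\rho)\comp\psi_3(\tau)$ has second coordinate in $\ran(\psi_3(\tau))=\{a,0\}$; hence $(a,b)$ is not in the composite. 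This holds under either reading of $\nabla_\rho$, and for any $X$ with at least two elements. The structural reason is exactly the one you flag but do not resolve: in $\Ltrel_0(X)$ the row of $0$ is forced to be $\{0\}$, so $0$ is a sink rather than a chaotic point, and when $\tau$ is left total on $X$ but not surjective, nothing in the range of $\psi_3(\tau)$ can regenerate a chaotic row, while $\psi_3(\rho*\tau)$ demands one. So for $\psi_3$ only injectivity (under the $X_0$-reading), $\ccup\mapsto\cup$ and $1'\mapsto 1'$ survive; preservation of composition fails, the same example refutes the third bullet of Corollary \ref{embeddings2}, and a correct write-up must either drop this part or replace $\psi_3$ by a map into a structure in which chaos can propagate, as happens for $\psi_2$. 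Had you actually carried out the verification you correctly identified as the crux, you would have hit this counterexample rather than a proof.
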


In the third of these, the first reference to $1'$ is as the diagonal relation on $X$, and the second is as the diagonal relation on $X_0$, and similarly, in the second, $\nabla$ is the full relation on $X_0$.

\subsection{Program correctness, refinement and relational signatures}

We have the notion of a Hoare triple $(e)s(f)$ where $e,f$ are predicates on $X$ and $s$ is a program acting on $X$.  We say it is partially correct if any terminating  execution of $s$ from a state  satisfying $e$ leads to a state satisfying $f$.  We say it is totally correct if  all executions of $s$ from a state satisfying $e$ terminate in a state satisfying $f$.   So total correctness implies partial correctness.  We view each test as a member of $\Ltrel_0(X)$ that is a restriction $e$ of the diagonal to its truth set $Y$ in $X$, together with all $(s,0)$ where $s\in X_0\backslash Y$, and then $e^a=e^d$ is again this restriction of the diagonal to $Y$.  We write $\alpha\leq\beta$ when $\alpha,\beta$ are such restrictions of the diagonal on $X$ for which the domain of $\alpha$ is a subset of the domain of $\beta$ (that is, $\alpha\comp\beta=\alpha$).

For $\rho\subseteq X\times X$, writing $D(\rho)$ for the diagonal relation restricted to $dom(\rho)$, it is easy to see that 

\bi
\item $(e)\rho(f)$ is partially correct if and only if $e;\rho;f=e;\rho$, 
\item $(e)\rho(f)$ is partially correct if and only if $e^a;\rho^a;f^a=e^a;\rho^a$,
\item $(e)\rho(f)$ is totally correct if and only if $e^d*\rho^d*f^d=e^d*\rho^d$ and $e^d\leq D(\rho^d)$,
\item $(e)\rho(f)$ is totally correct if and only if $(e\cdot \rho^d)\cdot f$ exists (and hence equals $e\cdot \rho^d$).
\ei

In particular then by Proposition \ref{homom}, $(e)\rho\comp\tau(f)$ is totally correct if and only if $e*\rho^d*\tau^d*f=e*\rho^d*\tau^d$ and $e\leq D(\rho^d*\tau^d)$, emphasising the importance of demonic composition when considering total correctness of program concatenations.  There are other obvious corollaries of Proposition \ref{homom} involving the other relational operations.  It is also not hard to show that $(e)\rho\comp\tau(f)$ is totally correct if and only if $((e\cdot \rho^d)\cdot \tau^d)\cdot f$ exists, showing a connection between  constellation product of general relations and total correctness.  So all of the notions of composition of binary relations we have discussed are relevant when considering program correctness.  Next we consider the orders.


We say $s$  \emph{partially refines} $t$ if for all tests $e,f$, whenever $(e)t(f)$ is  partially correct, so is $(e)s(f)$.  We say $s$ \emph{totally refines} $t$ if for all tests $e, f$, whenever $(e)t(f)$ is totally correct, so is $(e)s(f)$. 
This is the approach of Back in \cite{back}. 

\begin{pro}
Let $\rho,\tau\in \Ltrel_0(X)$ be (the actions of) programs acting on the same state space. 
\bi
\item  $\rho$ partially refines $\tau$ if and only if $\rho^a\subseteq \tau^a$.  So for partial correctness, refinement corresponds to relational inclusion.
\item  $\rho$ totally  refines $\tau$ if and only if $\rho^d\sqsubseteq \tau^d$.  So for total correctness, refinement corresponds to demonic refinement.
\item $\rho$  partially and totally refines $\tau$  if and only if $\rho\subseteq \tau$.
\ei
\end{pro}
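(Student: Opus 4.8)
The plan is to prove the three parts using the algebraic characterisations of partial and total correctness recorded above, with each ``if'' direction following by monotonicity and each converse by constructing witnessing tests.

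For the first part, recall $(e)\rho(f)$ is partially correct iff $e^a\comp\rho^a\comp f^a=e^a\comp\rho^a$. If $\rho^a\subseteq\tau^a$ then $e^a\comp\rho^a\subseteq e^a\comp\tau^a$, so the statement ``every pair in $e^a\comp\tau^a$ lands in $\dom(f^a)$'' transfers verbatim from $\tau$ to $\rho$, giving partial refinement. For the converse I argue contrapositively: given $(x,y)\in\rho^a\setminus\tau^a$, I take $e$ the test with truth set $\{x\}$ and $f$ the test with truth set $\tau^a(x)$. Then $(e)\tau(f)$ is partially correct while $(e)\rho(f)$ is not, since $\rho$ has the terminating run $x\mapsto y$ with $y\notin\tau^a(x)$, contradicting partial refinement.

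The second part runs in parallel using the total-correctness criterion: $(e)\rho(f)$ is totally correct iff $e^d*\rho^d*f^d=e^d*\rho^d$ and $e^d\le D(\rho^d)$, the latter amounting to $\dom(e^d)\subseteq\dom(\rho^d)$. Unwinding $\rho^d\sqsubseteq\tau^d$ as $\dom(\tau^d)\subseteq\dom(\rho^d)$ together with $\rho^d\restr{\dom(\tau^d)}\subseteq\tau^d$, the ``if'' direction follows by transferring both the domain condition and the ``lands in $f$'' condition from $\tau$ to $\rho$ (for a run $(x,y)\in\rho^d$ from an $e$-state, $x\in\dom(\tau^d)$ forces $(x,y)\in\tau^d$, so $y$ satisfies $f$). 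For the converse, for each $x\in\dom(\tau^d)$ I take $e$ with truth set $\{x\}$ and $f$ with truth set $\tau^d(x)$; then $(e)\tau(f)$ is totally correct, so total refinement makes $(e)\rho(f)$ totally correct, which yields $x\in\dom(\rho^d)$ (the domain inclusion) and, applied to each $(x,y)\in\rho^d$, forces $y\in\tau^d(x)$, i.e. the restricted inclusion.

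The third part I deduce from the first two by an elementary relational computation, which is where the main care is needed. Writing $N_\rho=\{x\in X:(x,0)\in\rho\}$, left-totality of $\rho$ on $X_0$ gives $\dom(\rho^d)=X\setminus N_\rho$ and $\rho^d=\rho^a\restr{\dom(\rho^d)}$, so $\rho$ is completely determined by $\rho^a$ and $\dom(\rho^d)$. Consequently $\rho\subseteq\tau$ holds iff $\rho^a\subseteq\tau^a$ and $N_\rho\subseteq N_\tau$, i.e. iff $\rho^a\subseteq\tau^a$ and $\dom(\tau^d)\subseteq\dom(\rho^d)$. It then remains to check that, in the presence of $\rho^a\subseteq\tau^a$, the conditions $\rho^d\sqsubseteq\tau^d$ and $\dom(\tau^d)\subseteq\dom(\rho^d)$ coincide: the restricted-inclusion clause $\rho^d\restr{\dom(\tau^d)}\subseteq\tau^d$ becomes automatic, since any $(x,y)\in\rho^d\restr{\dom(\tau^d)}$ already lies in $\rho^a\subseteq\tau^a$ and has $x\in\dom(\tau^d)$, hence lies in $\tau^a\restr{\dom(\tau^d)}=\tau^d$. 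Combining this with the first two parts gives the claim. The subtle point throughout is that demonic refinement reverses the domain inclusion, so ``$\rho\subseteq\tau$'' secretly encodes both an angelic inclusion and a reversed domain inclusion; recognising that the restricted-inclusion half of demonic refinement is subsumed by the angelic inclusion is the crux.
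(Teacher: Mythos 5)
Your proof is correct. Note that the paper itself supplies no proof of this proposition: it refers to Back \cite{back} for results along these lines and remarks that ``purely algebraic proofs using relation algebra are not hard, but we do not give them here.'' So there is no argument in the paper to compare against, and your write-up fills that gap soundly. Your route is the natural test-based one: monotonicity gives each ``if'' direction, and for each converse you instantiate $e$ with singleton truth set $\{x\}$ and $f$ with truth set $\tau^a(x)$ (respectively $\tau^d(x)$), which indeed makes $(e)\tau(f)$ correct while forcing the required inclusion (and, in the demonic case, the reversed domain inclusion) on $\rho$. The third part is where your argument adds genuine content: the observations that left-totality of $\rho$ on $X_0$ gives $\dom(\rho^d)=X\setminus N_\rho$ and $\rho^d=\rho^a\restr{\dom(\rho^d)}$, that $\rho\subseteq\tau$ is equivalent to $\rho^a\subseteq\tau^a$ together with $\dom(\tau^d)\subseteq\dom(\rho^d)$, and that the restricted-inclusion half of $\rho^d\sqsubseteq\tau^d$ is subsumed by the angelic inclusion $\rho^a\subseteq\tau^a$, are all correct and are exactly what is needed. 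In effect your part three also proves the paper's subsequent unproved remark that $\subseteq$ on $\Ltrel_0(X)$ is the intersection of the two quasiorders $\lesssim$ and $\triangleleft$ induced by $\subseteq$ and $\sqsubseteq$ via the maps $\rho\mapsto\rho^a$ and $\rho\mapsto\rho^d$.
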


We have the following easy consequence of Proposition \ref{embeddings}

\begin{cor}  \label{embeddings2}
Each of the mappings $\psi_1,\psi_2,\psi_3$ is structure-preserving as follows.
\bi
\item The mapping $\psi_1$ is an embedding $(\Rel(X),\comp,\subseteq,\emptyset)\rightarrow (\Ltrel_0(X),\comp,\subseteq,{\bf 0})$.
\item The mapping $\psi_2$ is an embedding $(\Rel(X),*,\sqsubseteq,\emptyset)\rightarrow (\Trel(X_0),\comp,\subseteq,\nabla)$.
\item The mapping $\psi_3$ is an embedding $(\Rel(X),*,\sqsubseteq,1')\rightarrow (\Ltrel_0(X),\comp,\subseteq,1')$.
\ei
\end{cor}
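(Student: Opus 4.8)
The plan is to deduce Corollary~\ref{embeddings2} from Proposition~\ref{embeddings} by observing that, in each structure involved, the order is term-definable from the join operation already handled there. Recall that angelic inclusion is exactly the join order of angelic union, that is, $s\subseteq t$ if and only if $s\join t=t$; and, as noted just after the definition of $\ccup$, demonic refinement is the join order of demonic join, that is, $s\sqsubseteq t$ if and only if $s\ccup t=t$. Consequently, on each line of Corollary~\ref{embeddings2} the order appearing in the source (respectively target) is precisely the join order of the join operation appearing in the source (respectively target) of the corresponding line of Proposition~\ref{embeddings}.

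First I would note that, by Proposition~\ref{embeddings}, each $\psi_i$ is injective and is a homomorphism for composition, for the join operation, and for the nullary constant in its signature; since composition and the constant are unchanged in passing to Corollary~\ref{embeddings2}, those clauses are immediate, and only the order clause remains. For this, fix a line and write $\leq_1,\vee_1$ for the order and join in the source and $\leq_2,\vee_2$ for those in the target. Then for $s,t$ in the source,
\begin{align*}
s\leq_1 t &\iff s\vee_1 t=t\\
&\iff \psi_i(s\vee_1 t)=\psi_i(t)\\
&\iff \psi_i(s)\vee_2\psi_i(t)=\psi_i(t)\\
&\iff \psi_i(s)\leq_2\psi_i(t),
\end{align*}
where the second equivalence uses injectivity of $\psi_i$ and the third uses that $\psi_i$ preserves the join. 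Hence each $\psi_i$ both preserves and reflects the order, so is an order-embedding, as required.

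I do not expect a genuine obstacle: the only step needing care is confirming that $\ccup$ is indeed the join for $\sqsubseteq$ (and $\join$ the join for $\subseteq$), so that the orders really are recoverable from the join data supplied by Proposition~\ref{embeddings}. This is routine from the definitions of $\ccup$ and $\sqsubseteq$ --- one checks $s\sqsubseteq s\ccup t$, $t\sqsubseteq s\ccup t$, and minimality of $s\ccup t$ among upper bounds --- and is in any case recorded in the text.
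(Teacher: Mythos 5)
Your proposal is correct and is essentially the argument the paper intends: the corollary is stated as an ``easy consequence'' of Proposition~\ref{embeddings}, precisely because $\subseteq$ and $\sqsubseteq$ are the join orders of $\cup$ and $\ccup$ respectively (the latter recorded in the text), so the injective join-preserving maps $\psi_i$ automatically preserve and reflect the corresponding orders.
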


Note that the partial orders $\subseteq,\sqsubseteq$ induce quasiorders on $\Ltrel_0(X)$ in the obvious way: for programs $\rho,\tau$, we say $\rho\lesssim \tau$ if $\rho^a\subseteq \tau^a$, and $\rho\triangleleft \tau$ if $\rho^d\sqsubseteq \tau^d$.  These are $\sqsubseteq_1$ and $\sqsubseteq_2$ in \cite{back}, and $\subseteq$ in $\Ltrel_0(X)$ is their intersection and is a partial order on it, the so-called approximation order considered in \cite{back}.  Using these relations, results along the lines of the above proposition were given in \cite{back}.  Purely algebraic proofs using relation algebra are not hard, but we do not give them here.

\section{Results for ``unmixed" signatures}  \label{orderpos}


For the remainder, let $S$ be a signature contained in $(0,  {\bf 0}, 1', \nabla,\subseteq, \sqsubseteq, \join, \ccup, \comp, *)$.   For $S\subseteq (1', \nabla, \subseteq, \sqsubseteq, \join, \ccup, \comp, *)$  let $R(S), L(S)$ and $T(S)$ denote the closure under isomorphism of the classes of all (respectively all left total,  all total) algebras of binary relations, closed under the operations in $S$ and equipped with the relations in $S$.  So for example  if $\join\in S$ then the set must be closed under union, and so on.    For $S\subseteq ({\bf 0}, 1', \subseteq, \sqsubseteq, \join, \ccup, \comp, *)$, let $L_0(S)$ be the closure under isomorphism of the class of all sets of binary relations contained in $\Ltrel_0(X)$ for some set $X$, closed under operators in $S$ and including $\set{(x, 0)|x\in X_0}$ when ${\bf 0}\in S$.  We also allow $0\in S$ when defining $R(S)$ and require that the algebra of binary relations includes the empty relation in this case.
The operators $\comp$ and $*$, are interchangeable in any signature $S$ for $L(S), T(S)$ or  $L_0(S)$, since the demonic composition of two left-total relations equals their angelic composition, similarly $\cup, \ccup$ are interchangeable over left-total relations.  If $\Sigma$ is a set of $S$-formulas and $\c K$ is a class of $S$-structures we say that $\Sigma$ \emph{defines} or \emph{axiomatises} $\c K$ if $\c K$ is the class of all models of $\Sigma$.

\subsection{Signatures axiomatised as ordered semigroups}  \label{orderonly}

As discussed above,

\begin{thm} \label{zar}
$R(\subseteq,\comp)=L(\subseteq,\comp)$, the class of ordered semigroups \cite{zarecki} .  
\end{thm}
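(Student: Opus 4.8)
The plan is to prove the two relational classes equal by pinning them between the class of ordered semigroups using the trivial inclusion $L(\subseteq,\comp)\subseteq R(\subseteq,\comp)$, together with a soundness direction and a representation (completeness) direction. For soundness, one checks that any algebra of binary relations under $\comp$ and $\subseteq$ is an ordered semigroup: composition is associative, inclusion is a partial order, and both monotonicity laws hold, since $s\subseteq t$ implies $r\comp s\subseteq r\comp t$ and $s\comp r\subseteq t\comp r$. Being an ordered semigroup is an isomorphism invariant, so $R(\subseteq,\comp)$, and a fortiori $L(\subseteq,\comp)$, is contained in the class of ordered semigroups. This direction is routine.

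The substance is the converse: every abstract ordered semigroup $(A,\circ,\leq)$ is isomorphic to a left-total algebra of binary relations. First I would adjoin an identity to form $A^1=A\cup\set{1}$, extending $\circ$ so that $1$ is a two-sided identity and extending $\leq$ by leaving $1$ incomparable to every element of $A$; a short check confirms $A^1$ is again an ordered monoid. Then represent each $a\in A$ on the base set $A^1$ by the ``downset'' relation
$$\hat a=\set{(x,y)\in A^1\times A^1: y\leq x\circ a}.$$
Each $\hat a$ is left-total because $(x,x\circ a)\in\hat a$ by reflexivity, so the image lies in $\Ltrel(A^1)$.

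It then remains to verify that $a\mapsto\hat a$ is an isomorphism onto its image for the signature $(\comp,\subseteq)$. For composition, $\widehat{a\circ b}=\hat a\comp\hat b$: the inclusion $\subseteq$ uses associativity and reflexivity (take the witness $y=x\circ a$), while the inclusion $\supseteq$ uses associativity together with right monotonicity and transitivity. For the order, $a\leq b$ forces $\hat a\subseteq\hat b$ by left monotonicity (from $y\leq x\circ a\leq x\circ b$), while conversely $\hat a\subseteq\hat b$ yields $a\leq b$ by evaluating at the pair $(1,a)\in\hat a$, which must then lie in $\hat b$ and hence gives $a\leq 1\circ b=b$. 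Injectivity is immediate from antisymmetry. Combining, the class of ordered semigroups is contained in $L(\subseteq,\comp)$, and the chain of inclusions between ordered semigroups, $L(\subseteq,\comp)$ and $R(\subseteq,\comp)$ closes up, forcing all three to coincide.

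The main obstacle is the design of the representation map rather than any single verification: one must hit on the downset relation $\hat a$ (the naive Cayley graph $\set{(x,x\circ a)}$ records composition but not the order) and adjoin an identity whose sole purpose is to supply the witness $(1,a)$ needed for the backward direction of order preservation. Once the definition is fixed, the remaining work is bookkeeping, but it must be distributed carefully so that left monotonicity, right monotonicity, associativity, reflexivity and antisymmetry are each invoked exactly where the relational identities demand them.
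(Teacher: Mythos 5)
Your proposal is correct and takes essentially the same route as the paper: it is precisely Zarecki\u{\i}'s representation as sketched there, adjoining an identity $1'$ left incomparable to all other elements (the paper's ``compatible unital extension'' in which $1'$ is unrelated to everything but itself) and sending each $a$ to the left-total downset relation $\set{(x,y): y\leq xa}$ of equation \eqref{eq:Z}, with left-totality then giving $L(\subseteq,\comp)$ and closing the chain of inclusions. The only difference is that the paper cites \cite{zarecki} for the homomorphism and injectivity verifications that you carry out explicitly, and those verifications are correct.
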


The proof represents each element $a$ of an ordered semigroup $\mathcal{A}=(A, \leq, \cdot)$  as a binary relation over the base $A^{1'}=A\cup\set{1'}$, carrier set of the semigroup ${\mathcal A}^{1'}$ obtained from ${\mathcal A}$ by adjoining an identity ${1'}$.  Then $a\in A$ is represented as
\begin{equation}\label{eq:Z}
\rho_a=\set{(x, y)\mid x\in A^{1'}, \; y\in A,\; y\leq xa}.
\end{equation}
where the inequality on the right is evaluated in $\c A$.
Observe that the representation of each element $a$ is a left total binary relation on the set $A^{1'}$ (every element of $A^{1'}$ appears in the domain of the relation); then because $L(\subseteq,\comp)\subseteq R(\subseteq,\comp)$, a class of ordered semigroups, the result follows.  


 It suits our purposes better to consider a variant of the \Zar approach in which we assume the ordered semigroup is a monoid already, with identity ${1'}$.  The argument that $a\mapsto \rho_a$ is a faithful representation of ${\mc A}$ within $(\Rel(A),\comp,\subseteq)$, and indeed within $(\Ltrel(A),\comp,\subseteq)$, is then essentially identical to that given in \cite{zarecki} (but note that ${1'}$ is not represented as the diagonal relation in general). So we have the following.

\begin{lemma} \label{ordmonoid}
Let ${\mc A}$ be an ordered monoid.  Then  the ordered semigroup reduct of ${\mc A}$ embeds in $(\Ltrel(A^{1'}),\comp,\subseteq)$ via $a\mapsto \rho_a$ where
\begin{equation}\label{eq:Z1}
\rho_a=\{(x,y)\mid x,y\in A, \c A\models y\leq xa\}.
\end{equation}
\end{lemma}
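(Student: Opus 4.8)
The plan is to check that $a\mapsto\rho_a$ meets the three requirements of an ordered-semigroup embedding: (i) each $\rho_a$ is a well-defined left-total relation on the base; (ii) the map is a homomorphism for the operation, i.e.\ $\rho_{ab}=\rho_a\comp\rho_b$; and (iii) it is an order-embedding, $a\le b\iff\rho_a\subseteq\rho_b$. Since $\mathcal A$ is already a monoid, its own identity serves as $1'$, so the base $A^{1'}$ is just $A$ and both coordinates of every $\rho_a$ range over the whole base. Left-totality is immediate: for any $x\in A$, reflexivity gives $xa\le xa$, so $(x,xa)\in\rho_a$ and every point lies in the domain. Note also that (iii) together with antisymmetry of $\le$ yields injectivity, since $\rho_a=\rho_b$ forces $a\le b$ and $b\le a$; hence the three parts combine to give the desired embedding.

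For (ii), the forward inclusion $\rho_a\comp\rho_b\subseteq\rho_{ab}$ uses right monotonicity: from an intermediate point $y$ with $y\le xa$ and $z\le yb$, monotonicity gives $yb\le(xa)b$, so $z\le(xa)b=x(ab)$ by associativity and $(x,z)\in\rho_{ab}$. The reverse inclusion is where the monoid structure first pays off: given $z\le(xa)b$, take $y=xa$ (which lies in $A$ as $A$ is closed under the operation); reflexivity gives $(x,xa)\in\rho_a$ while $z\le yb$ gives $(xa,z)\in\rho_b$, exhibiting the required witness.

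For (iii), preservation ($a\le b\Rightarrow\rho_a\subseteq\rho_b$) follows from left monotonicity: if $y\le xa$ and $a\le b$ then $y\le xa\le xb$. The main obstacle, and the one place where having a genuine identity is essential, is the reflection direction $\rho_a\subseteq\rho_b\Rightarrow a\le b$. Here I would evaluate the relations at the identity: since $a\le 1'a=a$ we have $(1',a)\in\rho_a$, so the assumed inclusion forces $(1',a)\in\rho_b$, i.e.\ $a\le 1'b=b$. This is exactly why the representation must run over a base containing an identity, and it also explains the remark that $1'$ is not sent to the diagonal, since in fact $\rho_{1'}=\{(x,y):y\le x\}$. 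Everything else is a routine verification of the ordered-semigroup axioms, so the content of the lemma is concentrated in this use of the identity to recover the abstract order from relational inclusion.
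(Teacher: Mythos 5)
Your proof is correct and is essentially the argument the paper intends: the paper itself gives no details for this lemma, deferring to Zarecki\u{\i}'s original verification, which is precisely your direct check (left-totality from reflexivity, $\rho_{ab}=\rho_a\comp\rho_b$ via monotonicity and the witness $y=xa$, order reflection by evaluating at $1'$, and the observation that $\rho_{1'}$ is not the diagonal). One small slip in your commentary: the reverse inclusion $\rho_{ab}\subseteq\rho_a\comp\rho_b$ does not actually need the monoid structure, since $xa\in A$ in any semigroup; as you yourself note later, the identity is genuinely needed only for the reflection step $\rho_a\subseteq\rho_b\Rightarrow a\le b$.
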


We can easily recover \Zar's more general result  for ordered semigroups as follows.   A \emph{compatible unital}  extension of an ordered semigroup $\c A=(A, \leq, ;)$ is an ordered monoid $\c A^{1'}=(A^{1'}, \leq, {1'}, ;)$ where $A^{1'}=A\cup\{{1'}\},\; {1'}\not\in A$ and the $\set{\leq, ;}$-reduct of the restriction to $A$ is $\c A$.  A compatible unital  extension may be obtained by specifying whether $a\leq {1'}$ and whether ${1'}\leq a$ for all $a\in A$ and must satisfy $a\geq {1'} \rightarrow (a;b\geq b\wedge b;a\geq b)$ and $a\leq {1'}\rightarrow (a;b\leq b\wedge b;a\leq b)$, for all $a, b\in A$.  For example, a compatible unital extension may be obtained by letting ${1'}$ be unrelated to all elements but itself. Hence,

\begin{lemma}  \label{compembed}
Every ordered semigroup $\c A=(A, \leq, ;)$ can be represented as left-total binary relations under inclusion and composition.
\end{lemma}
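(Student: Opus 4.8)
The plan is to reduce the claim to the monoid case already treated in Lemma~\ref{ordmonoid}. Given an arbitrary ordered semigroup $\c A=(A,\leq,\comp)$, I would first pass to a compatible unital extension $\c A^{1'}=(A^{1'},\leq,1',\comp)$, using the explicit one mentioned just above, in which $1'$ is declared incomparable to every element of $A$ (so the only order relation involving $1'$ is $1'\leq 1'$). The point is that this extension is an ordered monoid with identity $1'$, so Lemma~\ref{ordmonoid} applies to it directly; moreover $\c A$ sits inside the ordered-semigroup reduct of $\c A^{1'}$ as a substructure, and restricting the resulting representation back to $A$ will deliver the required embedding.

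Concretely, I would first verify that $\c A^{1'}$ is genuinely an ordered monoid. The identity laws are immediate from the definition of $\comp$ on $A^{1'}$, and associativity only needs checking for products in which $1'$ occurs, where it is trivial. For the order, since $1'$ is comparable only to itself, adjoining it creates no new comparabilities, so reflexivity, antisymmetry and transitivity are inherited from $\leq$ on $A$; likewise the two monotonicity laws reduce on the new element to the trivial instance $1'\leq 1'$. The compatibility requirements $a\geq 1'\rightarrow(a\comp b\geq b\wedge b\comp a\geq b)$ and $a\leq 1'\rightarrow(a\comp b\leq b\wedge b\comp a\leq b)$ hold vacuously, since no $a\in A$ is comparable to $1'$. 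Hence $\c A^{1'}$ is an ordered monoid, and by construction the $\set{\leq,\comp}$-reduct of its restriction to $A$ is exactly $\c A$.

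Next, apply Lemma~\ref{ordmonoid} to $\c A^{1'}$: its ordered-semigroup reduct embeds into $(\Ltrel(A^{1'}),\comp,\subseteq)$ via $a\mapsto\rho_a$, where $\rho_a=\{(x,y)\mid x,y\in A^{1'},\ \c A^{1'}\models y\leq x\comp a\}$. Since $A$ is closed under $\comp$ and carries the induced order, it is a sub-ordered-semigroup of the reduct of $\c A^{1'}$, so the restriction of this embedding to $A$ is again an embedding into $(\Ltrel(A^{1'}),\comp,\subseteq)$. Each $\rho_a$ with $a\in A$ is left total, because for any $x\in A^{1'}$ the element $y=x\comp a$ satisfies $y\leq x\comp a$ by reflexivity, so $(x,x\comp a)\in\rho_a$; thus the image consists of left-total relations, which gives the claim. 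The only step requiring genuine care is the first one, \emph{viz.} checking that the isolated-$1'$ extension meets the ordered-monoid axioms (monotonicity in particular) so that Lemma~\ref{ordmonoid} is actually applicable; once that is secured, faithfulness and left-totality transfer immediately under restriction to $A$.
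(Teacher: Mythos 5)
Your proposal is correct and follows essentially the same route as the paper: the paper's proof likewise takes the compatible unital extension (with $1'$ unrelated to everything but itself, as noted just before the lemma) and embeds $\c A$ via the representation \eqref{eq:ZZ}, i.e.\ the Lemma~\ref{ordmonoid} map restricted to $A$. Your explicit verification that the isolated-$1'$ extension satisfies the ordered-monoid axioms, and that each $\rho_a$ is left total via $(x,x\comp a)\in\rho_a$, just fills in details the paper leaves implicit.
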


\begin{proof}
Take a compatible unital extension $\c A^{1'}$ of $\c A$ and embed $\c A$ in  $(\Ltrel(A^{1'}), \subseteq, \comp)$ by
\begin{equation}\label{eq:ZZ}
\rho_a=\{(x,y)\mid x, y\in A^{1'}, \; \c A^{1'}\models y\leq xa\},
\end{equation}
for $a\in A$.
\end{proof}

Recall that any semigroup of binary relations under demonic composition  equipped with the refinement order is an ordered semigroup.  Conversely, \eqref{eq:Z} gives a representation of an ordered semigroup ${\mc A}$
as an ordered semigroup of left total binary relations under composition and inclusion; but for such binary relations, 
ordinary and demonic composition coincide, as do inclusion and demonic refinement, so this also shows ${\mc A}\in \Rel(\sqsubseteq,*)$. Hence we have the following.

\begin{thm}   \label{lem:OS} 
$R(\sqsubseteq, *)$ is the class of ordered semigroups. 
\end{thm}

Of course any subsemigroup of $(\Ltrel_0(X),\comp,\subseteq)$ is an ordered semigroup.  In fact, the converse also holds (up to isomorphism) and so we have the following.

\begin{thm}   \label{lem:OS} 
$L_0(\leq, \comp)$ is the class of ordered semigroups. 
\end{thm}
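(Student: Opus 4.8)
The plan is to prove the two halves of the claimed equality (reading $\leq$ as the inclusion order $\subseteq$ on $\Ltrel_0(X)$). The inclusion of $L_0(\subseteq,\comp)$ in the class of ordered semigroups is the ``of course'' direction already flagged in the text: $(\Rel(X_0),\comp,\subseteq)$ is an ordered semigroup, $\Ltrel_0(X)$ is a subsemigroup of it under $\comp$, and associativity, the partial-order axioms for $\subseteq$, and both left and right monotonicity laws are inherited by every subalgebra. Closure under isomorphism then gives that every member of $L_0(\subseteq,\comp)$ is an ordered semigroup.

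For the converse I would compose two embeddings that are already available. Let $\c A=(A,\leq,;)$ be an arbitrary ordered semigroup. By Lemma \ref{compembed} (that is, \Zar's representation \eqref{eq:ZZ} applied to a compatible unital extension), the map $a\mapsto\rho_a$ is an order-embedding of $\c A$ into $(\Ltrel(A^{1'}),\comp,\subseteq)$, and hence into $(\Rel(A^{1'}),\comp,\subseteq)$. Now set $X=A^{1'}$. By Corollary \ref{embeddings2} (a consequence of Proposition \ref{embeddings}), the map $\psi_1\colon\rho\mapsto\rho\cup{\bf 0}$ is an embedding of $(\Rel(X),\comp,\subseteq)$ into $(\Ltrel_0(X),\comp,\subseteq)$. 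Composing the two, $a\mapsto\rho_a\cup{\bf 0}$ is an order-embedding of $\c A$ into $(\Ltrel_0(A^{1'}),\comp,\subseteq)$.

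It then only remains to record that the image $\{\rho_a\cup{\bf 0}\mid a\in A\}$ is genuinely a subsemigroup of $\Ltrel_0(A^{1'})$ under $\comp$: both constituent maps are $\comp$-homomorphisms, so the image is closed, with $(\rho_a\cup{\bf 0})\comp(\rho_b\cup{\bf 0})=\rho_{ab}\cup{\bf 0}$; and order-reflection passes through because $\rho_a\cup{\bf 0}\subseteq\rho_b\cup{\bf 0}$ iff $\rho_a\subseteq\rho_b$ iff $a\leq b$, by faithfulness of the \Zar representation. Thus $\c A$ is isomorphic, as an ordered semigroup, to a subsemigroup of $(\Ltrel_0(A^{1'}),\comp,\subseteq)$ and so lies in $L_0(\subseteq,\comp)$, which together with the first paragraph yields the equality. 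I do not anticipate a genuine obstacle here: the statement is essentially the assembly of \Zar's theorem with the embedding $\psi_1$, and the only delicate point is that adjoining ${\bf 0}$ really does land us inside $\Ltrel_0(A^{1'})$ rather than merely in $\Ltrel(A^{1'}\cup\{0\})$ --- equivalently, that $0$ maps exactly to $0$ while left-totality is preserved --- which is precisely what Corollary \ref{embeddings2} certifies.
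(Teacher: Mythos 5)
Your proposal is correct: both directions go through, and the composite map $a\mapsto\rho_a\cup{\bf 0}$ is indeed an embedding of the ordered semigroup $\c A$ into $(\Ltrel_0(A^{1'}),\comp,\subseteq)$ whose image is closed under $\comp$, which is exactly what membership in $L_0(\subseteq,\comp)$ requires. However, your decomposition differs from the paper's. The paper works at the abstract level first: it adjoins a zero $0$ to $\c A$ as a \emph{least} element, forming $\c A^0$, and then applies Lemma \ref{compembed} once to $\c A^0$; the defining condition of $\Ltrel_0$ (that $0$ is related only to $0$) then falls out of the one-line computation that $(0,y)\in\rho_s$ forces $y\leq 0;s=0$, while left-totality via the pairs $(x,0)$ comes from $0\leq xs$ holding for all $x$. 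You instead work at the relational level: you apply Lemma \ref{compembed} to $\c A$ itself and then post-compose with $\psi_1$, delegating the $\Ltrel_0$-membership, left-totality and order-reflection checks to Proposition \ref{embeddings} and Corollary \ref{embeddings2}. The two routes produce essentially the same concrete representation --- unwinding the paper's construction, its $\rho_s$ is precisely your $\rho_s\cup{\bf 0}$ (up to the choice of compatible unital extension) --- so neither is more general. Yours buys modularity, reusing the embeddings already catalogued in Section \ref{motive} with no new computation (though note that Proposition \ref{embeddings} is stated without proof, so the content you invoke is exactly what the paper verifies by hand); the paper's is self-contained given Lemma \ref{compembed}, and its device of adjoining a distinguished element to the abstract algebra before representing is the same one exploited in the zero-element results of Subsection \ref{orderzero}.
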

\begin{proof}
First adjoin a zero $0$ as a smallest element to the ordered semigroup ${\mc A}$, giving ${\mc A}^0$.  Then by Lemma \ref{compembed}, ${\mc A}^0$, and hence ${\mc A}$, embeds as an ordered semigroup in $(\Ltrel((A^0)^{1'}),\comp,\subseteq)$ via $s\mapsto \rho_s$.   But for any $s\in A$, $(0,y)\in \rho_s$ says $y\leq 0;s=0$ so $y=0$, and so $\rho_s\in \Ltrel_0(A^{1'})$, so ${\mc A}$ embeds in $\Ltrel_0((A^0)^{1'})$
\end{proof}  

\subsection{Introducing a zero element}  \label{orderzero}

Now we introduce a zero element to the signatures being considered, standing for various things depending on context. Perhaps the simplest extension of the signature of ordered semigroups  that distinguishes the angelic and demonic cases is ordered semigroups with zero representing the empty relation $\emptyset$ (note that  $\emptyset$ is not left total, so the empty set is not considered for $L_0(S)$).  Note that $\emptyset\subseteq s$ for all binary relations $s$, but on the other hand $s\sqsubseteq \emptyset$ for all $s$.  

In $\Ltrel_0(X)$, there is a  different zero element, namely ${\bf 0}=\{(x,0)\mid x\in X\cup \{0\}\}$.  Note that with respect to the inclusion order, ${\bf 0}$ is minimal in $\Ltrel_0(X)$.  
Let us say $(S,\cdot,\leq,0)$ is an {\em ordered semigroup with weak zero} providing $(S,\cdot,\leq)$ is an ordered semigroup having a semigroup zero $0$ such that $s\leq 0\Rightarrow s=0$.  Hence $(\Ltrel_0(X),\comp,\subseteq, {\bf 0})$ is an ordered semigroup with weak zero, whence so is any subalgebra.  Conversely, we have the following.

\begin{lemma}  \label{gen}
	Suppose ${\mc A}=(A,\cdot, 0,\leq)$ is an ordered semigroup with weak zero, and let ${\mc A^{1'}}$ be any compatible unital extension in which ${1'}\leq 0$ is false.  Then letting $X=A^{1'}\backslash \set0$, ${\mc A}$ embeds in $(\Ltrel_0(X),\comp,{\bf 0},\subseteq)$ via $s\mapsto \rho_s$,   see \eqref{eq:ZZ}.
\end{lemma}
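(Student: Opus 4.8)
The plan is to build on the \Zar-style representation already established in Lemma \ref{compembed}, and then verify only the two extra features specific to the present signature: that the image lands inside $\Ltrel_0(X)$, and that the weak zero $0$ is sent to ${\bf 0}$. Everything concerning injectivity, preservation of $\comp$ and preservation of $\subseteq$ (in both directions) will be imported rather than reproved.

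First I would apply Lemma \ref{compembed} to the given compatible unital extension $\mc A^{1'}$ (one with $1'\leq 0$ false exists, e.g.\ by leaving $1'$ unrelated to all elements but itself). This immediately gives that $s\mapsto \rho_s$, with $\rho_s$ as in \eqref{eq:ZZ} over the base $A^{1'}$, is a faithful embedding of the ordered-semigroup reduct of $\mc A$ into $(\Ltrel(A^{1'}),\comp,\subseteq)$: injectivity and preservation of $\comp$ and of $\subseteq$ in both directions are exactly the content of that lemma, resting on associativity, reflexivity/transitivity, the two monotonicity laws, and the fact that $1'$ is a two-sided identity (so that $(1',s)\in\rho_s$ forces $s\leq t$ from $\rho_s\subseteq\rho_t$). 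I would not reprove these.

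The new work is to identify $A^{1'}$ with $X_0$ by taking the adjoined fail state of $\Ltrel_0(X)$ to be the semigroup zero $0\in A$; since $X=A^{1'}\setminus\set0$ we indeed have $0\notin X$ and $X_0=X\cup\set0=A^{1'}$. I would first note that $0$ remains a semigroup zero in $\mc A^{1'}$, because products inside $A$ are unchanged and $1'\cdot 0=0\cdot 1'=0$ as $1'$ is the identity; hence $x\cdot 0=0$ for every $x\in A^{1'}$. The crucial membership check is then the fail-state condition: for $y\in A^{1'}$, $(0,y)\in\rho_s$ iff $y\leq 0\cdot s=0$, and $y\leq 0$ forces $y=0$ — for $y\in A$ by the weak-zero axiom, and for $y=1'$ because the hypothesis says $1'\leq 0$ is false. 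Thus $(0,y)\in\rho_s$ iff $y=0$, which together with left-totality (already supplied by Lemma \ref{compembed}) yields $\rho_s\in\Ltrel_0(X)$.

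Finally, for zero preservation, using $x\cdot 0=0$ for all $x\in A^{1'}$ I would compute $\rho_0=\set{(x,y)\mid y\leq 0}=\set{(x,0)\mid x\in A^{1'}}={\bf 0}$, exactly the weak zero of $\Ltrel_0(X)$. I expect the only genuinely delicate point to be this fail-state verification, since it is the one place where both the weak-zero axiom and the hypothesis $1'\not\leq 0$ are essential; the remainder is the unchanged \Zar argument.
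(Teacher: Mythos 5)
Your proposal is correct and follows essentially the same route as the paper's own proof: invoke Lemma \ref{compembed} for the given compatible unital extension, then verify the fail-state condition $(0,y)\in\rho_s \Leftrightarrow y=0$ (using the weak-zero axiom together with $1'\not\leq 0$) and the computation $\rho_0={\bf 0}$. If anything, you are slightly more explicit than the paper in identifying $X_0$ with $A^{1'}$, in noting that $0$ stays a zero after adjoining $1'$, and in handling the case $y=1'$ separately, but the substance is identical.
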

\begin{proof}  
First note that such a compatible unital extension ${\mc A^{1'}}$ exists, as we noted above.  By Lemma \ref{compembed}, the mapping $s\mapsto \rho_s$ defined as above for each $s\in A$ is an embedding of $(A,\cdot,\leq)$ within $(\Ltrel(X\cup\set{1'}),\comp,\subseteq)$.  But for all $s\in A\cup\set0$,\/ $(0,x)\in \rho_s$ if and only if $x\leq 0s=0$, that is, $x=0$.  So $\rho_s\in \Ltrel_0(X\cup\set{1'})$.  

It remains to show that $0$ is represented as ${\bf 0}$.  But for $x\in A$,\/  $(x,y)\in\rho_0$ if and only if $y\leq x0=0$, or $y=0$.  So $\rho_0={\bf 0}$, as required.
\end{proof}

Hence

\begin{thm}\label{three}
$L_0(\comp,\subseteq,{\bf 0})$ is the class of ordered semigroups with weak zero.
\end{thm}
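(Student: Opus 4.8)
The plan is to prove the two inclusions separately, and both are essentially in hand. The statement asserts that $L_0(\comp,\subseteq,{\bf 0})$ coincides with the class of ordered semigroups with weak zero, so I must establish (i) that every member of $L_0(\comp,\subseteq,{\bf 0})$ is an ordered semigroup with weak zero (soundness), and (ii) that every ordered semigroup with weak zero is isomorphic to such an algebra (completeness). The genuine mathematical content sits entirely in the completeness direction, and that has already been packaged as Lemma \ref{gen}.

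For soundness, I would first confirm that $(\Ltrel_0(X),\comp,\subseteq,{\bf 0})$ is itself an ordered semigroup with weak zero. Associativity of $\comp$, the partial-order axioms for $\subseteq$, and left/right monotonicity are standard. The one point worth checking is that ${\bf 0}=\set{(x,0)\mid x\in X_0}$ is a genuine two-sided semigroup zero: since every $s\in\Ltrel_0(X)$ is left total, $s\comp{\bf 0}=\set{(x,0)\mid x\in X_0}={\bf 0}$, and using the defining property $(0,y)\in s\iff y=0$ of $\Ltrel_0(X)$ gives ${\bf 0}\comp s={\bf 0}$. The weak-zero condition $s\subseteq{\bf 0}\Rightarrow s={\bf 0}$ is exactly the minimality of ${\bf 0}$ noted above, since $s\subseteq{\bf 0}$ together with left-totality of $s$ forces $s={\bf 0}$. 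All of these are universal (indeed universal Horn) sentences, hence preserved under passage to any subalgebra; since such a subalgebra must contain the constant ${\bf 0}$ and be closed under $\comp$ with $\subseteq$ inherited, it is again an ordered semigroup with weak zero. As $L_0$ is closed under isomorphism, soundness follows.

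For completeness I would simply invoke Lemma \ref{gen}: given an ordered semigroup with weak zero ${\mc A}=(A,\cdot,0,\leq)$, choose a compatible unital extension ${\mc A}^{1'}$ in which ${1'}\leq 0$ fails (such an extension exists), and the map $s\mapsto\rho_s$ of \eqref{eq:ZZ} embeds ${\mc A}$ into $(\Ltrel_0(X),\comp,{\bf 0},\subseteq)$ with $0$ represented by ${\bf 0}$. Hence ${\mc A}$ is isomorphic to its image, which is a subalgebra of the required form, so ${\mc A}\in L_0(\comp,\subseteq,{\bf 0})$. Combining this with soundness yields the claimed equality of classes.

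The substantive work has therefore already been absorbed into Lemma \ref{gen}, whose proof is the real obstacle: arranging the compatible unital extension so that ${1'}\not\leq 0$ — which is precisely what forces $0$ to be represented by ${\bf 0}$ rather than by a strictly larger relation — and verifying that each $\rho_s$ lands inside $\Ltrel_0$ rather than in a larger relation algebra. Granted that lemma, Theorem \ref{three} is a direct combination of the two directions, with nothing beyond the routine verification of the zero and weak-zero clauses in the soundness half.
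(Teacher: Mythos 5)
Your proposal is correct and follows essentially the same route as the paper: the paper obtains Theorem \ref{three} immediately ("Hence") by combining the observation that $(\Ltrel_0(X),\comp,\subseteq,{\bf 0})$ is an ordered semigroup with weak zero, whence so is any subalgebra, with Lemma \ref{gen} for the converse. Your spelled-out verifications of the zero laws and the weak-zero condition, and the remark that these universal sentences pass to subalgebras, are exactly the content the paper leaves implicit.
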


We say $(S,\cdot,0,\leq)$ is an {\em ordered semigroup with zero} if $(S,\cdot,\leq)$ is an ordered semigroup, $0$ is a zero in $S$, and $0\leq x$ for all $x$.  So every ordered semigroup with zero is an ordered semigroup with weak zero.  The members of $R(\comp,\subseteq,\emptyset)$ are ordered semigroups with zero.  

%

Zarecki\u{\i}'s approach in \cite{zarecki} represents every semigroup element as a left total relation, so it cannot possibly represent {\em any} element of an ordered semigroup as the empty relation.  However, we can use Lemma \ref{gen} to obtain what we want.

\begin{thm}  \label{one}
$R(\comp,\subseteq,\emptyset)$ is the class of ordered semigroups with zero.
\end{thm}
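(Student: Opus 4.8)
The plan is to prove the two inclusions separately, the forward one being routine and the representation carrying the content. First I would dispatch the easy direction: every member of $R(\comp,\subseteq,\emptyset)$ is an ordered semigroup with zero, since angelic composition is associative, $\subseteq$ is a partial order over which both monotonicity laws hold, and the distinguished constant $\emptyset$ satisfies $\emptyset\comp s=s\comp\emptyset=\emptyset$ together with $\emptyset\subseteq s$ for every $s$, so it is simultaneously a semigroup zero and a least element.

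For the converse, let $\mc A=(A,\cdot,0,\leq)$ be an arbitrary ordered semigroup with zero. The first observation is that $\mc A$ is in particular an ordered semigroup with weak zero: if $s\leq 0$ then, since also $0\leq s$ by the zero axiom, antisymmetry forces $s=0$. Hence Lemma~\ref{gen} applies, with $X=A^{1'}\setminus\set{0}$, and yields an embedding $s\mapsto\rho_s$ of $\mc A$ into $(\Ltrel_0(X),\comp,{\bf 0},\subseteq)$ in which $0$ is represented as ${\bf 0}$. This is almost what is wanted, except that $0$ is sent to ${\bf 0}$ rather than to $\emptyset$; indeed Zarecki\u{\i}'s relations are left total, so no element is ever sent to $\emptyset$ on the nose.

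The key idea is therefore to post-compose this embedding with the angelic restriction map $\rho\mapsto\rho^a$, that is, to send $s\mapsto\rho_s^a\in\Rel(X)$. Three facts then need checking. (i) The map is a homomorphism for composition: by the second clause of Proposition~\ref{homom} we have $(\rho_s\comp\rho_t)^a=\rho_s^a\comp\rho_t^a$, and since $s\mapsto\rho_s$ is itself a semigroup homomorphism (Lemma~\ref{compembed}) this gives $\rho_{st}^a=\rho_s^a\comp\rho_t^a$. (ii) The map sends $0$ to $\emptyset$: by the last clause of Proposition~\ref{homom}, $\rho_0^a={\bf 0}^a=\emptyset$. (iii) The map is an order-embedding, hence injective; monotonicity being immediate, the real content is that $\rho_s^a\subseteq\rho_t^a$ implies $s\leq t$.

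The main obstacle is exactly (iii): I must verify that the faithfulness of the Zarecki\u{\i} representation survives the deletion of every pair meeting the point $0$. Recall that faithfulness of $s\mapsto\rho_s$ rests on the witness pair $(1',s)\in\rho_s$ (valid since $s\leq 1'\cdot s$), which forces $s\leq 1'\cdot t=t$ as soon as $(1',s)\in\rho_t$. For $s\neq 0$ this witness pair survives angelic restriction, because $1'\neq 0$ and $s\neq 0$ place $(1',s)$ in $X\times X$; thus $\rho_s^a\subseteq\rho_t^a$ still yields $s\leq t$. The remaining case $s=0$ is handled directly by the zero axiom, since $0\leq t$ holds for every $t$ and nothing need be recovered from the relations. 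This is precisely where the hypothesis ``ordered semigroup with zero'' is used rather than merely ``weak zero''. It then only remains to note that the image $\set{\rho_s^a : s\in A}$ is closed under $\comp$ and contains $\emptyset=\rho_0^a$, and that the base $X$ is nonempty (it contains $1'$), so $\mc A$ is isomorphic to a member of $R(\comp,\subseteq,\emptyset)$, completing the proof.
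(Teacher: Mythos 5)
Your proof is correct and takes essentially the same route as the paper: both pass from the ordered semigroup with zero (noting it has a weak zero) through Lemma \ref{gen} to an embedding $s\mapsto\rho_s$ into $(\Ltrel_0(X),\comp,{\bf 0},\subseteq)$, and then compose with the angelic restriction $\rho_s\mapsto\rho_s^a$, invoking Proposition \ref{homom} for the homomorphism and zero-preservation properties. The only difference is in how faithfulness of the restriction step is checked: the paper observes that $0\leq xs$ for all $x$ forces $(x,0)\in\rho_s$, so $\rho_s=\psi_1(\rho_s^a)$ and injectivity is inherited from $\psi_1$, whereas you re-run the Zarecki\u{\i} witness-pair argument with $(1',s)$ and dispose of the case $s=0$ by the axiom $0\leq t$ --- an equally short and valid verification.
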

\begin{proof}  
Suppose ${\mc A}$ is an ordered semigroup with zero.  Define $\c A^{1'}$ by adjoining an identity element ${1'}$ with $0<{1'}$ and ${1'}$ otherwise unrelated to other elements. This is a compatible unital extension  to which Lemma \ref{gen} applies.  So we may represent ${\mc A}$ within $\Ltrel_0(X)$ as in that result, with $X=A^{1'}\backslash\{0\}$, note that we delete $0$ from $\c A^{1'}$, but put it back in the base of relations in  $\Ltrel_0(X)$. Under this representation, we have, for all $s\in A$ and $x\in A^{1'}$, 
$(x,0)\in \{(x,y)\mid y\leq xs\}=\rho_s$ since $0\leq xs$ for all $x\in A^{1'}$, so $\rho_s=\psi_1((\rho_s)^a)$, and so $\rho_s\mapsto (\rho_s)^a$ is one-to-one and hence is an embedding of the representation of $S$ into $(\Rel(S^{1'}),\comp,\subseteq,\emptyset)$ by Proposition \ref{homom}. 
\end{proof}

%


As noted earlier, with respect to $\sqsubseteq$, $\emptyset$ is the largest member of $\Rel(X)$.
We say $(S,\cdot,0,\leq)$ is a {\em dual ordered semigroup with zero} if its dual is an ordered semigroup with zero, so that it satisfies $x\leq 0$ for all $x$.   So $(\Rel(X),*,\sqsubseteq,\emptyset)$ is a dual ordered semigroup with zero.  For a different but related example, note that in the subsemigroup $(\Trel(X),\comp)$ of $(\Rel(X),\comp)$, the full relation $\nabla$ is a zero element that is largest with respect to inclusion, and so $(\Trel(X),\comp,\nabla,\subseteq)$ is also a dual ordered semigroup with zero, whence so is any subalgebra. Indeed we obtain the following.

\begin{thm}  \label{cort}
$T(\comp,\subseteq,\nabla)$ is the class of dual ordered semigroups with zero.
\end{thm}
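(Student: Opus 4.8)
The plan is to run exactly the strategy behind Theorem~\ref{one}, but with the compatible unital extension rigged so that the zero becomes the \emph{top} element rather than a bottom-ish one; this forces the zero to be represented by $\nabla$, and the mere presence of a semigroup zero will hand us right-totality for free. Soundness is essentially already recorded in the paragraph preceding the statement: $\nabla$ is a semigroup zero of $(\Trel(X),\comp)$ and is the largest relation under $\subseteq$, so $(\Trel(X),\comp,\subseteq,\nabla)$ is a dual ordered semigroup with zero, and every member of $T(\comp,\subseteq,\nabla)$ is isomorphic to a subalgebra and so inherits those laws. The content is the converse.

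For completeness, let $\mc A=(A,\cdot,0,\leq)$ be a dual ordered semigroup with zero, so that $x\leq 0$ for all $x$ and $0$ is a multiplicative zero. First I would build the compatible unital extension $\mc A^{1'}$ by adjoining ${1'}$ with ${1'}\leq 0$ and ${1'}$ otherwise unrelated to the elements of $A$; the deliberate choice ${1'}\leq 0$ (in contrast to Theorem~\ref{one}, where ${1'}$ is unrelated to $0$) keeps $0$ as the top of $\mc A^{1'}$. Then I would invoke Lemma~\ref{compembed}: the map $a\mapsto\rho_a$ of \eqref{eq:ZZ} embeds the ordered-semigroup reduct of $\mc A$ into $(\Ltrel(A^{1'}),\comp,\subseteq)$, so composition and inclusion are already preserved, and it remains only to pin down totality and the image of $0$. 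Each $\rho_a$ is left-total because $(x,xa)\in\rho_a$; it is right-total because $(0,y)\in\rho_a$ for every $y$, since $0\cdot a=0$ and $y\leq 0$ as $0$ is the top, so the $0$-indexed row is full and $\ran(\rho_a)=A^{1'}$. The same two facts give $\rho_0=\nabla$: for all $x,y$ we have $x\cdot 0=0$ and $y\leq 0$, hence $(x,y)\in\rho_0$ and $\rho_0=A^{1'}\times A^{1'}$. Thus $a\mapsto\rho_a$ is an embedding into $(\Trel(A^{1'}),\comp,\subseteq,\nabla)$ carrying the zero of $\mc A$ to $\nabla$, giving $\mc A\in T(\comp,\subseteq,\nabla)$.

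The main obstacle, such as it is, is the double role of the zero at a single point: we need $0$ to sit at the \emph{top} of $\mc A^{1'}$ so that $\rho_0=\nabla$, while still behaving as a multiplicative zero so that the row $\{(0,y)\mid y\in A^{1'}\}$ is full and supplies surjectivity. Reconciling these is precisely what the choice ${1'}\leq 0$ achieves, and I would verify that this choice remains compatible: the only $a\in A$ with $a\geq{1'}$ is $0$, for which $0\cdot b\geq b$ and $b\cdot 0\geq b$ hold since $0$ is both top and zero, while no $a\in A$ satisfies $a\leq{1'}$, so the downward implication is vacuous. Everything else is inherited verbatim from Lemma~\ref{compembed}. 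As an alternative route one could first prove the dual of Theorem~\ref{one}, namely that $R(\sqsubseteq,*,\emptyset)$ is the class of dual ordered semigroups with zero, and then transport the representation along the embedding $\psi_2$ of Corollary~\ref{embeddings2}; but the direct argument above appears shortest.
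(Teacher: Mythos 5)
Your proposal is correct and follows essentially the same route as the paper's own proof: adjoin an identity with ${1'}<0$ to get a compatible unital extension, apply Lemma~\ref{compembed}, and observe that since $0$ is the top element and a multiplicative zero, every $\rho_a$ is (right-)total and $\rho_0=\nabla$. Your explicit verification of compatibility and of right-totality just fills in details the paper labels as easily seen.
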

\begin{proof}  
Let ${\mc A}=(A,\cdot,0,\leq)$ be a dual ordered semigroup with zero.  Adjoin an identity element satisfying only ${1'}<0$, easily seen to give a compatible unital extension ${\mc B}$ of ${\mc A}$.  By Lemma \ref{compembed}, ${\mc A}$ embeds in $(\Rel(A^{1'}),\comp,\subseteq)$ via $s\mapsto \rho_s$,  \eqref{eq:ZZ}.  Now under this representation, for all $s\in A$, $y\leq 0s=0$ for all $y\in A^{1'}$, so $\rho_s$ is total, and $\rho_0=\{(x,y)\mid x,y\in A^{1'},y\leq 0\}=\nabla$ on $A^{1'}$.  So ${\mc A}$ embeds in $(\Trel(X),\comp,\subseteq,\nabla)$.  The converse was just noted.  
\end{proof}
  
As noted already, $(\Rel(X),*,\emptyset,\sqsubseteq)$ is a dual ordered semigroup with zero, and the hoped-for converse does hold, giving a demonic counterpart of Theorem \ref{one}.

\begin{thm}  \label{two}
$R(*,\sqsubseteq,\emptyset)$ is the class of dual ordered semigroups with zero.
\end{thm}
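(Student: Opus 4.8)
The plan is to prove the two inclusions separately. The inclusion $R(*,\sqsubseteq,\emptyset)\subseteq\{\text{dual ordered semigroups with zero}\}$ is the easy one, already noted: $(\Rel(X),*,\sqsubseteq,\emptyset)$ is a dual ordered semigroup with zero ($*$ is associative with $\emptyset$ a two-sided zero, $\sqsubseteq$ is a compatible partial order, and $s\sqsubseteq\emptyset$ for every $s$), and these properties pass to subalgebras and isomorphic copies. So I would dispatch this direction in a sentence and concentrate on the converse.

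For the converse, let $\mathcal{A}=(A,\cdot,0,\leq)$ be a dual ordered semigroup with zero. First I would invoke (the proof of) Theorem~\ref{cort} to embed $\mathcal{A}$ into $(\Trel(A^{1'}),\comp,\subseteq,\nabla)$ via $s\mapsto\rho_s$ as in \eqref{eq:ZZ}, with $\rho_0=\nabla$; since $\comp$ and $\subseteq$ agree with $*$ and $\sqsubseteq$ on total relations, this is equally an embedding into $(\Trel(A^{1'}),*,\sqsubseteq,\nabla)$. The strategy is then to realise this embedded copy as the $\psi_2$-image of a subalgebra of $(\Rel(X),*,\sqsubseteq,\emptyset)$ and to pull it back, exploiting that $\psi_2$ sends $\emptyset$ to $\nabla$. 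Here I take $X=A^{1'}\setminus\{0\}$ and treat the element $0$ as the point adjoined by $\psi_2$, so that $X_0=A^{1'}$.

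The heart of the argument is to verify that each $\rho_s$ lies in the image of $\psi_2$. This is where the dual-zero hypothesis does all the work. Because $0$ is simultaneously the top element of $A^{1'}$ and absorbing, one computes $\rho_s(0)=\{y\mid y\leq 0s=0\}=A^{1'}$ for every $s$, so the adjoined point is always related to all of $X_0$; and for $x\in X$ one has the chain of equivalences $0\in\rho_s(x)\iff 0\leq xs\iff xs=0\iff\rho_s(x)=\{y\mid y\leq 0\}=A^{1'}$, where $0\leq xs\iff xs=0$ uses that $xs\leq 0$ always. Thus every row of $\rho_s$ other than the $0$-row either omits $0$ (hence lies in $X$, and is nonempty as it contains $xs$) or equals all of $A^{1'}$ — which is exactly the shape of a relation $\psi_2(\sigma)$. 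Hence $\{\rho_s\mid s\in A\}\subseteq\operatorname{im}(\psi_2)$.

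To finish, since $\psi_2$ is an embedding $(\Rel(X),*,\sqsubseteq,\emptyset)\to(\Trel(X_0),\comp,\subseteq,\nabla)$ by Corollary~\ref{embeddings2}, its inverse carries the subalgebra $\{\rho_s\mid s\in A\}$ isomorphically onto a subalgebra of $(\Rel(X),*,\sqsubseteq,\emptyset)$, with $\rho_0=\nabla$ pulling back to $\emptyset$; composing with the embedding of $\mathcal{A}$ exhibits $\mathcal{A}$ in $R(*,\sqsubseteq,\emptyset)$. The main obstacle is precisely the image computation of the previous paragraph: one must check that the \Zar-style representation produces only rows that are ``all or nothing'' at the distinguished point $0$, and it is exactly the conditions $x\leq 0$ for all $x$ together with $0$ being absorbing that force this. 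Without the dual-zero structure the represented relations would in general fall outside $\operatorname{im}(\psi_2)$, and the pullback — and with it the map $0\mapsto\emptyset$ — would be unavailable.
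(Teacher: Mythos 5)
Your proposal is correct and follows essentially the same route as the paper's own proof: represent the dual ordered semigroup with zero as a subalgebra of $(\Trel(X_0),\comp,\subseteq,\nabla)$ via Theorem~\ref{cort}, observe that every $\rho_s$ lies in $\operatorname{ran}(\psi_2)$, and pull back along $\psi_2^{-1}$ using Corollary~\ref{embeddings2}. The only difference is that you spell out the ``all or nothing at $0$'' computation showing $\rho_s\in\operatorname{ran}(\psi_2)$, which the paper merely asserts.
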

\begin{proof} 
Represent the dual ordered semigroup with zero ${\mc A}=(A,\cdot,0,\leq)$ precisely as in the previous proof, as a subalgebra ${\mc A}'$ of $(\Trel(X),\comp,\subseteq,\nabla)$.  Next, note that for each $s\in A$, $\rho_s\in ran(\psi_2)$, so ${\mc A}'\subseteq ran(\psi_2)$, and so by Proposition \ref{embeddings2}, $\psi_2^{-1}$ embeds ${\mc A}'$ in $(\Rel(X),*,\sqsubseteq,\emptyset)$.
\end{proof}

We could instead express Theorem \ref{two} in terms of the converse of the refinement order, leading to an alternative way of viewing ordered semigroups with zero in terms of demonic composition and the converse of demonic refinement.

\subsection{Introducing an identity element}  \label{orderzero}

In \cite{hirsch}, it was shown that $R(\comp,\subseteq,1')$ is not finitely axiomatisable, unlike $R(\comp,\subseteq)$.  Here we consider the result of adding $1'$ to the other signatures.  Note it is shown in \cite[Theorem 10]{submitted}  that $R(*,\sqsubseteq,1')$ has no finite axiomatisation.  Here we tie these facts together, and add in the case of $L_0(\comp,\subseteq,1')$.

\begin{pro}  \label{with1}
$R(*,\sqsubseteq,1')=L_0(\comp,\subseteq, 1')=L(\comp,\subseteq,1')\subsetneq R(\comp,\subseteq,1')$, and none has a finite axiomatisation.
\end{pro}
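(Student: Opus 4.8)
The plan is to establish the three-way equality by a cyclic chain of inclusions, then separate off $R(\comp,\subseteq,1')$, and finally read off non-finite-axiomatisability from the two results cited just before the statement.

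For the equalities I would prove the cycle $R(*,\sqsubseteq,1')\subseteq L_0(\comp,\subseteq,1')\subseteq L(\comp,\subseteq,1')\subseteq R(*,\sqsubseteq,1')$, so that all three classes coincide. The first inclusion is exactly Corollary \ref{embeddings2}: given a subalgebra $\mc A$ of $(\Rel(X),*,\sqsubseteq,1')$, the map $\psi_3$ carries it isomorphically onto a subalgebra of $(\Ltrel_0(X),\comp,\subseteq,1')$ (note $\psi_3(1'_X)=1'_{X_0}$, since $\nabla_{1'_X}=\emptyset$), placing $\mc A\in L_0(\comp,\subseteq,1')$. The second inclusion is essentially definitional: $\Ltrel_0(X)\subseteq\Ltrel(X_0)$, and the diagonal $1'_{X_0}$ used as the identity in the $L_0$ setting lies in $\Ltrel_0(X)$ (as $(0,s)\in 1'_{X_0}$ iff $s=0$), so every member of $L_0(\comp,\subseteq,1')$ is already a left-total algebra of relations on $X_0$, hence lies in $L(\comp,\subseteq,1')$. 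For the third inclusion I would invoke the coincidence of the operations on left-total relations: when $\dom s=\dom t=X$ the condition defining $s\sqsubseteq t$ reduces to $s\subseteq t$, and $s*t=s\comp t$ as already noted in the paper; thus a left-total $(\comp,\subseteq,1')$-algebra is, verbatim, a $(*,\sqsubseteq,1')$-algebra of binary relations, and so lies in $R(*,\sqsubseteq,1')$.

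For the proper inclusion $L(\comp,\subseteq,1')\subsetneq R(\comp,\subseteq,1')$, the inclusion $\subseteq$ is immediate since a left-total algebra is in particular an algebra of binary relations. For properness I would exhibit a member of $R(\comp,\subseteq,1')$ that admits no left-total representation, exploiting the fact that in a left-total algebra the diagonal has no proper sub-identity: if $e\subseteq 1'$ with $e$ left total, then $e=1'$. Concretely, over the base $\set{a,b}$ take $1'=\set{(a,a),(b,b)}$ and $e=\set{(a,a)}$; these are closed under $\comp$ (with $e$ idempotent and absorbed by $1'$), so $(\set{e,1'},\comp,\subseteq,1')\in R(\comp,\subseteq,1')$, yet $e\subsetneq 1'$. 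Any isomorphism onto a left-total algebra must fix the constant $1'$ (sending it to the diagonal) and preserve $\subseteq$, forcing the image of $e$ to be a left-total subset of the diagonal, hence the whole diagonal, hence equal to the image of $1'$, contradicting $e\neq 1'$. So this two-element algebra lies in $R(\comp,\subseteq,1')\setminus L(\comp,\subseteq,1')$.

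Finally, none of the classes is finitely axiomatisable. By the equalities just established, $R(*,\sqsubseteq,1')$, $L_0(\comp,\subseteq,1')$ and $L(\comp,\subseteq,1')$ are a single class, which has no finite axiomatisation by \cite[Theorem 10]{submitted}; and $R(\comp,\subseteq,1')$ has none by \cite{hirsch}. I expect the delicate points to be bookkeeping rather than substance: confirming that the identity used in $L_0$ really is the diagonal $1'_{X_0}\in\Ltrel_0(X)$ so that the second inclusion is legitimate, and checking carefully that $\sqsubseteq$ collapses to $\subseteq$ on left-total relations. The genuinely hard non-axiomatisability content is imported wholesale from the two cited results, so the work here is to correctly identify the three classes as one and to supply the separating example for $R(\comp,\subseteq,1')$.
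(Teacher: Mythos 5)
Your proof follows the paper's own argument essentially verbatim: the same cyclic chain $R(*,\sqsubseteq,1')\subseteq L_0(\comp,\subseteq,1')\subseteq L(\comp,\subseteq,1')\subseteq R(*,\sqsubseteq,1')$ via the embedding $\psi_3$ of Corollary~\ref{embeddings2} and the coincidence of $\comp$ with $*$ and of $\subseteq$ with $\sqsubseteq$ on left-total relations, with non-finite-axiomatisability imported from \cite{submitted} and \cite{hirsch}. The only difference is cosmetic: where the paper separates $L(\comp,\subseteq,1')$ from $R(\comp,\subseteq,1')$ by noting that the law $s\subseteq 1'\Rightarrow s=1'$ holds in the former but not the latter, you instantiate the failure of that same law with the explicit two-element witness $e=\set{(a,a)}\subsetneq 1'$ over the base $\set{a,b}$, which is a correct, self-contained realisation of the identical idea.
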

\begin{proof}
From Proposition \ref{embeddings}, $\psi_3$ embeds any subalgebra of $\Rel(*,\sqsubseteq,1')$ into $\Ltrel_0(\comp,\subseteq,1')$, and so 
$$R(*,\sqsubseteq,1')\subseteq L_0(\comp,\subseteq, 1')\subseteq L(\comp,\subseteq,1')=L(*,\sqsubseteq,1')\subseteq R(*,\sqsubseteq,1'),$$
and the equalities follow.  In \cite[Theorems 7, 10]{submitted}, it is shown that $L(\comp,\subseteq,1')$, and indeed $R(*,\sqsubseteq,1')$ is not finitely axiomatisable, and noted that the former class is properly contained in $R(\comp,\subseteq,1')$ (since  the law $s\subseteq 1' \Rightarrow s=1'$ holds in the former class but not the latter).  The class $R(\comp,\subseteq,1')$ also fails to be finitely axiomatisable as shown in \cite{hirsch}.
\end{proof}

\subsection{Negative results involving union}  \label{union}

It is known that $R(\comp,\cup)$ has no finite axiomatisation; see \cite[Theorem~3.1]{andreka}.  So it is natural to ask whether the same holds for $R(*,\ccup)$, and similarly for some of the other cases.  In fact we have the following.

\begin{pro}  \label{ltequal}
$R(\comp,\cup)=L(\comp,\cup)=L_0(\comp,\cup)=T(\comp,\cup)=R(*,\ccup)$.  Hence none can be finitely axiomatised.
\end{pro}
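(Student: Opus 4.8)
The plan is to establish a short cycle of inclusions among the five classes, exploiting the embeddings $\psi_1,\psi_2$ of Proposition \ref{embeddings} together with the fact (noted just before Subsection \ref{orderonly}) that angelic and demonic operations coincide on left total relations; the failure of finite axiomatisability will then transfer from Andr\'eka's result for $R(\comp,\cup)$.

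First I would record the inclusions that come for free from containments of relation types. Since every total relation is left total, and every element of $\Ltrel_0(X)$ is a left total relation on $X_0$, an algebra of total relations (respectively an algebra sitting inside some $\Ltrel_0(X)$) is in particular an algebra of left total relations; this gives $T(\comp,\cup)\subseteq L(\comp,\cup)$, $L_0(\comp,\cup)\subseteq L(\comp,\cup)$, and trivially $L(\comp,\cup)\subseteq R(\comp,\cup)$. Next, applying the embedding $\psi_1\colon(\Rel(X),\comp,\cup)\to(\Ltrel_0(X),\comp,\cup)$ of Proposition \ref{embeddings} to any subalgebra of $(\Rel(X),\comp,\cup)$ produces an isomorphic copy inside $\Ltrel_0(X)$ closed under $\comp,\cup$, so $R(\comp,\cup)\subseteq L_0(\comp,\cup)$. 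Combining these containments already yields $R(\comp,\cup)=L(\comp,\cup)=L_0(\comp,\cup)$.

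The crux is the inclusion $L(\comp,\cup)\subseteq R(*,\ccup)$. Here I would take a member of $L(\comp,\cup)$, realised as a set $\c T$ of left total relations on some set $Y$ closed under $\comp$ and $\cup$. Because demonic composition agrees with angelic composition on left total relations, and demonic join agrees with union on them, the restrictions of $*$ and $\ccup$ to $\c T$ coincide pointwise with $\comp$ and $\cup$; hence $\c T$ is also closed under $*,\ccup$ and $(\c T,*,\ccup)=(\c T,\comp,\cup)$ as abstract algebras. Thus the very same $\c T$ is a subalgebra of $(\Rel(Y),*,\ccup)$ isomorphic to the original algebra, giving the claim. For the remaining inclusion $R(*,\ccup)\subseteq T(\comp,\cup)$ I would invoke $\psi_2\colon(\Rel(X),*,\ccup)\to(\Trel(X_0),\comp,\cup)$ from Proposition \ref{embeddings}: a subalgebra of $(\Rel(X),*,\ccup)$ has as its $\psi_2$-image an isomorphic subalgebra of $(\Trel(X_0),\comp,\cup)$, so it lies in $T(\comp,\cup)$.

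Assembling the chain $R(\comp,\cup)=L(\comp,\cup)\subseteq R(*,\ccup)\subseteq T(\comp,\cup)\subseteq L(\comp,\cup)=R(\comp,\cup)$ forces equality throughout, so together with the second paragraph all five classes coincide. Finally, since these are literally one class of abstract algebras in two binary operations (under the identifications $\comp\leftrightarrow *$ and $\cup\leftrightarrow\ccup$), any one is finitely axiomatisable if and only if all are; as $R(\comp,\cup)$ has no finite axiomatisation by \cite[Theorem~3.1]{andreka}, none of them does. The only genuinely substantive point is the crux inclusion, and even there the work is purely bookkeeping: a left total angelic representation is simultaneously a demonic one, so the proposition is in effect a corollary of the structural maps already in hand, with every step being routine transport along the established embeddings.
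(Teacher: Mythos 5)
Your proof is correct and follows essentially the same route as the paper: it uses $\psi_1$ and $\psi_2$ from Proposition \ref{embeddings} for the inclusions $R(\comp,\cup)\subseteq L_0(\comp,\cup)$ and $R(*,\ccup)\subseteq T(\comp,\cup)$, closes the cycle via the coincidence of $\comp,\cup$ with $*,\ccup$ on left total relations, and transfers non-finite-axiomatisability from \cite[Theorem~3.1]{andreka}. The only difference is cosmetic bookkeeping in how the chain of inclusions is assembled.
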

\begin{proof}
Every algebra of binary relations $(A,\comp,\cup)\subseteq (\Rel(X),\comp,\cup)$ embeds in $(\Ltrel_0(X),\comp,\cup)\in L_0(\comp,\cup)$ by Proposition \ref{embeddings}, so $R(\comp,\cup)\subseteq L_0(\comp,\cup)\subseteq L(\comp,\cup)\subseteq R(\comp,\cup)$, so all are equal.  By the same result, ${\mc A}=(A,*,\ccup)\subseteq (\Rel(X),*,\ccup)\in R(*,\ccup)$ embeds in $(\Trel(X),\comp,\cup),$ and so $R(*,\ccup)\subseteq T(\comp,\cup)\subseteq L(\comp,\cup)=L(*,\ccup)\subseteq R(*,\ccup)$.  By Theorem $3.1$ in \cite{andreka}, $R(\comp,\cup)$ cannot be finitely axiomatised.
\end{proof}

By Theorem $3.1$ in \cite{andreka}, $R(\comp,\cup,\emptyset)$ has no finite axiomatisation.  It follows easily from Proposition \ref{embeddings} that $R(\comp,\cup,\emptyset)\subseteq L_0(\comp,\cup,{\bf 0})$, and indeed the inclusion is proper because the law $x\cup \emptyset=x$ holds in the former, but $x\cup {\bf 0}=x$ fails in the latter.

Regarding signatures containing the diagonal relation $1'$, we have the following, some of which is shown in \cite{submitted}.

\begin{pro}  \label{cupwith1}
$R(*,\ccup,1')=L_0(\comp,\cup, 1')=L(\comp,\cup,1')\subsetneq R(\comp,\cup,1')$, and none has a finite axiomatisation.
\end{pro}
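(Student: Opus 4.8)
The plan is to mirror the proof of Proposition \ref{with1}, replacing the order/refinement operators by their union counterparts and drawing on the third clause of Proposition \ref{embeddings}. First I would use that $\psi_3$ is an embedding $(\Rel(X),*,\ccup,1')\rightarrow (\Ltrel_0(X),\comp,\cup,1')$, so any subalgebra of $(\Rel(X),*,\ccup,1')$ embeds into $(\Ltrel_0(X),\comp,\cup,1')$, giving $R(*,\ccup,1')\subseteq L_0(\comp,\cup,1')$. Next, every member of $\Ltrel_0(X)$ is a left-total relation on $X_0$ and (by the computation $\psi_3(1'_X)=1'_{X_0}$) the distinguished element is the diagonal on the base $X_0$, so $L_0(\comp,\cup,1')\subseteq L(\comp,\cup,1')$. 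Over left-total relations demonic composition agrees with angelic composition and demonic join agrees with union, as recorded at the start of Section \ref{orderpos}, whence $L(\comp,\cup,1')=L(*,\ccup,1')$, and the latter is trivially contained in $R(*,\ccup,1')$. This closes the cycle
$$R(*,\ccup,1')\subseteq L_0(\comp,\cup,1')\subseteq L(\comp,\cup,1')=L(*,\ccup,1')\subseteq R(*,\ccup,1'),$$
forcing all four classes to coincide.

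For the proper inclusion $L(\comp,\cup,1')\subsetneq R(\comp,\cup,1')$, I would exhibit a quasi-equational law valid in the left-total class but violated by a general relation algebra. The natural candidate is $s\cup 1'=1'\Rightarrow s=1'$: in a left-total algebra any $s$ with $s\cup 1'=1'$ (equivalently $s\subseteq 1'$) is both contained in the diagonal and left-total, hence $(x,x)\in s$ for every base point and so $s=1'$; but in $(\Rel(X),\comp,\cup,1')$ with $|X|\geq 2$ the proper subidentity $s=\{(a,a)\}$ satisfies $s\cup 1'=1'$ yet $s\neq 1'$, and $\{s,1'\}$ is closed under $\comp$ and $\cup$, so it is a genuine member of $R(\comp,\cup,1')$ in which the law fails. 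This separates the two classes.

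Finally, for the absence of a finite axiomatisation I would appeal to \cite{submitted}, where one of these (now provably equal) classes is shown to admit none; by the equalities just established the conclusion transfers to all of $R(*,\ccup,1')$, $L_0(\comp,\cup,1')$ and $L(\comp,\cup,1')$. The main obstacle is precisely this last point. Unlike the equalities, which follow mechanically from Proposition \ref{embeddings} and left-total interchangeability, non-finite-axiomatisability does not pass automatically from the $1'$-free result of Proposition \ref{ltequal}, since adjoining a constant can in principle repair a failure of finite axiomatisability. The genuine work therefore lies in the cited construction — an infinite family of non-representable algebras whose ultraproduct is representable, adapted so as to carry the diagonal — and my plan is to invoke it rather than reproduce it.
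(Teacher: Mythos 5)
Your treatment of the three equalities and of the proper inclusion matches the paper's own argument almost exactly: the paper proves the equalities ``exactly as for Proposition \ref{with1}'' via the cycle of inclusions coming from the $\psi_3$ clause of Proposition \ref{embeddings} together with the interchangeability of $\comp,\cup$ with $*,\ccup$ over left-total relations, and it separates $L(\comp,\cup,1')$ from $R(\comp,\cup,1')$ by the same law $s\subseteq 1'\Rightarrow s=1'$, recast equationally as $s\cup 1'=1'\Rightarrow s=1'$, with the same kind of proper-subidentity counterexample. (Incidentally, your reference is the right one: the paper's text cites Proposition \ref{embeddings2} here, but for the union signature it is the third clause of Proposition \ref{embeddings} that is actually needed, as you use.)

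There is, however, a genuine gap at the end. The statement says \emph{none} of the classes listed has a finite axiomatisation, and that includes $R(\comp,\cup,1')$, which you have just shown to be a \emph{proper} superclass of the other three. Non-finite-axiomatisability does not transfer between a class and a proper superclass in either direction, so your appeal to \cite{submitted} disposes only of $R(*,\ccup,1')=L_0(\comp,\cup,1')=L(\comp,\cup,1')$ and leaves $R(\comp,\cup,1')$ entirely unaddressed. The paper closes this case separately by citing Theorem 3.1 of \cite{andreka}, which establishes that $R(\comp,\cup,1')$ (indeed, representable algebras in the angelic signature with union and identity) is not finitely axiomatisable; you need that citation, or an independent argument, to complete the proof. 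Your closing caveat --- that the $1'$-free non-axiomatisability of Proposition \ref{ltequal} does not automatically lift to signatures with $1'$ --- is correct and is exactly why the paper leans on \cite{submitted} for $R(*,\ccup,1')$; the same caution should have alerted you that the fourth class also needs its own, separate, source.
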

\begin{proof}
Proof that the three equalities hold is exactly as for Proposition \ref{with1} but makes use of Proposition \ref{embeddings2} rather than Proposition \ref{embeddings}.  That $R(\comp,\cup,1')$ has no finite axiomatisation follows from Theorem $3.1$ in \cite{andreka}. That $R(*,\ccup,1')$ has no finite axiomatisation follows from Theorem $10$ in \cite{submitted}.  That $L(\comp,\cup,1')\subsetneq R(\comp,\cup,1')$ follows for the same reason that $L(\comp,\subseteq,1')\subsetneq R(\comp,\sqsubseteq,1')$ given earlier: the law $s\subseteq 1' \Rightarrow s=1'$ holds in the former class but not the latter (and recalling that $s\subseteq t$ is equivalent to the equation $\cup t=t$).
\end{proof}

In this subsection, all signatures are purely algebraic, so we can also consider varieties generated by our isomorphism classes.
First note that every member of both $R(\comp,\cup)$ and $R(*,\ccup)$ is an idempotent semiring, meaning a structure $(A,\cdot,+)$ such that 
\bi
\item $(A,\cdot)$ is a semigroup;
\item $(A,+)$ is a semilattice; and 
\item $\cdot$ is additive with respect to $+$, meaning it satisfies the distributive law
$$(a+b)\cdot(c+d)=a\cdot c+a\cdot d+b\cdot c+b\cdot d.$$
\ei

For a class $M$ of algebraic structures, let $VM$ denote the variety generated by $M$.  That $VR(\comp,\cup)$ is the class of idempotent semirings follows from \cite[Proposition~4.4]{andreka}.  But because $R(\comp,\cup)=R(*,\ccup)$, it follows that $VR(\comp,\cup)= VR(*,\ccup)$, and so the latter is also the class of idempotent semirings.

An idempotent semiring has a {\em zero} $0$ if $a+0=a$ for all $a$.  Again by \cite[Proposition~4.4]{andreka}, $VR(\comp,\cup,\emptyset)$ is the class of idempotent semirings with zero.  Then $VR(\comp,\cup,\emptyset)\subsetneq VL_0(\comp,\cup,{\bf 0})$ by the first part of Proposition \ref{embeddings} as well as the fact that $s\cup {\bf 0}=s$ does not hold in the latter.

An idempotent semiring has an {\em identity} $1$ if $a1=1a=a$ for all $a$.  Again by \cite[Proposition~4.4]{andreka}, $VR(\comp,\cup,1')$ is the class of idempotent semirings with identity.  The same result also gives that $VR(\comp,\cup,\emptyset,1')$ is the class of idempotent semirings with zero and identity.

\section{Results for ``mixed" signatures}  \label{mixed}

\subsection{Negative result: $R(*,\subseteq)$ is not finitely axiomatisable}\label{neg}  

In this section we consider the signature obtained by mixing 
demonic composition~$*$ and ``angelic order"~$\subseteq$, 
and consider the representation class 
$R(\subseteq, *)$ consisting of the 
isomorphic copies of substructures of full families of binary relations with demonic composition and inclusion.
Occasionally, we will expand this signature to include the identity constant $1'$ as well.

Valid formulas include the reflexivity, transitivity and antisymmetry of $\subseteq$ and the associativity of $*$, as we saw.   
A further valid property involving both $\subseteq$ and $*$ is right monotonicity: 
$t_1\subseteq t_2\rightarrow s*t_1\subseteq s*t_2$.  
Left monotonicity can fail though, since demonic composition is defined as the conjunction of two parts 
--- the positively monotonic first part of ordinary relation composition, 
while the universally quantified second part is negatively monotonic in $s$, 
the conjunction need not be positively or negatively monotonic.
However, a restricted version of left monotonicity holds:
\begin{equation}\label{eq-val}
(s_0\subseteq s_1\land s_0\subseteq s_1*t)\to s_0*t\subseteq s_1*t
\end{equation}
as can be seen by the following argument.
Assume the antecedant and assume $(x,y)\in s_0*t$. Then there is $z$ such that $(x,z)\in s_0\subseteq s_1$ and $(z,y)\in t$.
Thus $(x,y)\in s_1\comp t$.
We still have to check the second part in the definition of demonic composition.
Since $(x,z)\in s_0\subseteq s_1*t$, 
for every $w$ with $(x,w)\in s_1$, we have $v$ such that $(w,v)\in t$.
Thus the  assumption $(x,w)\in s_1$ implies $(w,v)\in t$ as desired.
The quasi-equation \eqref{eq-val} may be generalised as follows.

\begin{lemma}\label{lem:q}
The quasi-equation
\begin{equation}\label{eq-valn}
(s_0\subseteq s_n\land \bigwedge_{i <n} s_i\subseteq s_{i+1}*t)\to s_0*t\subseteq s_n*t
\end{equation}
is valid over $R(\subseteq, *)$, 
for each $n<\omega$.
\end{lemma}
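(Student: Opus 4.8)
The plan is to generalise the single-step argument already given for \eqref{eq-val} to the $n$-step chain, and the natural tool is induction on $n$. The base case $n=1$ is exactly \eqref{eq-val}, which has been verified. For the inductive step, suppose the result holds for chains of length less than $n$ and assume the antecedent of \eqref{eq-valn}: $s_0\subseteq s_n$ together with $s_i\subseteq s_{i+1}*t$ for all $i<n$. The difficulty is that the hypotheses link consecutive $s_i$ only through the composite $s_{i+1}*t$, not directly, so one cannot simply telescope inclusions. The key observation to extract is that demonic composition forces a domain condition: if $s_i\subseteq s_{i+1}*t$ then in particular $\dom(s_i)\subseteq\dom(s_{i+1}*t)$, and by \eqref{eq:fwd} every $x\in\dom(s_{i+1}*t)$ satisfies $s_{i+1}(x)\subseteq\dom(t)$. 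So along the chain we accumulate domain information about the forward images of the $s_{i+1}$.

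Concretely I would argue pointwise. Fix $(x,y)\in s_0*t$; I must show $(x,y)\in s_n*t$, i.e.\ both $(x,y)\in s_n\comp t$ and $s_n(x)\subseteq\dom(t)$. From $(x,y)\in s_0*t$ we get some $z$ with $(x,z)\in s_0$ and $(z,y)\in t$, and $s_0(x)\subseteq\dom(t)$. Since $s_0\subseteq s_n$, the witness gives $(x,z)\in s_n$ and hence $(x,y)\in s_n\comp t$, so the first half is immediate, just as in the $n=1$ case. The real content is the second half, $s_n(x)\subseteq\dom(t)$. Here I would chase the chain: because $(x,z)\in s_0$ and $s_0\subseteq s_{1}*t$ (taking $i=0$), we learn that $x\in\dom(s_{1}*t)$, so $s_1(x)\subseteq\dom(t)$; continuing, from $x\in\dom(s_i)$ (which we can re-derive since $x\in\dom(s_0)$ and each inclusion $s_i\subseteq s_{i+1}*t$ pushes membership forward) we obtain $x\in\dom(s_{i+1}*t)$ and thus $s_{i+1}(x)\subseteq\dom(t)$ for each $i<n$. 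Taking $i=n-1$ yields exactly $s_n(x)\subseteq\dom(t)$, completing the verification.

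The main obstacle, and the point needing care, is justifying that $x$ stays in the domain of each $s_i$ as the index climbs, so that the hypothesis $s_i\subseteq s_{i+1}*t$ can be applied at each stage. The mechanism is that $s_i\subseteq s_{i+1}*t$ entails $\dom(s_i)\subseteq\dom(s_{i+1}*t)\subseteq\dom(s_{i+1})$, the last inclusion because $s_{i+1}*t\subseteq s_{i+1}\comp t$ and $\dom(s_{i+1}\comp t)\subseteq\dom(s_{i+1})$. Thus $x\in\dom(s_0)$ propagates to $x\in\dom(s_i)$ for every $i\le n$, which is precisely what licenses the repeated use of the domain condition carried by demonic composition. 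Once this propagation is isolated as a preliminary remark, both halves of $(x,y)\in s_n*t$ follow cleanly and the induction (or equivalently the direct finite chase) closes. I would present it as a direct pointwise argument rather than formal induction, since the chain of implications is uniform in the index and avoids restating the quasi-equation at each step.
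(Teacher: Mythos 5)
Your proof is correct and follows essentially the same route as the paper: both decompose $(x,y)\in s_n*t$ into its compositional part, obtained immediately from $s_0\subseteq s_n$, and the universal domain condition $s_n(x)\subseteq\dom(t)$, obtained by iterating the hypotheses $s_i\subseteq s_{i+1}*t$ along the chain until one reaches $x\in\dom(s_n*t)$. The only difference is bookkeeping: the paper packages the iteration as a chain of relational inclusions $s_0*t\subseteq (s_n*t)\comp t\comp\cdots\comp t$, whereas you propagate the pointwise fact $x\in\dom(s_i)$ step by step, which is the same mechanism read at the level of domains.
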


\begin{proof} 
Assume $s_0\subseteq s_n$ and $s_i\subseteq s_{i+1}*t$ for $i<n$.
We use the fact that $a*b\subseteq a;b$ repeatedly, and that~$;$ is both left and right monotonic.
So, 
\begin{align}
\label{a:1}s_0*t&\subseteq s_0;t\subseteq s_n;t \\ 
\intertext{and}
\label{a:2}s_0*t&\subseteq s_0;t\subseteq (s_1*t);t\subseteq s_1;t;t\subseteq\ldots\subseteq s_{n-1};\overbrace{t;\ldots;t}^n\subseteq (s_n*t);\overbrace{t;\ldots;t}^n.
\end{align}
By \eqref{a:1}, if $(x, y)\in s_0*t$, then $(x, y)\in s_n;t$  and by~\eqref{a:2} there is $w$ such that  $(x, w)\in s_n*t$ and $(w, y)\in\overbrace{t;\ldots;t}^n$.  
By $(x, w)\in s_n*t$, we get  $\forall v\;((x, v)\in s_n\rightarrow\exists u\; (v, u)\in t)$.   
It follows, by~\eqref{def:demonic}, that $(x, y)\in s_n*t$.  
This proves $s_0*t\subseteq s_n*t$.
\end{proof}

We will use  quasi-equations \eqref{eq-valn} to design a sequence of non-representable 
structures $(\c A_n: n<\omega)$ 
such that a non-principal ultraproduct $\c A$ is representable.

\subsection*{Non-representable structures}

We define structures in the  signature $(\leq, *,1')$ 
(we write $\leq$ in place of $\subseteq$ to remind us 
that these structures are abstract and might not be representable).  
Fix $n<\omega$.   Let $\Sigma=\set{t, s_i:i\leq n}$ be an alphabet with $n+2$ characters
and $\Sigma^*$ denote the set of finite words, including the empty word $\Lambda$, over $\Sigma$.  $\Sigma^+=\Sigma^*\setminus\set{\Lambda}$.  
We will denote the general elements of $\Sigma^*$ by lower case Greek letters.
The concatenation of $k$ copies of a character $x\in\Sigma$ is denoted by $x^k$,
so $x^0=\Lambda$.
Define  three binary relations $L, <$ and $\leq$ over $\Sigma^*$ by
\begin{align}
\nonumber
L&=\set{(s_0, s_n)}\cup\set{(s_i, s_{i+1}t):i<n}\\
\label{eq-order}  <&=\set{(\alpha \sigma, \alpha\sigma^+):\alpha\in\Sigma^*,\; (\sigma, \sigma^+)\in L}\\
\nonumber\leq&={} < {} \cup {} \set{(\alpha, \alpha):\alpha\in\Sigma^*}
\end{align} 
and let $\c A_n=(\Sigma^*,\leq, *,\Lambda)$ where $*$ is concatenation 
(thus, for the remainder of this section, we may drop $* $ from $\alpha*\beta$ 
and simply write $\alpha\beta$).  Observe that there are no chains $a<b<c$ in $\c A_n$ (so $<$ and $\leq$ are transitive) and for each $a$ there is at most one $b$ such that $b<a$.
Also note
\begin{equation}\label{eq:alpha beta} \alpha\beta\geq\alpha\rightarrow\beta=\Lambda.
\end{equation}

\begin{lemma}\label{lem-nr}
For $1\leq n<\omega$, the $(\leq,*)$-reduct $\c B_n$ of 
$\c A_n$ is not $(\subseteq,*)$-representable.
\end{lemma}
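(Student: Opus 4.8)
The plan is to use the valid quasi-equation \eqref{eq-valn} of Lemma \ref{lem:q} as an obstruction to representability: I will exhibit an instance of \eqref{eq-valn} whose hypotheses hold in $\c A_n$ but whose conclusion fails. Since \eqref{eq-valn} mentions only $\leq$ and $*$, this witnesses the failure already in the reduct $\c B_n$, and as every member of $R(\subseteq,*)$ satisfies \eqref{eq-valn}, it will follow that $\c B_n\notin R(\subseteq,*)$. Throughout I read the abstract order $\leq$ for $\subseteq$ and recall that in $\c A_n$ the operation $*$ is word concatenation, so the terms appearing in \eqref{eq-valn} evaluate to $s_{i+1}*t=s_{i+1}t$, $s_0*t=s_0t$ and $s_n*t=s_nt$. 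The natural instance to try is the one in which the variables $s_0,\dots,s_n,t$ are interpreted as the corresponding letters of $\Sigma$.

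First I would check the antecedent. The pair $(s_0,s_n)$ belongs to $L$, so taking $\alpha=\Lambda$ in \eqref{eq-order} gives $s_0<s_n$, hence $s_0\leq s_n$, which is the clause $s_0\subseteq s_n$. Similarly, for each $i<n$ the pair $(s_i,s_{i+1}t)$ belongs to $L$, so again with $\alpha=\Lambda$ we obtain $s_i<s_{i+1}t$, hence $s_i\leq s_{i+1}t=s_{i+1}*t$. Thus every conjunct of the antecedent of \eqref{eq-valn} holds in $\c A_n$.

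The crux is to refute the conclusion $s_0*t\subseteq s_n*t$, that is, $s_0t\leq s_nt$. As $1\leq n$, the letters $s_0$ and $s_n$ are distinct, so $s_0t\neq s_nt$ and it is enough to rule out $s_0t<s_nt$. Suppose, for a contradiction, that $(s_0t,s_nt)=(\alpha\sigma,\alpha\sigma^+)$ for some $\alpha\in\Sigma^*$ and $(\sigma,\sigma^+)\in L$. Then $\alpha$ is a common prefix of $s_0t$ and $s_nt$; since $s_0\neq s_n$ these words already differ in their first letter, so $\alpha=\Lambda$ and hence $(s_0t,s_nt)\in L$. But every first component of a pair in $L$ is a single letter, whereas $s_0t$ has length two, so this is impossible. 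Hence $s_0t\not<s_nt$, and the conclusion of \eqref{eq-valn} indeed fails in $\c A_n$.

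I expect the only point requiring care to be this final refutation: one must ensure that no prefix $\alpha$ and no pair of $L$ can conspire to produce $s_0t\leq s_nt$, and here the two facts that $s_0$ and $s_n$ are distinct single letters and that the first components of $L$ all have length one (so that the prefix collapses to $\Lambda$, in the spirit of \eqref{eq:alpha beta}) do all the work. With the antecedent verified and the conclusion refuted, Lemma \ref{lem:q} gives $\c B_n\notin R(\subseteq,*)$, as required.
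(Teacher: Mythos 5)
Your proposal is correct and follows essentially the same route as the paper: both use the validity of the quasi-equation \eqref{eq-valn} over $R(\subseteq,*)$ (Lemma \ref{lem:q}), verify its antecedent in $\c A_n$ by taking $\alpha=\Lambda$ in \eqref{eq-order}, and observe that the conclusion $s_0t\leq s_nt$ fails. The only difference is that you spell out the prefix argument showing $s_0t\not\leq s_nt$, a detail the paper leaves implicit.
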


\begin{proof}
If $\c B_n$ were representable, then it would satisfy~\eqref{eq-valn}.  
But, letting $\alpha$ be the empty string in~\eqref{eq-order}, 
we have $s_0\leq s_n$ and $s_i\leq s_{i+1}t$ for all $i<n$, so the premise of~\eqref{eq-valn} is true.  
However, the conclusion $s_0t\leq s_nt$ fails.
\end{proof}

It remains to prove that a non-principal ultraproduct of the $\c A_n$s is representable.  
In order to do that we define 
a representation game $\Gamma_k(\c A)$.

\subsection*{Representation games}
For a structure $\c A=(A,\le, *, 1')$ and $S, T\subseteq A$, we write $S* T$ for 
$\set{s* t:s\in S,\; t\in T}$.   
We let $S^\uparrow=\set{s^+\in A:(\exists s\in S)s\leq s^+}$, 
and for $a\in A$ we write $a^\uparrow$ for $\set{a}^\uparrow$.

A \emph{network} $\c N=(X, N, \O)$ over $\c A$ consists of a set $X$ of nodes, 
an edge labelling function $N\colon X\times X\rightarrow \wp(A)$ 
and a function $\O\colon X\rightarrow \wp(A)$ such that $N(x, y)= N(x, y)^\uparrow$
and $1' \in N(x,y)$ iff $x=y$ for all $x, y\in X$.
A network $\c N$ is \emph{consistent} if $N(x, y)\cap \O(x)= \emptyset$ for all $x, y\in X$.  

For two networks $\c N=(X, N, \O)$ and $\c N'=(X', N', \O')$ we write $\c N\subseteq\c N'$ 
if $X\subseteq X'$, $N(x, y)\subseteq N'(x, y)$ and $\O(x)\subseteq \O'(x)$, for all $x, y\in X$.    
To simplify notation we will use the same symbol $N$ to stand for a network, 
its set of nodes and its edge labelling function.

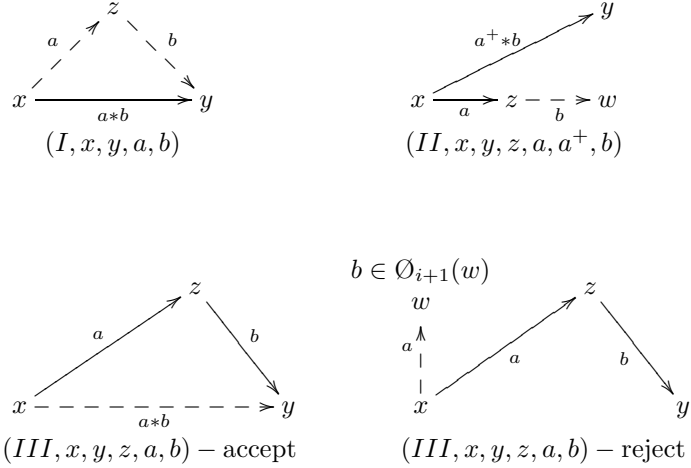
\begin{figure}[H]
\[
\begin{array}{cc}
\xymatrix{
&z\ar@{-->}[rd]^{b}&\\
x\ar@{-->}[ru]^a\ar@{->}[rr]_{a* b}&&y
}
&
\xymatrix{
&&y\\
x\ar@{->}[rru]^{a^+* b}\ar@{->}[r]_a&z\ar@{-->}[r]_b&w
}
\\
(\ref{en:witness}, x, y, a, b)&(\ref{en:demonic}, x, y, z, a, a^+, b )\\
\xymatrix{&&\\
{\begin{array}{c} \hspace{.8in} \\ \; \end{array}}
&z\ar@{->}[rd]^b&\\
   x\ar@{->}[ru]^a\ar@{-->}[rr]_{a* b}&&y
}
&
\xymatrix{
&&\\
{\begin{array}{c} b\in\O_{i+1}(w)\\ w\end{array}}&z\ar@{->}[rd]_b&\\
x\ar@{->}[ru]_a\ar@{-->}[u]^a&&y
}
\\
\;\;\;\;\;\;\;\;\;\;(\ref{en:choose}, x, y, z, a, b) - \mbox{accept}&\;\;\;\;\;\;\;\;(\ref{en:choose}, x, y, z, a, b) -\mbox{reject}
\end{array}\]
\caption{\label{fig:moves} Three types of $\forall$-moves and  $\exists$'s responses.  
Solid arrows are selected by $\forall$ and dashed arrows are newly labelled or refined by $\exists$.}
\end{figure}

The game $\Gamma_k(\c A)$ is played by two players $\forall$ (universal) and $\exists$ (existential), 
who play a sequence of networks $N_0\subseteq N_1\subseteq\ldots\subseteq N_i\subseteq\ldots$
($i<k$), as follows.  
In the initial round, $\forall$ picks any $a_0\not\leq b_0\in\c A$ and 
$\exists$ has to play a network $N_0$ such that there are two nodes $x_0, y_0$ 
with $a_0\in N_0(x_0, y_0)$ but $b_0\not\in N_0(x_0, y_0)$.    
In a later round if the current network is $N_i$, $\forall$  can make three types of moves.
\begin{enumerate} 
\renewcommand{\theenumi}{\Roman{enumi}} 
\renewcommand{\labelenumi}{\Roman{enumi}} 
\item\label{en:witness}
\emph{Witness move} $(\ref{en:witness}, x, y, a, b)$:
$\forall$ picks $x, y\in N_i$ and  $a,  b\in A$ such that $a* b\in N_i(x, y)$.
$\exists$  is required to respond with $N_{i+1}$ such that $a\in  N_{i+1}(x, z)$ and $ b\in  N_{i+1}(z, y)$ 
for some $z\in N_{i+1}$.
\item\label{en:demonic}  
\emph{Demonic move} $(\ref{en:demonic}, x, y, z, a, a^+, b )$:
$\forall$ picks $x, y, z\in N_i$ and $a,  a^+, b\in A$ where $a\leq a^+$, $a^+* b\in N_i(x, y)$ and 
$a\in N_i(x, z)$.
$\exists$ must play $N_{i+1}$ where there is $w\in N_{i+1}$ with $b\in N_{i+1}(z, w)$.
\item \label{en:choose} 
\emph{Choice move} $(\ref{en:choose}, x, y, z, a, b)$:
$\forall$ picks $x, y, z\in N_i$ and  $a\in  N_i(x, z)$, $b\in N_i(z, y)$.
In this case, $\exists$ may either \emph{accept} and ensure $a* b\in  N_{i+1}(x, y)$  
(by refining the label if necessary),
or \emph{reject} and ensure there is $w\in N_{i+1}$ such that 
$a\in N_{i+1}(x, w)$ and $b\in \O_{i+1}(w)$.
\end{enumerate}
See Figure~\ref{fig:moves} for an illustration of these three kinds of moves. 
A move is \emph{trivial} if $N_{i+1}=N_i$ is a legal response, 
e.g.\  a move $(\ref{en:witness}, x, y, a, 1')$ is trivial, for $a*1'=a\in N(x, y)$.  

In any round, if $\exists$ plays a network $N$ where  $b_0\in N(x_0, y_0)$  or 
$N$ is inconsistent (i.e., there are $x, y\in N$ with $N(x, y)\cap\O(x)\neq\emptyset$), 
then $\forall$ wins.
If $\forall$ does not win in any round, then $\exists$ wins the game.

The game may easily be modified to test $(\subseteq, *)$-representability, 
by simply omitting the requirement $1'\in N(x, y)$ iff $x=y$ from the definition of network.

\begin{pro}\label{pro-game}
Let $\c A$ be countable or finite. 
$\c A$ is $(\subseteq, *,1')$-representable if and only if $\exists$ has a winning strategy in 
$\Gamma_\omega(\c A)$.
\end{pro}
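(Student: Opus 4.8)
The plan is to prove both directions of this back-and-forth characterisation in the style of standard representation games.

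For the easy direction, suppose $\c A$ is $(\subseteq,*,1')$-representable, say via an embedding $\theta\colon\c A\to(\Rel(U),*,\subseteq,1')$ for some base set $U$. I would give $\exists$ a winning strategy in $\Gamma_\omega(\c A)$ by maintaining throughout the game the invariant that every network $N$ she plays is \emph{realised} in the representation: there is a map $\nu$ from the nodes of $N$ to $U$ such that $a\in N(x,y)$ implies $(\nu(x),\nu(y))\in\theta(a)$, and $b\in\O(x)$ implies $(\nu(x))\notin\dom(\theta(b))$ (so that consistency is automatic). In the initial round, since $\theta(a_0)\not\subseteq\theta(b_0)$, pick a witnessing pair $(u_0,u_0')\in\theta(a_0)\setminus\theta(b_0)$ and set $\nu(x_0)=u_0,\nu(y_0)=u_0'$. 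For each of the three move types, the defining property of demonic composition~\eqref{def:demonic} supplies the node $\exists$ needs: a witness move uses the first (existential) conjunct of $\theta(a)*\theta(b)$ to locate the intermediate point $z$; a demonic move uses the second (universal) conjunct, which guarantees that the image of $\nu(z)$ lies in $\dom(\theta(b))$; and a choice move is handled by checking whether $(\nu(z))\in\dom(\theta(b))$, accepting if so (the pair then lies in $\theta(a*b)$) and rejecting otherwise (placing $b$ in $\O$ at a fresh node witnessing $(\nu(z))\notin\dom(\theta(b))$). The invariant then rules out both of $\forall$'s winning conditions, so $\exists$ survives all $\omega$ rounds.

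For the harder direction, suppose $\exists$ has a winning strategy $\varsigma$ in $\Gamma_\omega(\c A)$. Because $\c A$ is countable, I would run a single play in which $\forall$ is \emph{scheduled} to eventually make every legal move (a standard dovetailing argument: enumerate all potential $\forall$-moves and interleave them, re-scheduling as new nodes appear), producing against $\varsigma$ a limit network $N_\infty=\bigcup_i N_i$ which is consistent and \emph{saturated}, meaning closed under the demands of all three move types. I then build the representation on base set $U=N_\infty$ (the nodes) by defining $\theta(a)=\set{(x,y): a\in N_\infty(x,y)}$. The conditions $1'\in N(x,y)\iff x=y$ and the upward closure $N(x,y)=N(x,y)^\uparrow$ immediately give that $1'$ is represented as the diagonal and that $\theta$ respects $\leq$: if $a\leq b$ then saturation of labels forces $\theta(a)\subseteq\theta(b)$, and conversely a played initial round with $a_0\not\leq b_0$ shows faithfulness of the order, so $\theta$ is an order-embedding.

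The main obstacle, and the crux of the whole argument, is verifying $\theta(a*b)=\theta(a)*\theta(b)$, since demonic composition is a conjunction of an existential and a universal condition and these are enforced by \emph{different} moves. For the inclusion $\theta(a*b)\subseteq\theta(a)*\theta(b)$, saturation under witness moves splits any edge carrying $a*b$ into an $a$-edge and a $b$-edge through some $z$ (giving the existential conjunct), while saturation under demonic moves guarantees the universal conjunct: for every $a^+$-successor of $x$ (with $a\leq a^+$) there is an outgoing $b$-edge, so $\theta(a)(x)\subseteq\dom(\theta(b))$. The reverse inclusion is where the choice move does its work: given $(x,z)\in\theta(a)$ and $(z,y)\in\theta(b)$ with $\theta(a)(x)\subseteq\dom(\theta(b))$, I must show $a*b\in N_\infty(x,y)$. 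If $\exists$ ever \emph{rejected} the relevant choice move, she would have created a node $w$ with $a\in N_\infty(x,w)$ and $b\in\O(w)$, which combined with consistency and the domain condition yields a contradiction; hence she must have \emph{accepted}, placing $a*b$ on the edge $(x,y)$ exactly as required. Care is needed to confirm that the scheduling genuinely presents every such choice move to $\exists$ and that upward-closure of labels does not spuriously introduce the forbidden witness $b_0$ on $(x_0,y_0)$; these are the points I would check most carefully.
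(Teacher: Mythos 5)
Your forward direction (representation $\Rightarrow$ winning strategy) is exactly the paper's argument: maintain a point-map $\nu$ embedding the current network into a fixed representation, and use the existential conjunct of \eqref{def:demonic} for witness moves, the universal conjunct for demonic moves, and a case split on $\nu(z)\in\dom(\theta(b))$ for choice moves. That part is sound. Your converse also follows the paper's outline in its algebraic core: schedule all moves against the winning strategy, read off $\theta(a)$ from the labels of the limit network, get $\theta(a*b)\subseteq\theta(a)\comp\theta(b)$ from witness saturation, the universal conjunct from demonic saturation, and $\theta(a)*\theta(b)\subseteq\theta(a*b)$ from choice moves plus consistency (your contradiction via the rejected choice move is precisely the right argument).

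However, there is a genuine gap in the converse, located exactly at the sentence ``a played initial round with $a_0\not\leq b_0$ shows faithfulness of the order, so $\theta$ is an order-embedding.'' A play of $\Gamma_\omega(\c A)$ contains only \emph{one} initial move, so your single scheduled play produces a limit network that witnesses the failure of only the one inequality $a_0\not\leq b_0$. For any other pair $c\not\leq d$ nothing forces the limit network to separate them: an element $c$ that never occurs in any label satisfies $\theta(c)=\emptyset$, so two such elements get the same image, and $\theta$ need be neither injective nor order-reflecting. Upward closure of labels only gives preservation of $\leq$, whereas representability requires $a\leq b$ \emph{iff} $\theta(a)\subseteq\theta(b)$. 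The paper repairs this in the only reasonable way: for \emph{each} pair $a\not\leq b$ (there are countably many, so this is legitimate) it runs a separate scheduled play, obtains a limit network $N(a,b)$, and defines $\theta$ over the \emph{disjoint union} of all these limit networks; every failed inequality then has its own witnessing component, giving injectivity and order-reflection. This introduces one further check your version never meets: that the identities $\theta(a*b)=\theta(a)*\theta(b)$ survive the disjoint union. They do, because both conjuncts in \eqref{def:demonic} are local to a component (points in distinct components are never related, so the universally quantified clause splits componentwise), but that verification belongs in the proof.
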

\begin{proof}
See   \cite[theorem 7.19 ]{HH:book} for a similar proof.  
If $\c A$ is representable then  a \ws\ for $\exists$ is to  maintain an embedding from the current network 
into a fixed representation.  
Conversely, if $\c A$ is countable and $\exists$ has a \ws, 
we consider a play in which $\forall$  plays $a\not\leq b$ initially, 
then schedules all possible moves (this is possible by countability) and $\exists$ uses her \ws.    
Let $N(a, b)$ be the  limit of such a play.  
Now let $N$ be the disjoint union of all limit networks $N(a, b)$ where $a\not\leq b$.  
Define $\theta\colon \c A\rightarrow \wp(N\times N)$ by $\theta(a)=\set{(x, y): N(x, y)\leq a}$.  
This map is injective (since we included all $N(a, b)$) and clearly respects $\leq$.  
Since all possible witness moves are played we have $\theta(a* b)\subseteq\theta(a);\theta(b)$.
Furthermore, since all demonic moves are played,  
if $(x, y)\in\theta(a* b)$ then for all $w$ with $(x, w)\in\theta(a)$ we have $w\in dom(\theta(b))$, 
so $\theta(a* b)\subseteq\theta(a)*\theta(b)$.
Conversely, since all choice moves are played and $\exists$ wins, $\theta(a)*\theta(b)\subseteq\theta(a* b)$.  
Hence $\theta$ is a representation.
\end{proof}

We have seen that $\c A_n$ is not representable.
To get a feel for the representation games $\Gamma_k(\c A)$ we show that the non-representability of
$\c A_n$ is witnessed by a game.

\begin{lemma}\label{lem-lose}
The universal player $\forall$ has a \ws\ in the representation game $\Gamma_{n+2}(\c A_n)$
of length $n+2$.
\end{lemma}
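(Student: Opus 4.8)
The plan is to have $\forall$ replay, inside a single network, the derivation that proved the quasi-equation \eqref{eq-valn} valid in Lemma~\ref{lem:q}, converting each inequality of that derivation into a move that drives $\exists$ towards an inconsistent network. As noted in the proof of Lemma~\ref{lem-nr}, $s_0 t\not\leq s_n t$ in $\c A_n$, so $\forall$ opens by choosing $a_0=s_0 t$ and $b_0=s_n t$; $\exists$ must then produce $N_0$ with $s_0 t\in N_0(x_0,y_0)$ and $s_n t\notin N_0(x_0,y_0)$.

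First I would build a ``$t$-ladder'' using $n$ witness moves. The two facts I lean on are that labels are up-closed, $N(x,y)=N(x,y)^\uparrow$, and that $\c A_n$ satisfies $s_0<s_n$, $s_{i}<s_{i+1}t$ for $i<n$, while $s_n$ is $\leq$-maximal (it is not a first coordinate of any pair in $L$). The witness move on $s_0* t\in N(x_0,y_0)$ yields a node $z_1$ with $s_0\in N(x_0,z_1)$ and $t\in N(z_1,y_0)$, whereupon up-closure forces both $s_n\in N(x_0,z_1)$ and $s_1 t\in N(x_0,z_1)$. Inductively, the witness move on $s_{k-1}* t\in N(x_0,z_{k-1})$ (available from the previous round's up-closure) produces $z_k$ with $s_{k-1}\in N(x_0,z_k)$ and $t\in N(z_k,z_{k-1})$, and up-closure via $s_{k-1}<s_k t$ delivers $s_k t\in N(x_0,z_k)$. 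After the $n$-th witness move we have $s_n t\in N(x_0,z_n)$.

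Next I would fire a choice move on the triangle $s_n\in N(x_0,z_1)$, $t\in N(z_1,y_0)$. If $\exists$ accepts she is forced to put $s_n* t=s_n t=b_0$ into $N(x_0,y_0)$ and loses at once, so she must reject, creating a node $w$ with $s_n\in N(x_0,w)$ and $t\in\O(w)$. Finally I would play the demonic move with $a=a^+=s_n$ (legal since $s_n\leq s_n$), $b=t$, using $s_n* t=s_n t\in N(x_0,z_n)$ from the ladder together with $s_n\in N(x_0,w)$: $\exists$ is then forced to supply $w'$ with $t\in N(w,w')$. Since $t\in\O(w)$, the network now has $N(w,w')\cap\O(w)\neq\emptyset$ and is inconsistent, so $\forall$ wins. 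The play comprises $n$ witness moves, one choice move and one demonic move, a play of length $n+2$.

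The step needing the most care is verifying that $\exists$ has no escape. At each witness and demonic round $\exists$ may pick the new node freely, but every label membership I invoke is forced by up-closedness conjoined with the specific order of $\c A_n$, so the ladder proceeds regardless of her node choices. The hinge is the maximality of $s_n$: because nothing lies strictly above $s_n$, the side condition $a\leq a^+$ in the final demonic move can be met only by $a^+=s_n$, which is exactly why the ladder must be run to the top to produce $s_n t$ and no shorter word suffices. I would also record that $\exists$ cannot sidestep the choice move by re-using an old node — accepting loses outright, and any rejection is obliged to register $t\in\O(w)$, the very fact the closing demonic move contradicts.
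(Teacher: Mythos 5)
Your proof is correct and takes essentially the same approach as the paper's: open with $s_0t\not\leq s_nt$, run the ladder of $n$ witness moves whose forced up-closures successively produce $s_1t,\dots,s_nt$ on edges out of $x_0$, force a rejection of the choice move on $s_n\in N(x_0,z_1)$, $t\in N(z_1,y_0)$ (acceptance putting $b_0=s_nt$ into $N(x_0,y_0)$), and finish with the demonic move $a=a^+=s_n$, $b=t$ against the rejection node carrying $t\in\O(w)$. The only difference is the inessential reordering of moves: the paper plays the choice move immediately after the first witness move, whereas you defer it until the ladder is complete.
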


\begin{proof}
In the initial round, $\forall$ picks   $s_0t\not\leq s_nt$ and $\exists$ responds with a network $N_0$ 
with nodes $\set{0, 2}$ and $N_0(0, 2)=(s_0t)^\uparrow=\set{s_0t}$.  
Then $\forall$ plays $(\ref{en:witness}, 0, 2, s_0, t)$, forcing $\exists$ to include a node $1\in N_1$ 
where $N_1(0, 1)=s_0^\uparrow=\set{s_0, s_n, s_1t}$ and $N_1(1, 2)=t^\uparrow=\set{t}$. 
Then he plays $(\ref{en:choose}, 0, 1, 2, s_n, t)$.  
If $\exists$ accepts, she loses (since we get $s_nt\in N_2(0, 2)$), 
so assume she rejects by adding a node $0^*$ to $N_2$ 
where $N_2(0, 0^*)=s_n^\uparrow$ and $t\in\O_2(0^*)$.  
Then he plays $(\ref{en:witness}, 0, 1, s_1, t)$, forcing $y_1\in N_3$ 
where $N_3(0, y_1)=s_1^\uparrow=\set{s_1, s_2t}$ and $N_3(y_1, 1)=t^\uparrow$.  
Similarly we get $N_{4}(0, y_{2})=s_{2}^\uparrow=\set{s_{2}, s_3t}$.  
Eventually, in $N_{n+1}$, we arrive at $N_{n+1}(0, y_{n-1})=s_{n-1}^\uparrow=\set{s_{n-1}, s_nt}$.  
But then $\forall$ plays $(\ref{en:demonic}, 0, y_{n-1}, w, s_n, s_n, t)$, 
forcing a new node $v\in N_{n+2}$ where $t\in N_{n+2}(w, v)$, so $\exists$ loses.
See Figure~\ref{fig-nr}.
\end{proof}
\begin{figure}[h]
\[
\xymatrix{
v\\
\stackrel{\;(t\in\O(w))}{0^*}\ar[u]^{t} & y_{n-1} & \dots & y_2\ar[rr]^{t} & & y_1\ar[dd]^{t} & \\
&&&&&&\\
0\ar[uu]^{s_n}\ar[uur]^(.6){s_nt}_(.6){s_{n-1}}\ar[uurrr]^(.65){s_3t}_(.65){s_2}\ar[uurrrrr]^(.7){s_2t}_(.7){s_1}
\ar[rrrrr]^{s_1t,s_n}_{s_0}\ar@/_3pc/[rrrrrr]_{s_0t}&&&&&1\ar[r]^{t}&2
}
\]
\caption{Non-representability of $\c{A}_n$.} 
\label{fig-nr}
\end{figure}
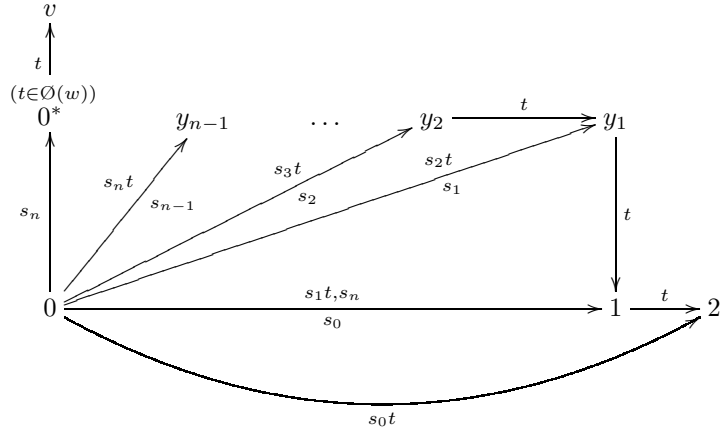

\subsection*{Representing the ultraproduct}

\begin{lemma}\label{lem-surv}
The existential player $\exists$ has a \ws\ in the representation game $\Gamma_{n-1}(\c A_n)$ of length $n-1$. 
\end{lemma}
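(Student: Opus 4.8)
The plan is to equip $\exists$ with an explicit, rigid strategy and to certify its safety through a rank on $\Sigma^*$ that measures how many rounds a ``dangerous'' edge-label is away from forcing an inconsistency. Recall $\forall$ wins only by achieving $b_0\in N(x_0,y_0)$ or an inconsistent network, so it is enough to show $\exists$ can, for $n-1$ rounds, keep $b_0$ off $(x_0,y_0)$ and keep the network consistent. $\exists$ will play as rigidly as possible: she answers every witness move $(\ref{en:witness},x,y,a,b)$ and every demonic move $(\ref{en:demonic},x,y,z,a,a^+,b)$ by adjoining a \emph{fresh} node carrying exactly the forced upward-closed label, never reusing a node; and she answers a choice move $(\ref{en:choose},x,y,z,a,b)$ by accepting, unless accepting would force $b_0$ onto $(x_0,y_0)$, in which case she rejects, adjoining a fresh node $w$ with $a\in N(x_0,w)=a^{\uparrow}$ and $b\in\O(w)$. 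Since $w$ is fresh it has no out-edges, so each reject leaves the network consistent; the only residual danger is a later demonic move forcing $b$ out of $w$, which by~\eqref{def:demonic} needs a label $c*b$ with $a\le c$ on some edge out of $x_0$.

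Next I would set up the rank. Give each word $s_i$ and $s_i t$ (for $0\le i\le n$) the rank $n-i$, and every other word rank $\infty$; these dangerous words are exactly the labels occurring along the chain of Lemma~\ref{lem-lose}, and $s_n t$ is the unique dangerous word of rank $0$. The key estimate is a \emph{bounded-advancement} property: in any single round the least rank of a dangerous label carried by an out-edge of $x_0$ can drop by at most one. The paradigm case is a witness move applied to $s_i t$: it creates $s_i^{\uparrow}=\set{s_i,s_{i+1}t}$, introducing the next label $s_{i+1}t$ of rank exactly one smaller. The remaining move-types must be shown unable to do better --- a choice-accept only recombines labels already lying along an existing path, and a demonic move, by~\eqref{def:demonic}, emits only a factor of an existing product label --- and here the rigidity of $\c A_n$ recorded after~\eqref{eq-order} (no $3$-chains, and each element has at most one element below it) is what keeps the emitted factors high in rank.

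I would then track, for each live threat $(w,a,b)$ created by a reject, its \emph{realisation-distance}, namely the least rank of a dangerous label $c*b$ with $a\le c$ on an out-edge of $x_0$; by the analysis above a demonic move completes an inconsistency exactly when this distance reaches $0$. The invariant to maintain is that a reject forced at round $r$ creates a threat of realisation-distance at least $n-r$. Combined with bounded advancement --- the distance falls by at most one per round --- this yields that no threat can be realised before round $r+(n-r)=n$. Since a reject cannot be forced before round $2$ and the game $\Gamma_{n-1}(\c A_n)$ halts after round $n-1<n$, no threat is ever realised, so the network stays consistent; and $\exists$ never accepts a move placing $b_0$ on $(x_0,y_0)$, so $b_0\notin N(x_0,y_0)$ throughout. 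Hence $\exists$ wins.

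The main obstacle is the invariant that a reject forced at round $r$ is ``deep'', of realisation-distance at least $n-r$, and this is precisely where the length-$n$ character of the obstruction~\eqref{eq-valn} of Lemma~\ref{lem:q} must be used: to force such a reject $\forall$ must assemble $a\in N(x_0,z)$ and $b\in N(z,y_0)$ witnessing that accepting would place $b_0$ on $(x_0,y_0)$, using only the $r-1$ earlier rounds, and the constraint $b_0\notin N(x_0,y_0)$ together with bounded advancement forbids any realising label $c*b$, $c\ge a$, from already sitting near rank $0$ on an out-edge of $x_0$. The rigid fresh-node responses are exactly what prevent $\exists$ from ever manufacturing a short-cut to $s_n t$, using facts such as~\eqref{eq:alpha beta}. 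Verifying that this survives the full interaction of choice and demonic moves, for arbitrary $\forall$ play and arbitrary initial $a_0\not\le b_0$, is the bulk of the work; granting it, Lemma~\ref{lem-surv} follows and feeds the ultraproduct construction announced after Lemma~\ref{lem:q}.
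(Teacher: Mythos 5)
Your strategy is not the paper's: there, $\exists$ maintains a rigid \emph{grid} $(D_k,T_k,f_k)$ satisfying four induction hypotheses, reuses nodes so that every demonic move is trivial, and \emph{rejects} every non-trivial choice move; you instead accept choice moves by default and answer every move with fresh nodes. That strategy could conceivably be made to work, but your certification of it has genuine gaps, and they sit exactly where the lemma's content lies. First, your key estimate (``bounded advancement'') is false as stated. By \eqref{eq-order} we have $(s_0,s_n)\in L$, so $s_0^\uparrow=\set{s_0, s_n, s_1t}$; hence already in round $1$ the witness move on $s_0t$ places the rank-$0$ word $s_n$ on an out-edge of $x_0$, so the least rank of a dangerous label drops from $n$ to $0$ in one round. (Indeed, under your own definition $s_n$ and $s_nt$ both have rank $0$, contradicting your claim that $s_nt$ is the unique rank-$0$ dangerous word.) What is genuinely slow to appear is not $s_n$ but the product $s_nt$, and for a structural reason your rank function cannot see: the only route to $s_nt$ is inside $s_{n-1}^\uparrow$, and producing $s_{n-1}$ on an out-edge of $x_0$ costs $n-1$ witness moves, one per rung. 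Second, your ``realisation-distance'' is not a potential function of the kind the argument needs: as defined (least rank of a label $c*b$, $c\geq a$, \emph{already present} on an out-edge) it is vacuous until such a label appears, and the instant one appears the threat is realisable by a demonic move regardless of its rank, so ``falls by at most one per round'' does not parse. The invariant that a reject forced at round $r$ has distance at least $n-r$ is precisely the heart of the lemma, and you assert it and explicitly ``grant'' it rather than prove it.

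Beyond this, your apparatus is tied to the particular opening $s_0t\not\leq s_nt$: by \eqref{eq-order}, $\forall$ may open with $\gamma s_0t\not\leq\gamma s_nt$ for any word $\gamma\in\Sigma^*$, and then every label relevant to the play ($\gamma s_i$, $\gamma s_it$) has rank $\infty$ in your scheme, so your machinery is silent. (The paper absorbs the prefix through grid condition \eqref{h:ab} together with the device of $\exists$ ``pretending'' the move was made at the node $x$ with $f_k(x',x)=\gamma$.) Also, your acceptance rule guards only the pair $(x_0,y_0)$, but acceptance can threaten \emph{consistency} too: once a demonic move has forced some out-edge of a rejected node $w$, an accepted choice move along it can land a label of $\O(w)$ (or one whose up-closure meets $\O(w)$) on an out-edge of $w$; your claim that ``the only residual danger is a later demonic move forcing $b$ out of $w$'' overlooks this, as well as demonic moves forcing some $b'$ with $b\in b'^\uparrow$ out of $w$. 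In short, your guiding intuition --- the ladder costs about $n$ rounds, so a game of length $n-1$ is safe --- is the same as the paper's, but the paper encodes the count in its induction hypothesis \eqref{h:triangle} (the clause $j-i\leq k$, used repeatedly with $k<n-1$), and that is what your rank/distance bookkeeping would have to replace; as written, it does not.
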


\begin{proof}
We describe a winning strategy for $\exists$.
In her replies to $\forall$'s moves, she will construct a chain of  networks $N_0\subseteq N_1\subseteq\ldots$,  the labels $N(x, y)$ and $\O(x)$,  once defined,  are never changed in later rounds.   We define these networks by specifying a \emph{grid} $(D,T, f)$ consisting of a set $D$ (of nodes), a set $T\subseteq D$ (of terminal nodes) and a partial reflexive function $f$  from $D\times D$ to $\Sigma^*$ such that
\begin{enumerate}
\renewcommand{\theenumi}{\roman{enumi}}

\item\label{en:lambda} $f(x, y)=\Lambda$ iff $x=y$.
\item \label{h:xyz} if $(x, y), (y, z), (x, z)\in dom(f)$ then $f(x, y)f(y, z)\geq f(x, z)$, 
\item\label{h:ab}  if $f(x, y)=\alpha\beta$ then there is $z$ such that $f(x, z)=\alpha$ and $f(z, y)=\beta$, 
\item if $(x, y)\in dom(f)$ and $x\in T$ then $y=x$.\label{en:T}
\end{enumerate}
A grid $(D, T, f)$  determines a network $N_{D,T, f}$ with nodes $D$ where 
\begin{align*}
\O_{D,T, f}(w)&=\left\{\begin{array}{ll} \Sigma^+&w\in T\\ \emptyset&w\in D\setminus T\end{array}\right.\\
N_{D,T, f}(x, y)&=\left\{\begin{array}{ll} (f(x, y))^\uparrow&(x, y)\in dom(f)\\
\emptyset&(x, y)\in(D\times D)\setminus dom(f)\end{array}\right.
\end{align*}    

A grid $(D', T', f')$ extends a grid $(D, T, f)$ if (i) $D\subseteq D'$, (ii) $T=T'\cap D$ and (iii) $f=f'\cap(D\times D)$.  [Figure~\ref{fig-nr} with $v$ deleted illustrates a fairly typical grid where  $T=\set{0^*}$.]
Our induction hypotheses for $N_k$:
\begin{enumerate}

\item\label{h:grid} There is a grid  $(D_k,T_k, f_k)$  satisfying \eqref{en:lambda}--\eqref{en:T},  where $N_k=N_{D_k,T_k, f_k}$.

\item\label{h:trans}  If $(x, y), (y, z)\in dom(f_k)$ but $(x, z)\not\in dom(f_k)$ then $f_k(x, y)=t$.

\item\label{h:no beta}  If $f_k(x, z)\leq f_k(x, w)$ and  there is $y$ such that $(x, y), (w, y)\in dom(f)$ and $f_k(w, y) \neq\Lambda$, then $z=w$
\item\label{h:triangle}
If $f_k(x, y)=f_k(x, z)=s_i$ then $y=z$.  
If $i<j,\; f_k(x, y)=s_i$  and $f_k(x, z)=s_j$ then either $j-i\leq k$ and there is  $w$ with  $f_k(x, w)=s_{i+1},\; f_k(w, y)=t$, \/  or $j=n,\;z\in T_k$ and there is $v$ where $f_k(x, v)=s_0$.

\end{enumerate}  

In the initial round, if $\forall$ plays $a\not\leq b$ where $a$ is the string of characters $a_0a_1\ldots a_{p-1}$ ($p=|a|$), then $\exists$ defines a grid $(D_0, \emptyset, f_0)$ where  $D_0= \set{0, 1, \ldots, p}$, there are no terminal nodes,   and  $f_0(i, j)=a[i, j]=a_ia_{i+1}\ldots a_{j-1}$ 
for $i\leq j\leq p$ and the label $f_0(j, i)$ for $j>i$ is undefined.  She plays $N_0=N_{D_0, \emptyset, f_0}$, the network determined by this grid.  Noting that $dom(f_0)$ is a total linear order, it is easy to check all hypotheses.  

In a later round $k+1$,  the current network is $N_k=N_{D_k,T_k, f_k}$, we consider $\forall$'s next move.     If  $N_{k+1}=N_k$ is a legal $\exists$-response to $\forall$'s move we say the move is trivial.
We assume the hypotheses for $N_k$ and  assume that $\forall$ plays only non-trivial moves.    

For a demonic move $(\ref{en:demonic}, x, y, z, \alpha, \alpha^+, \beta)$, we have $\alpha\geq f_k(x, z), \;\alpha^+\beta\geq f_k(x, y), \;\alpha\leq\alpha^+$ and we may assume $\beta\neq\Lambda$ (else the move is trivial), so $f_k(x, y)=\alpha^+\beta^-$ for some $\beta^-\leq\beta$.  By IH~\ref{h:grid} and \eqref{h:ab} there is $w$ where $f_k(x, w)=\alpha^+$ and $f_k(w, y)=\beta^-$.   By   IH~\ref{h:no beta}, $w=z$  but then the move is trivial.  Hence, there are no non-trivial demonic moves.

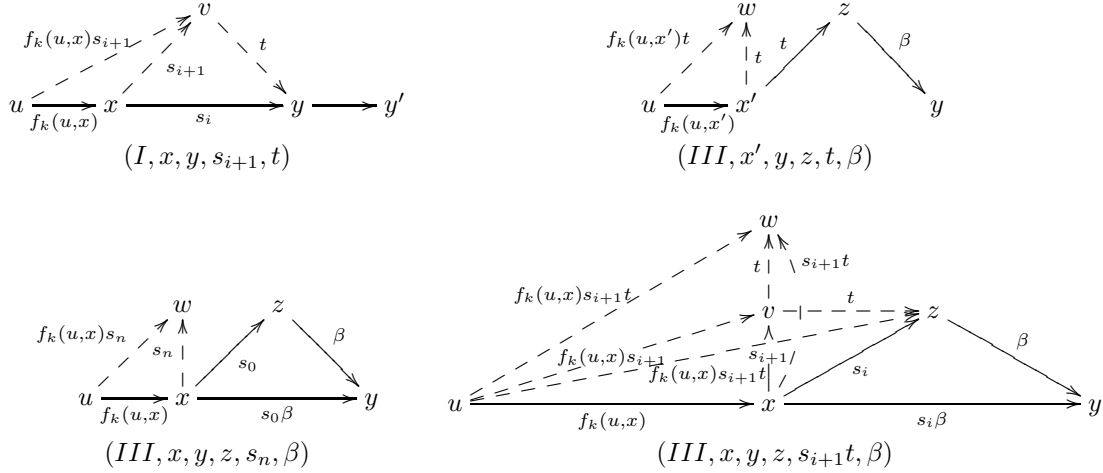
\begin{figure}[h]
\[\begin{array}{cc}
\xymatrix{&&v\ar@{-->}[rd]^t&&\\
u\ar[r]_{f_k(u, x)}\ar@{-->}[rru]^{{f_k(u, x)} s_{i+1}}&x\ar[rr]_{s_i}\ar@{-->}[ru]_{s_{i+1}}&&y\ar[r]&y'
}
&

\xymatrix{
&w&z\ar[rd]^{\beta}&\\
u\ar[r]_{{f_k(u, x')}}\ar@{-->}[ru]^{{f_k(u, x')} t}&x'\ar[ru]^t\ar@{-->}[u]_t&&y
}
\\
(\ref{en:witness}, x, y, s_{i+1}, t)

& (\ref{en:choose}, x', y, z, t, \beta)\\ \\
 
\xymatrix{
&&&\\&w&z\ar[rd]^{\beta}&\\
u\ar[r]_{{f_k(u, x)}}\ar@{-->}[ru]^{{f_k(u, x)} s_n}&x\ar[rr]_{s_0\beta}\ar[ru]_{s_0}\ar@{-->}[u]^{s_n}&&y
}

&
\xymatrix@C=5em{&&w&&\\
&&v\ar@{-->}[u]^t\ar@{-->}[r]^t&z\ar[rd]^{\beta}&\\
u\ar[rr]_{f_k(u, x)}\ar@{-->}[rru]|{ {f_k(u, x)} s_{i+1}} \ar@{-->}[rruu]^{{f_k(u, x)} s_{i+1}t}   \ar@{-->}[rrru]_{{f_k(u, x)} s_{i+1}t}   &&x\ar[rr]_{s_i\beta}\ar[ru]_{s_i}\ar@{-->}[u]|{s_{i+1}}\ar@{-->}@/_1pc/[uu]_>>>>>{s_{i+1}t}&&y
}
\\
  (\ref{en:choose}, x, y, z, s_n, \beta)&{(\ref{en:choose}, x, y, z, s_{i+1}t, \beta)}\end{array}
\]
\caption{\label{fig:strat} $\exists$'s strategy for a witness move and  three rejected choice moves.     $u$ is a typical node in  $D_k$.     $w$ is a  new  terminal node.   $v$ is a new non-terminal node, except in the fourth diagram $v$ is not new if a suitable node is already in $N_k$.  In the second case $f_k(x, y)$ is undefined. Reflexive edges are labelled $\Lambda$ (not shown).  Solid arrows denote edges in $dom(f_k)$,  dashed arrows denote new edges in $dom(f_{k+1})\setminus dom(f_k)$ added by $\exists$ in the current round. $f_{k+1}$ is undefined at all other edges incident with  a new node.    }
\end{figure}

The remaining cases are witness moves and choice moves, and here she will define her response by defining a grid $(D_{k+1},T_{k+1}, f_{k+1})$  extending  the grid $(D_k,T_k, f_k)$ and playing $N_{k+1}=N_{D_{k+1}, T_{k+1}, f_{k+1}}$.

Consider a witness move $(\ref{en:witness}, x', y, \alpha, \beta)$ where  $\alpha\beta\geq f_k(x', y)$.  It is trivial if  either $\beta=\Lambda$ or $\alpha\beta=f_k(x', y)$ by  IH~\ref{h:grid} and \eqref{h:ab}, so suppose $\beta\neq\Lambda,\; \alpha\beta >f_k(x', y)$.      Then    $f_k(x', y)=\gamma s_i, \; \alpha=\gamma s_{i+1}$ and $\beta=t$  (some $i<n$, some $\gamma$), by \eqref{eq-order} .  
  By IH~\ref{h:grid} and \eqref{h:ab}, there is $x\in N_k$ such that $f_k(x', x)=\gamma, \; f_k(x, y)=s_i$.
 $\exists$ calculates her response by pretending that $\forall$ instead played $(\ref{en:witness}, x, y, s_{i+1}, t)$.  Suppose there is $z$ such that $f_k(x, z)=s_{i+1}$.  By \eqref{h:triangle}, either $f_k(z, y)=t$, so the move $(\ref{en:witness}, x, y, s_{i+1}, t)$ is trivial, or $i+1=n,\; z\in T_k$ and $f_k(x, v)=s_0$ (for some $v$).  In the latter case we have $f_k(x, v)=s_0,\; f_k(x, y)=s_{n-1}$, but this contradicts \eqref{h:triangle} since $k<n-1$.  So there is no node $z\in N_k$ where $f_k(x, z)=s_{i+1}$.  For her response, she adds a single new node $v$ to $N_k$, lets $f_{k+1}(x, v)=s_{i+1},\; f_{k+1}(v, y)=t$ and $f_{k+1}(u, v)=f_k(u, x)s_{i+1}$ whenever $f_k(u, x)$ is defined (and $f_k$ is not defined on other edges incident with $v$), as shown in the first part of  figure~\ref{fig:strat}. There are no new terminal nodes ($T_{k+1}=T_k$) and she plays $N_{k+1}=N_{D_{k+1}, T_{k+1}, f_{k+1}}$.      Note that $f_{k+1}(x', z)=\gamma s_{i+1}$, so the response is legal response to $(\ref{en:witness}, x', y, \gamma s_{i+1}, t)$.   
 
 We briefly check our four induction hypotheses.
 \eqref{h:grid} holds by  definition of $N_{k+1}$.  For \eqref{h:trans}, whenever $f_k(u, u'), f_{k+1}(u', z)$ are defined then so is $f_{k+1}(u, z)$, and $f_{k+1}(z, w)\neq \Lambda \rightarrow (w=y\wedge f_{k+1}(z, w)=t)$.
  Since there was no $v\in N_k$ with $f_k(x, v)=s_{i+1}$, the first part of  \eqref{h:triangle} holds; the second part also holds by \eqref{h:triangle} for $N_k$.  For \eqref{h:no beta}, the only case where $f_{k+1}(u, z)>f_k(u, v)$ could happen is when $i+1=n,\; f_{k+1}(u, z)=\gamma s_n,\; f_k(u, v)=\gamma s_0$, but this cannot happen since $k<n-1$.  Thus all induction hypotheses are maintained.

Now, consider a choice move $(\ref{en:choose}, x', y, z, \alpha, \beta)$ played on $N_k$, where $\alpha\geq f_k(x', z), \;\beta\geq f_k(z, y)$.  First suppose $f_k(x', z)=\alpha$.  By IH~\ref{h:grid} and \eqref{h:xyz} either $f_k(x', z)f_k(z, y)\geq f_k(x', y)$ (so $\alpha\beta\geq f_k(x', y)\in N_k(x', y)$ and the move is trivial) or $(x', y)\not\in dom(f_k),\; f_k(x', z)=t$ (by IH~\ref{h:trans}), in which case she adds a single new terminal node $w\in T_{k+1}$ to the network where $f_{k+1}(x', w)=t$ and more generally if $f_k(u, x')$ is defined then she lets $f_{k+1}(u,w)=f_k(u, x')t$, but $f_{k+1}$ is undefined on other edges incident with $w$, as illustrated in the second part of figure~\ref{fig:strat}.   Since $w\in T_{k+1}$ we have $\beta\in \O_{k+1}(w)=\Sigma^+$, so $N_{k+1}$ is a legal rejection of the choice move.    We always included $(u, w)$ in $dom(f_{k+1})$ whenever $(u, u'), (u', w)\in dom(f_{k+1})$, we never have $f_{k+1}(u, w)=s_i$ (any $i\leq n$) and there are no outgoing irreflexive edges from $w$, so all induction hypotheses are maintained.

  Otherwise, $f_k(x', z)<\alpha$, so $(f_k(x', z), \alpha)=(\gamma s_0, \gamma s_n)$ or $(\gamma s_i, \gamma s_{i+1}t)$ (some $\gamma$, some $i<n$).    As with witness moves, by  IH~\ref{h:grid} and \eqref{h:ab}, there is $x\in N_k$ such that $f_k(x', x)=\gamma$ and $f_k(x, z)$ is $s_0$ or $s_i$ depending on the case.  She pretends that $\forall$ made the move $(\ref{en:choose},x, y, z, s_n, \beta)$ or  $(\ref{en:choose},x, y, z, s_{i+1}t, \beta)$ (respectively) to calculate her response.
  These cases are illustrated  in the third and fourth parts of figure~\ref{fig:strat}.  For the former case,  we have $f_k(x, z)=s_0$ and she  adds a single new terminal node $w\in T_{k+1}$ and lets $f_{k+1}(x, w)=s_n$, more generally when $N_k(u, x)\neq\emptyset$ she lets $f_{k+1}(u, w)=f_k(u, x) s_n$,   but $f_k$ is undefined on other edges incident with $w$.    Observe that $f_{k+1}(x', w)=\gamma s_n$, so this is a legal rejection of $(\ref{en:choose}, x', y, z, \gamma s_n, \beta)$
  
  In the latter case, we have $f_k(x, z)=s_i$ and some care is needed to maintain \eqref{h:triangle}.  There might be a node $v\in N_k$ where $f_k(x, v)=s_{i+1}$.  If so, by \eqref{h:triangle} either $v\in T_k,\; i+1=n$ and $f_k(x, v')=s_0$ (some $v'$), but this is impossible as $k<n-1$,  so in fact the former alternative in \eqref{h:triangle} must  hold, and we have $f_k(v, y)=t$.  If there is no such node  then she adds  a new node $v$ (non-terminal) and lets $f_{k+1}(x, v)=s_{i+1},\; f_{k+1}(v, z)=t,\; f_{k+1}(u, v)=f_k(u, x)s_{i+1}$ whenever $f_k(u, x)$ is defined, this case is illustrated in the last part of figure~\ref{fig:strat}.  Whether the node $v$ was already present or newly added, she goes on to  add a new terminal node $w\in T_{k+1}$ and lets $f_{k+1}(v, w)=t,\;f_{k+1}(x, w)=s_{i+1}t,\; f_{k+1}(u, w)=f_k(u, x)s_{i+1}t$ whenever $f_k(u, x)$ is defined.  $f_{k+1}$ is undefined on other edges incident with $w$ (or incident with $v$ in the case where $v$ is new).  This response (for the case where $v$ is new) is illustrated in the final part of figure~\ref{fig:strat}.
Again, we have $f_{k+1}(x', w)=\gamma s_{i+1}t$, so this is a legal rejection of $(\ref{en:choose}, x', y, z, \gamma s_{i+1}t, \beta)$.   As before, our induction hypotheses were maintained.
 
Since $N_{k+1}$ is determined by a grid and $\Lambda\not\in\O_{k+1}(w)$ (all $w$) it follows, by IH~\ref{h:grid}\eqref{en:T}, that $N_{k+1}$ is a consistent network, hence $\exists$ has not lost.
\end{proof}

\begin{thm}  
Neither of the representation classes $R(\subseteq, *), R(\subseteq, *,1')$ is finitely axiomatisable.
\end{thm}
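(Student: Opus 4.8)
The plan is to run the standard ultraproduct method for non-finite-axiomatisability, using the representation games as the bridge. Suppose, for contradiction, that $R(\subseteq,*)$ (respectively $R(\subseteq,*,1')$) were finitely axiomatisable, hence definable by a single first-order sentence. Then both this class and its complement are closed under ultraproducts. By Lemma~\ref{lem-nr}, each reduct $\c B_n$ fails to be $(\subseteq,*)$-representable; and $\c A_n$ itself fails to be $(\subseteq,*,1')$-representable, since a representation in the richer signature would restrict to one for $\c B_n$. Hence, fixing a non-principal ultrafilter $U$ on $\omega$, the ultraproduct $\c B^*=\prod_{n}\c B_n/U$ would be non-representable. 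The whole argument then reduces to proving the opposite: that $\c B^*$ \emph{is} representable, which is the contradiction I seek.

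The first key step is to observe that, for each fixed $k<\omega$, the assertion ``$\exists$ has a winning strategy in $\Gamma_k(\c A)$'' is expressible by a single first-order sentence $\sigma_k$ over $\c A$. This is because $\Gamma_k$ has only $k$ rounds; a minimal response by $\exists$ adds at most one node per round, so all networks arising may be taken to have at most $k+2$ nodes; each edge-label is an up-set and so is determined through $\leq$ by finitely many generators; and every move of either player is parametrised by finitely many nodes together with finitely many elements of $A$. Since finite games are determined, unravelling the quantifier alternation of this bounded game tree yields the required first-order $\sigma_k$ (the same works for the $1'$-free game, by the modification testing $(\subseteq,*)$-representability noted before Proposition~\ref{pro-game}). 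By Lemma~\ref{lem-surv}, $\exists$ wins $\Gamma_{n-1}(\c A_n)$, and a winning strategy in a longer game restricts to one in any shorter game, so $\c A_n\models\sigma_k$ whenever $n\geq k+1$. Thus for each fixed $k$ the set $\{n:\c A_n\models\sigma_k\}$ is cofinite, hence lies in $U$, and \Los's theorem gives $\c B^*\models\sigma_k$ for every $k<\omega$; that is, $\exists$ wins every finite game $\Gamma_k(\c B^*)$.

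It remains to promote this to full representability of $\c B^*$, and this is the step I expect to be the main obstacle. The mechanism is saturation: a non-principal (countably incomplete) ultraproduct of countably many structures is $\aleph_1$-saturated. Winning all finite games together with $\aleph_1$-saturation lets $\exists$ win the infinite game $\Gamma_\omega(\c B^*)$, since at each round the demand ``respond legally and keep winning'' reduces to realising a type over the finitely many parameters named so far, and that type is finitely satisfiable precisely because $\exists$ wins the finite games. Proposition~\ref{pro-game} then yields representability directly in the countable case; to accommodate the possibility that $\c B^*$ is uncountable, one instead builds the representation as a transfinite union of consistent networks, repairing one witness, demonic, or choice defect at a time. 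Each single repair concerns only finitely many parameters, so $\aleph_1$-saturation (a fortiori, realisation of finite-parameter types) always supplies the needed witness and the construction never stalls; the resulting network $N$ is witness-, demonic- and choice-closed and consistent, whence $a\mapsto\{(x,y):N(x,y)\leq a\}$ is a representation of $\c B^*$, exactly as in the proof of Proposition~\ref{pro-game}.

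This contradicts the non-representability of $\c B^*$ derived above, so neither $R(\subseteq,*)$ nor $R(\subseteq,*,1')$ is finitely axiomatisable. The roles of Lemmas~\ref{lem-lose} and~\ref{lem-surv} are complementary here: for a fixed game length $k$, Lemma~\ref{lem-lose} shows $\forall$ wins on $\c A_n$ for small $n$ (namely $n\leq k-2$), while Lemma~\ref{lem-surv} shows $\exists$ wins on $\c A_n$ for large $n$ (namely $n\geq k+1$), and it is exactly this shifting boundary that makes each $\c A_n$ non-representable yet forces $\exists$ to win every finite game in the ultraproduct.
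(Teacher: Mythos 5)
Most of your argument tracks the paper's: the same structures $\c A_n$, the first-order sentences $\sigma_k$ expressing ``$\exists$ wins $\Gamma_k$'', the observation that $\c A_n\models\sigma_k$ whenever $n\geq k+1$ by Lemma~\ref{lem-surv}, and the \Los\ transfer giving that the ultraproduct satisfies every $\sigma_k$. Your use of $\aleph_1$-saturation of the countably incomplete ultraproduct to turn ``wins every finite game'' into a winning strategy in $\Gamma_\omega(\c B^*)$ is also sound, and is essentially the alternative the paper alludes to in place of K\"onig's tree lemma. The genuine gap is the final step, where you promote a win in $\Gamma_\omega(\c B^*)$ to representability of $\c B^*$ itself. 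Note first that your ``countable case'' is vacuous: an ultraproduct of countably infinite structures over a non-principal ultrafilter on $\omega$ has cardinality $2^{\aleph_0}$, so everything rests on your transfinite-repair construction, and that construction is not justified. Proposition~\ref{pro-game} assumes countability precisely because its proof requires $\forall$ to schedule \emph{all} possible moves (all pairs $a\not\leq b$, all defects of the limit networks) within $\omega$ rounds, which is impossible over an uncountable algebra. In the transfinite analogue, what must be preserved at each repair is not consistency of the extended network (a local, finite-parameter matter) but the property that the now \emph{infinite} network can still be completed; this is not a type over the finitely many parameters of the defect, and at stages beyond $\omega_1$ it is not a type over countably many parameters either, so $\aleph_1$-saturation cannot be invoked. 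Saturation finds elements of $\c B^*$ realising finitely satisfiable sets of formulas; it does not certify that a chosen extension of an infinite network remains completable. Your phrase ``each single repair concerns only finitely many parameters'' conflates the locality of a defect with the global correctness of its repair.

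The fix is cheap and is exactly the paper's detour. A winning strategy for $\exists$ in $\Gamma_\omega(\c B^*)$ restricts to one in $\Gamma_\omega(\c A')$ for any subalgebra $\c A'$: respond with your $\c B^*$-networks with every label intersected with $A'$; all memberships the rules require involve elements of $A'$, which is closed under $*$, and consistency and the up-set conditions survive intersection. Take $\c A'$ to be a \emph{countable elementary} subalgebra of $\c B^*$; Proposition~\ref{pro-game} now applies and $\c A'$ is representable. Then run the contradiction through $\c A'$ rather than $\c B^*$: if a single sentence $\sigma$ defined the representation class, representability of $\c A'$ gives $\c A'\models\sigma$, elementarity gives $\c B^*\models\sigma$, and \Los' theorem gives $\c A_n\models\sigma$ for $U$-many $n$, contradicting Lemma~\ref{lem-nr}. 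This route never requires representing an uncountable structure --- something your version cannot avoid, because your opening move (closure of the complement class under ultraproducts) commits you to showing that $\c B^*$ itself is representable.
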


\begin{proof}  
See \cite[Theorem 10.14]{HH:book} for a proof in a more general setting, here we just sketch an outline proof.  
Let $S=(\subseteq, *)$ or $S=(\subseteq, *,1')$, and write $\c A_n$ instead of 
`the reduct of $\c A_n$ to $S$'.
By lemma~\ref{lem-nr}, as $n$ grows $\exists$ has a \ws\ in longer and longer games for 
$S$-representability $\Gamma_{n-1}(\c{A}_n)$.  
The existence of such a \ws\ may be expressed by a first order $S$-formula $\sigma_{n-1}$.
It follows that she has a \ws\ in $\Gamma_n(\c{A})$ played on a non-principal ultraproduct 
$\c{A}$ of $(\c{A}_i:i<\omega)$, for all $n<\omega$.
Since $\exists$'s strategy may be reduced to strategies involving a choice of at most two moves, 
we may apply K\"onig's tree lemma to deduce that $\exists$ has a \ws\ in $\Gamma_\omega(\c A)$ 
(or prove such a \ws\ directly without assuming a bound on the number of choices for $\exists$, 
using the ultraproduct construction).   
She also has a \ws\  in $\Gamma_{\omega}(\c A')$ where $\c A'$ is a countable elementary subalgebra 
of $\c A$.
By proposition~\ref{pro-game},  $\c{A}'$ is $S$-representable 
(see  \cite[Proposition 5.2]{HH:wollic} for a similar proof).

Suppose for contradiction that $\sigma$ defines $R(S)$, so $\c A'\models\sigma$, 
hence by elementary equivalence $\c A\models\sigma$.  
By \Los' theorem, $\c A_n\models\sigma$ for `many' $n<\omega$, in contradiction to the 
non-$S$-representability of all $\c A_n$s.
\end{proof}


\subsection{A finite axiomatisation for $R(\cdot,\subseteq)$}  \label{pos}

We earlier saw that the so-called constellation product of binary relations arises fairly naturally in the setting of total correctness.  To contrast with the previous negative result for the ``mixed" angelic/demonic signature $(*,\subseteq)$, we now consider the ``mixed" signature $(\cdot,\subseteq)$ obtained by replacing demonic composition by constellation product, and show it has a finite axiomatisation. 

So consider $x,y\in \Rel(X)$.
If $x\cdot y$ exists, then clearly $x\cdot y=x* y=x;y$ holds, 
and indeed constellation product on binary relations satisfies the following laws:
\begin{enumerate}\renewcommand{\theenumi}{\Roman{enumi}}
\item if $x\cdot(y\cdot z)$ exists then so does $(x\cdot y)\cdot z$ and the two are equal,\label{en:E0}
\item if $x\cdot y$ and $y\cdot z$ exist then  $x\cdot(y\cdot z)$ exists.\label{en:E} 
\end{enumerate}
These are the first two laws of constellations as in~\cite{constell}, and indeed if we define domain $D$ on $\Rel(X)$ as before
(i.e. $D(x)$ is the restriction of the diagonal relation to the domain of $x$), 
then as well as satisfying \eqref{en:E0} and \eqref{en:E}, \/  $(\Rel(X), \cdot, D)$ satisfies some further laws involving $D$ and constellation product and  is therefore a constellation.
Hence we call a partial algebra satisfying \eqref{en:E0} and \eqref{en:E} a {\em pre-constellation}.   
It follows that any collection of binary relations closed under the partial operation of constellation product forms a pre-constellation.


 
We say $(P,\le,\cdot)$ is an {\em ordered pre-constellation} if $(P,\cdot)$ satisfies~\eqref{en:E0} 
and~\eqref{en:E}, 
$(P,\leq)$ is a poset, and for all $s,t,u,v\in P$ for which $s\leq u$ and $t\leq v$, we have 
\begin{enumerate}\renewcommand{\theenumi}{\Roman{enumi}}
\setcounter{enumi}{2}
\item $s\cdot t\leq u\cdot v$ whenever the two products exist, \label{en:prod}
\item  if $u\cdot t$ exists then $s\cdot v$ exists.\label{en:L}
\end{enumerate}
Observe that $(\Rel(X),\subseteq, \cdot)$ is an ordered pre-constellation, whence so is any subset of $\Rel(X)$ that is closed under constellation products where they exist and equipped with inclusion.  Hence our proof of the next result requires us to show the converse of this.

\begin{thm}   \label{orderembed}	
$R(\subseteq, \cdot)$ is the class of ordered pre-constellations.
\end{thm}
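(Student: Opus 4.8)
The forward inclusion is the observation recorded immediately before the statement: $(\Rel(X),\subseteq,\cdot)$ is an ordered pre-constellation and the laws \eqref{en:E0}--\eqref{en:L} are inherited by any subset closed under constellation products (where defined), so every member of $R(\subseteq,\cdot)$ is an ordered pre-constellation. The real work is the converse, namely to represent an arbitrary abstract ordered pre-constellation $\c P=(P,\leq,\cdot)$ as binary relations, in the style of \Zar's construction~\eqref{eq:Z} but now faithfully tracking the partiality of the product. The plan is to take as base set $\Omega=P\cup\set{1'}$ with $1'$ a fresh point declared incomparable to every element of $P$, to stipulate the single extra value $1'\cdot a:=a$ for $a\in P$ (leaving $a\cdot 1'$ and $1'\cdot 1'$ undefined), and to represent each $a\in P$ by
\[
\rho_a=\set{(x,y)\in\Omega\times\Omega : x\cdot a \text{ exists and } y\leq x\cdot a}.
\]
Since $1'$ is incomparable to $P$ and every defined product lies in $P$, no second coordinate is ever $1'$, so $\ran(\rho_a)\subseteq P$, while $\dom(\rho_b)=\set{x\in\Omega: x\cdot b\text{ exists}}$.

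I would then verify that $a\mapsto\rho_a$ is an isomorphism of ordered pre-constellations onto a subalgebra of $(\Rel(\Omega),\subseteq,\cdot)$ by establishing four facts. First, $a\leq b$ iff $\rho_a\subseteq\rho_b$: the forward direction uses \eqref{en:L} to get existence of $x\cdot b$ from that of $x\cdot a$ and then \eqref{en:prod} to get $x\cdot a\leq x\cdot b$, while the converse follows because the reflexive witness $(1',a)\in\rho_a$ forces $a\leq 1'\cdot b=b$; injectivity is then immediate from antisymmetry. Second (the crux), $a\cdot b$ exists iff $\ran(\rho_a)\subseteq\dom(\rho_b)$. Third, whenever $a\cdot b$ exists, $\rho_{a\cdot b}=\rho_a\comp\rho_b$. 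Together these say that constellation product of the $\rho$'s is defined exactly when $a\cdot b$ is, and then computes $\rho_{a\cdot b}$, so the image is closed and the map is a $(\subseteq,\cdot)$-isomorphism.

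The heart of the argument, and the step I expect to be the main obstacle, is the range--domain correspondence together with the product computation, since this is where the two \emph{directional} axioms \eqref{en:E0} and \eqref{en:E} do their work. For the easy direction of the correspondence, $a\in\ran(\rho_a)$ (via $(1',a)$), so $\ran(\rho_a)\subseteq\dom(\rho_b)$ yields $a\in\dom(\rho_b)$, i.e.\ $a\cdot b$ exists. For the hard direction one must show that every $y\leq x\cdot a$ has $y\cdot b$ defined: from $x\cdot a$ and $a\cdot b$ both existing, \eqref{en:E} gives existence of $x\cdot(a\cdot b)$ and then \eqref{en:E0} gives existence and equality of $(x\cdot a)\cdot b$, after which \eqref{en:L} pushes existence down from $x\cdot a$ to $y$. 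The same slide through \eqref{en:E}, \eqref{en:E0} and \eqref{en:prod} proves the product computation: existence of $a\cdot b$ lets one rewrite $x\cdot(a\cdot b)=(x\cdot a)\cdot b$ and take $y=x\cdot a$ as the witness for $\rho_a\comp\rho_b$, while monotonicity \eqref{en:prod} controls the labels via $z\leq y\cdot b\leq(x\cdot a)\cdot b$.

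A delicate point to emphasise concerns $1'$. Because the product is partial, one cannot adjoin $1'$ as an honest two-sided identity in the manner of Lemma~\ref{ordmonoid}: if both $a\cdot 1'$ and $1'\cdot c$ were always defined, then \eqref{en:E} applied with middle term $1'$ would force $a\cdot c$ to exist for all $a,c$, collapsing the partiality and reproducing only \Zar's everywhere-defined case. Treating $1'$ purely as an extra base point with a stipulated left action, incomparable in the order so that it never occurs as a second coordinate, sidesteps this while still supplying the reflexive witnesses $(1',a)\in\rho_a$ needed both for order-reflection and for $a\in\ran(\rho_a)$; correspondingly, the verifications above invoke the pre-constellation laws only for genuine elements of $P$, never for $1'$.
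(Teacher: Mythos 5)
Your proposal is correct and follows essentially the same route as the paper: your $\rho_a$ (with the stipulated left action $1'\cdot a=a$ of the fresh point) is literally the paper's representation $\rho_p=\set{(x,y):x\cdot p\text{ exists},\ y\leq x\cdot p}\cup\set{(e,y):y\leq p}$, and your four verification steps (order preservation and reflection via the pair $(1',a)$, the range--domain correspondence for existence of products, and the computation $\rho_{a\cdot b}=\rho_a\comp\rho_b$ using \eqref{en:E}, \eqref{en:E0}, \eqref{en:prod}, \eqref{en:L}) match the paper's argument step for step. The only differences are presentational: the paper writes the extra edges from the adjoined point $e$ as a separate set rather than via a stipulated action, and your closing remark on why $1'$ cannot be adjoined as a genuine two-sided identity is a nice observation consistent with, though not needed by, the paper's proof.
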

\begin{proof}
Let $(P,\le,\cdot)$ be an ordered pre-constellation, and let $X=P\cup\{e\}$, where $e\not\in P$.  
We define a map $\theta: P\rightarrow \Rel(X)$ 
by letting $p\mapsto \rho_p$ given by
\[\rho_p=\set{(x, y): x\cdot p\mbox{ is defined and } y\leq x\cdot p}\cup\set{(e, y):y\leq p}\]
for $p\in P$.  We shall show that $\theta$ is an embedding. 

Suppose $s\cdot t$ exists.  
If $(x,y)\in \rho_s$ where $x,y\in P$, then $y\leq x\cdot s$ which exists, 
so we infer from our laws the existence of $x\cdot (s\cdot t)=(x\cdot s)\cdot t\geq y\cdot t$
so  $(y, y\cdot t)\in \rho_t$ and $y\in \dom(\rho_t)$.  
If $(e,y)\in \rho_s$, then $y\leq s$, so $y\cdot t\leq s\cdot t$ exists and again $y\in \rho_t$.  
So $\rho_s\cdot \rho_t$ exists.  
Conversely, suppose $\rho_s\cdot \rho_t$ exists.  Since $(e,s)\in \rho_s$, 
it must be that $s\in dom(\rho_t)$, so $s\cdot t$ must exist.  
So $\rho_s\cdot \rho_t$ exists if and only if $s\cdot t$ exists.  

Now suppose $s\cdot t$ exists.  
Suppose $(x,z)\in \rho_{s\cdot t}$, where $x,z\in P$.  
Then $z\leq x\cdot(s\cdot t)=(x\cdot s)\cdot t$, so letting $y=x\cdot s$, 
we have that $(x,y)\in \rho_s$ and $(y,z)\in \rho_t$, so $(x,z)\in \rho_s\cdot \rho_t$. 
If $(e,z)\in \rho_{s\cdot t}$ then $z\leq s\cdot t$, so $(e,s)\in \rho_s$ and $(s,z)\in \rho_t$, 
so $(e,z)\in \rho_s\cdot \rho_t$.  
Hence $\rho_{s\cdot t}\subseteq \rho_s\cdot\rho_t$.  

Conversely, suppose $(x,z)\in\rho_s\cdot \rho_t$, where $x,y\in P$.  
Then there is $y\in P$ such that $(x,y)\in \rho_s$ and $(y,z)\in \rho_t$.  
So $y\leq x\cdot s$ and $z\leq y\cdot t\leq (x\cdot s)\cdot t$, with the latter existing and equalling 
$x\cdot (s\cdot t)$ because $x\cdot s$ and $s\cdot t$ exist.  
So $(x,z)\in \rho_{s\cdot t}$.  Suppose $(e,z)\in \rho_s\cdot \rho_t$.  
Then there is $y\in P$ such that $(e,y)\in \rho_s$ and $(y,z)\in \rho_t$, so $y\leq s$ and 
$z\leq y\cdot t\leq s\cdot t$, so $(e,z)\in \rho_{s\cdot t}$.  
Hence $\rho_s\cdot \rho_t\subseteq \rho_{s\cdot t}$, and so $\rho_s\cdot \rho_t=\rho_{s\cdot t}$.

Suppose $s\leq t$.  
If $(x,y)\in \rho_s$ where $x,y\in P$, then $y\leq x\cdot s\leq x\cdot t$, 
and so $(x,y)\in \rho_t$.  If $(e,y)\in \rho_s$ then $y\leq s\leq t$, so $(e,y)\in \rho_t$.  
Hence $\rho_s\subseteq \rho_t$.  
Conversely, if $\rho_s\subseteq \rho_t$ then because $(e,s)\in \rho_s$, we have $(e,s)\in \rho_t$, 
and so $s\leq t$.  So $s\leq t$ if and only if $\rho_s\subseteq \rho_t$.   
Hence $\theta$ respects inclusion and is injective, showing it is an embedding.
\end{proof}

This proof generalises Zarecki\u{\i}'s original proof for ordered semigroups in \cite{zarecki}.  
That result may be recovered from the above result as the special case in which all products exist, in which case the axioms for ordered semigroups are recovered, and each $\rho_s$ is a left total relation.  
Another corollary may be obtained by letting $\leq$ be equality, i.e. $\rho_s=\set{(x, x\cdot a):x\in P}\cup\set{(e, a)}$: 
in this case each $\rho_s$ is functional and we obtain a proof that every pre-constellation embeds into an algebra of partial transformations over some $X$ (which embeds into $\Rel(X)$).  
By assuming both that $\leq$ is equality, and that all products exist, 
we recover the familiar Cayley representation of a semigroup as transformations on a set.

It is now relatively easy to add $\emptyset$ to the signature.  Let us say $P$ is an {\em ordered pre-constellation with zero} if it is an ordered constellation such that for all $s,t\in P$,
$$0\cdot s=0,\ \exists (s\cdot t)=0\Rightarrow s=0,\ \exists (s\cdot  0)\Rightarrow s=0,\ 0\leq s.$$
It is easy to check that $(\Rel(X),\cdot,\emptyset,\subseteq)$ is an ordered pre-constellation with zero.

Let $P$ be a pre-constellation, with $0\not\in P$.  Define $P^0$ to be the partial algebra on $P\cup\{0\}$ obtained by extending $\leq$ and $\cdot$ to all of $P\cup\{0\}$ as follows: $0\cdot 0=0$, $s\cdot 0=0$ for all $s\in P$, $0\leq 0$, and $0\leq s$ for all $s\in P$.  

The following is easily checked by simple case analyses.

\begin{pro}
If $P$ is a pre-constellation, with $0\not\in P$, then $P^0$ is an ordered pre-constellation with zero $0$.  Moreover every ordered pre-constellation with zero arises in this way.  If $P\in R(\cdot,\subseteq)$ then $P^0\in R(\cdot,\subseteq,\emptyset)$.
\end{pro}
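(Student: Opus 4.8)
The plan is to handle the three assertions in turn. Throughout I read the hypothesis ``pre-constellation'' as \emph{ordered} pre-constellation, so that the order $\le$ extended in the construction of $P^0$ is available, and I lean on the single structural fact that in $P^0$ the adjoined $0$ is a left zero that is a right zero only for itself: $0\cdot s=0$ always, whereas $s\cdot 0$ is defined only for $s=0$, and $0\le s$ for all $s$ while $s\le 0$ forces $s=0$ by antisymmetry together with $0\le s$.

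For the first assertion, I would verify each defining property of an ordered pre-constellation with zero for $P^0$ by a case split according to which arguments equal $0$. The poset axioms and the zero-specific clauses ($0\cdot s=0$, $0\le s$, and the implications $s\cdot t=0\Rightarrow s=0$ and $\exists(s\cdot 0)\Rightarrow s=0$) hold by construction, using that in the partial algebra $P$ a product of elements of $P$, where defined, again lies in $P$ and so is never $0$. For \eqref{en:E0}, \eqref{en:E}, \eqref{en:prod} and \eqref{en:L} the all-in-$P$ branch is inherited from $P$, and every branch involving a $0$ collapses: if some middle argument is $0$ then existence of the relevant product forces the outer arguments to be $0$ as well (for instance if $y=0$ then $x\cdot(y\cdot z)=x\cdot 0$ can exist only if $x=0$), after which both sides reduce to $0$ and the required (in)equality is trivial. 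This is the ``simple case analysis'' already flagged, so I would not write out every branch.

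For the second assertion, given an ordered pre-constellation with zero $Q$ having zero element $0$, I would put $P=Q\setminus\set{0}$ with the inherited order and partial product. The clause $s\cdot t=0\Rightarrow s=0$ shows $P$ is closed under products (a product of non-zero elements, where defined, is non-zero), and antisymmetry with $0\le s$ shows no element of $P$ lies below $0$; hence $P$ is an ordered pre-constellation. The map $P^0\to Q$ fixing $P$ and sending the adjoined $0$ to the zero of $Q$ is then an isomorphism, because the stipulated behaviour of $0$ in $P^0$ (left zero, $s\cdot 0$ defined only for $s=0$, strict bottom of the order) coincides exactly with the behaviour forced on the zero of $Q$ by the axioms $0\cdot s=0$, $\exists(s\cdot 0)\Rightarrow s=0$ and $0\le s$.

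The third assertion is the crux, and it is where care is needed rather than hard computation. Assuming $P\in R(\cdot,\subseteq)$, Theorem~\ref{orderembed} lets me represent $P$ by the explicit map $\theta\colon p\mapsto\rho_p$ of that proof, whose decisive feature here is that each $\rho_p$ contains $(e,p)$ and is therefore non-empty, so $\emptyset\notin\theta(P)$. I would extend $\theta$ to $P^0$ by sending the adjoined $0$ to $\emptyset$ and check this is an embedding into $(\Rel(X),\cdot,\subseteq,\emptyset)$. Non-emptiness does all the work: it keeps the extension injective with $\emptyset$ a genuinely new value; since $\ran(\rho_p)\neq\emptyset$ we have $\ran(\rho_p)\not\subseteq\dom(\emptyset)=\emptyset$, so $\rho_p\cdot\emptyset$ is undefined, matching that $s\cdot 0$ is undefined in $P^0$ for $s\neq 0$; and $\rho_p\not\subseteq\emptyset$ matches $s\not\le 0$. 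The remaining clauses are immediate, since $\emptyset\cdot\rho=\emptyset$ and $\emptyset\subseteq\rho$ always, corresponding to $0\cdot s=0$ and $0\le s$. The main obstacle is therefore the choice of representation: an arbitrary representation of $P$ could send some element to $\emptyset$ and block the construction, whereas the \Zar-style representation of Theorem~\ref{orderembed} produces only non-empty relations and so reserves $\emptyset$ for the new zero.
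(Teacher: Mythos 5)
Your proof is correct and follows essentially the paper's own route: the first two assertions are exactly the ``simple case analyses'' the paper leaves to the reader, and your argument for the third assertion --- extend the representation $\theta$ from Theorem~\ref{orderembed}, all of whose images contain $(e,p)$ and are therefore non-empty, by sending the adjoined zero to $\emptyset$ --- is precisely the construction the paper itself uses to prove the corollary that follows the proposition. Your two interpretive corrections are also the intended readings: $P$ must be an \emph{ordered} pre-constellation for the construction of $P^0$ (and the ``Moreover'' clause) to make sense, and the adjoined $0$ must be a left zero ($0\cdot s=0$ always, with $s\cdot 0$ undefined for $s\in P$), the paper's clause ``$s\cdot 0=0$ for all $s\in P$'' being a misprint, since otherwise $P^0$ would violate the axiom $\exists(s\cdot 0)\Rightarrow s=0$ and would not match the behaviour of $\emptyset$ in $(\Rel(X),\cdot,\subseteq,\emptyset)$.
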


 Showing the converse gives the following.

\begin{cor}
$R(\subseteq, \cdot,\emptyset)$ is the class of ordered pre-constellations with zero.
\end{cor}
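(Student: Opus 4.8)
The plan is to prove the two inclusions separately, relying almost entirely on Theorem~\ref{orderembed} and the preceding proposition, so that the corollary becomes a short assembly argument. For the easy (soundness) direction I would start from the already-noted fact that $(\Rel(X),\cdot,\emptyset,\subseteq)$ is an ordered pre-constellation with zero for every non-empty $X$. Every defining condition of the class --- associativity \eqref{en:E0}, the existence law \eqref{en:E}, the monotonicity and domain-compatibility laws \eqref{en:prod} and \eqref{en:L}, the poset axioms, and the four zero laws $0\cdot s=0$, $\exists(s\cdot t)=0\Rightarrow s=0$, $\exists(s\cdot 0)\Rightarrow s=0$ and $0\leq s$ --- is a universally quantified (conditional) statement about elements and the products they form. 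Each such condition is therefore inherited by any family of relations closed under constellation product and containing $\emptyset$, equipped with $\subseteq$. Hence every member of $R(\subseteq,\cdot,\emptyset)$, being an isomorphic copy of such a family, is an ordered pre-constellation with zero.

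For the converse I would assemble the cited results in four steps. First, let $Q$ be an arbitrary ordered pre-constellation with zero. Second, by the preceding proposition $Q$ arises as $P^{0}$ for some ordered pre-constellation $P$ with $0\notin P$ (concretely, $P$ is $Q$ with its zero removed, which retains the order and partial product). Third, by Theorem~\ref{orderembed}, $P$ is representable, i.e.\ $P\in R(\subseteq,\cdot)$. Fourth, the final clause of the preceding proposition then yields $P^{0}\in R(\subseteq,\cdot,\emptyset)$, that is $Q\in R(\subseteq,\cdot,\emptyset)$, as required. Combined with the soundness direction this gives the stated equality.

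Since the proposition carries the real weight, I do not expect a genuine obstacle in the corollary itself; the only points that deserve a line of care are those underlying the proposition's last clause, and I would flag them rather than re-derive them. First, the representation $\theta$ of $P$ constructed in Theorem~\ref{orderembed} sends each $p\in P$ to a relation $\rho_p$ containing $(e,p)$, hence non-empty, so extending $\theta$ by $0\mapsto\emptyset$ keeps it injective while realising the adjoined zero as $\emptyset$. Second, $\emptyset$ interacts correctly with the $\rho_p$: since $\ran(\emptyset)=\emptyset$, the product $\emptyset\cdot\rho_s$ is defined and equals $\emptyset$, matching $0\cdot s=0$; the product $\rho_s\cdot\emptyset$ is defined exactly when $\ran(\rho_s)=\emptyset$, i.e.\ when $s=0$, matching $\exists(s\cdot 0)\Rightarrow s=0$; and $\emptyset\subseteq\rho_s$ realises $0\leq s$. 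These are the routine case analyses already invoked, so the substantive content of the argument remains confined to Theorem~\ref{orderembed} and the preceding proposition.
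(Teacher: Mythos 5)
Your proof is correct and takes essentially the same route as the paper's: strip the zero, represent the resulting ordered pre-constellation via Theorem~\ref{orderembed}, and re-adjoin the zero as $\emptyset$. The only difference is cosmetic --- you channel the extension step through the last clause of the preceding proposition and spell out the soundness direction explicitly, both of which the paper treats as already noted.
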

\begin{proof}
If $P$ is an ordered pre-constellation with zero, $Q=P\backslash \{0\}$ is an ordered pre-constellation.  Represent it within $\Rel(Q\cup\{e\})$ as in the  proof of theorem~\ref{orderembed}. Extend the representation to $P$ by mapping $0$ to $\emptyset$.  Hence $P\in R(\cdot,\subseteq,\emptyset)$.
 \end{proof}

\section{Open problems} 

We showed in Section~\ref{neg} that the ``mixed" case $R(*,\subseteq)$ has no finite axiomatisation, and examining the proofs, we might wonder whether an axiomatisation consisting of the axioms of partial order, associativity, right monotonicity plus all instances of quasi-equation \eqref{eq-valn} is complete for $R(\subseteq, *)$.  It is also open as to whether the other mixed signature $R(\comp,\sqsubseteq)$ has a finite axiomatisation.  

We showed in Section~\ref{pos} that $R(\cdot,\subseteq)$ is finitely axiomatised as the class of ordered pre-constellations, but the status of the arguably more natural class $R(\cdot,\sqsubseteq)$ is open.  We note that the apparently independent infinite family of laws hold: for all $n>0$,
$$(s\sqsubseteq t)\wedge 
(\exists ((\ldots(x\cdot t)\cdot u_1)\cdot u_2)\ldots)\cdot u_n) 
\Rightarrow 
\exists ((\ldots(x\cdot s)\cdot u_1)\cdot u_2)\ldots)\cdot u_n.$$ 

We do not know whether $L_0(\comp,\cup,{\bf 0})$ is finitely axiomatisable.  Recall that $T(\comp,\subseteq,\nabla)=R(*,\sqsubseteq,\emptyset)$, the class of dual ordered semigroups with zero; replacing order by join, it follows from Proposition \ref{embeddings} that $R(*,\ccup,\emptyset)\subseteq T(\comp,\cup,\nabla)$, but we do not know if equality holds nor whether either is finitely axiomatisable.  It would also be interesting to know whether $R(*,\ccup,\emptyset)$ and $T(\comp,\cup,\nabla)$ are equal and whether either is finitely axiomatisable (and similarly for the varieties they generate).  
Recall that $VR(\comp,\cup,\emptyset)\subsetneq VL_0(\comp,\cup,{\bf 0})$, and that the former class is finitely axiomatisable, but we do not know whether $VL_0(\comp,\cup,{\bf 0})$ is.  Although $L(\comp,\cup,1')\subsetneq R(\comp,\cup,1')$, we do not know whether 
$VL(\comp,\cup,1')\subsetneq VR(\comp,\cup,1')$

 
\section*{Acknowledgements}

The authors would like to thank Associate Professor Marcel Jackson for the fruitful discussions he had with the third author that helped give rise to the material in Section 2.

\end{document}